\theoremstyle{plain}
\newtheorem*{theorem*}{Theorem}
\newtheorem{theorem}{Theorem}
\numberwithin{theorem}{section}
\newtheorem{proposition}[theorem]{Proposition}
\newtheorem{lemma}[theorem]{Lemma}
\newtheorem{problem}[theorem]{Problem}
\theoremstyle{definition}
\newtheorem{definition}[theorem]{Definition}
\newtheorem{remark}[theorem]{Remark}
\theoremstyle{definition}
\newcommand{\R}{\mathbb{R}}
\newcommand{\tr}{\text{trace}}
\newcommand{\rank}{\text{rank\,}}
\newcommand{\diff}{\mathrm{d}}
\newcommand{\edge}[2]{#1#2}
\newcommand{\bpm}{\mathbin{\tikz[baseline=-5pt]{\node[font=\scriptsize] at (0,0) {\faPlus};\node[font=\scriptsize] at (0,-0.2) {\faMinus};}}}
\colorlet{colbg}{white}
\colorlet{colfg}{black}
\colorlet{colgraphv}{colfg!75!white}
\colorlet{colgraphe}{colfg!55!white}
\colorlet{colG}{DarkSeaGreen}
\definecolor{colR}{HTML}{CC6677}
\definecolor{colOo}{HTML}{DDCC77}
\colorlet{colO}{colOo!95!black}
\definecolor{colB}{HTML}{6699CC}
\colorlet{colY}{Gold!90!black}
\colorlet{col1}{colG}
\colorlet{col2}{colR}
\colorlet{col3}{colB}
\colorlet{col4}{colO}
\colorlet{col6}{BurlyWood!90!black}
\colorlet{col5}{MediumPurple!90!black!80!white}
\tikzstyle{vertex}=[fill=colgraphv,circle,inner sep=0pt, minimum size=4pt]
\tikzstyle{edge}=[line width=1.5pt,colgraphe]
\tikzstyle{backbone}=[edge,colR]
\tikzstyle{penny}=[colY,fill=colY!25!white] 
\tikzstyle{pennyg}=[colY!50!colR,fill=colY!50!colR!25!white] 
\tikzstyle{disk}=[dashed,colB,fill=colB!20!white,opacity=0.25] 
\tikzstyle{nodisk}=[dashed,colR,fill=colR!20!white,opacity=0.25]
\tikzstyle{labelsty}=[font=\scriptsize]
\tikzstyle{vertex2}=[fill=blue,circle,inner sep=0pt, minimum size=4pt]
\tikzstyle{env}=[rounded rectangle,minimum height=0.65cm,inner sep=0.3333em,align=left,minimum width=1cm,rounded corners=0pt,font=\small]
\tikzstyle{env1}=[env,rounded rectangle right arc=none,fill=colfg!50!colbg]
\tikzstyle{env2}=[env,rounded rectangle left arc=none]
\newcommand{\env}[3]{\tikz{\node[env1] at (0,0) {\color{colbg}#1};}\tikz{\node[env2,fill=#3] at (0,0) {#2};}}
\newcommand{\envh}[3]{\tikz{\node[env1,fill=#3,draw=#3] at (0,0) {\color{colbg}#1};}\tikz{\node[env2,draw=#3] at (0,0) {#2};}}
\newcommand{\easy}[2]{\env{#1}{{\scriptsize\faCheckCircle} #2}{colG}}
\newcommand{\difficult}[2]{\env{#1}{{\scriptsize\faTimesCircle} #2}{colR}}
\tikzstyle{boxplot}=[draw=colB,fill=colB!50!colbg,rounded corners=0.5pt]
\tikzstyle{medianplot}=[draw=colB,fill=colB!50!colbg,line width=1.5pt]
\tikzstyle{medianline}=[draw=colB,fill=colB!50!colbg,line width=1pt,dashed]
\tikzstyle{mediannode}=[circle,draw=colB,fill=colB!50!colbg,inner sep=0pt, minimum size =3pt]
\tikzstyle{mediannode1}=[mediannode,circle]
\tikzstyle{mediannode2}=[mediannode,rectangle]
\tikzstyle{mediannode3}=[mediannode,regular polygon, regular polygon sides=3,minimum size=4pt]
\tikzstyle{mediannode4}=[mediannode,regular polygon, regular polygon sides=5,minimum size=4pt]
\tikzstyle{mediannode5}=[mediannode,regular polygon, regular polygon sides=6,minimum size=3.75pt]
\tikzstyle{mediannode6}=[mediannode,diamond,minimum size=4pt]
\tikzstyle{timeline}=[draw=colB,fill=colB!50!colbg,line width=1pt]
\colorlet{cola}{colfg!75!colbg}
\tikzstyle{axes}=[draw=cola,-{Latex[round,width=3pt]}]
\tikzstyle{aline}=[draw=cola]
\tikzstyle{bline}=[draw=cola!10!colbg]
\tikzstyle{alabelsty}=[cola,font=\tiny]
\tikzstyle{hlabelsty}=[cola,font=\small]
\title{Single-cell 3D genome reconstruction in the haploid setting using rigidity theory}
\author{Sean Dewar \and Georg Grasegger \and Kaie Kubjas \and Fatemeh Mohammadi \and Anthony Nixon}
\date{}
\begin{document}

\maketitle

\begin{abstract}
    This article considers the problem of 3-dimensional genome reconstruction for single-cell data, and the uniqueness of such reconstructions in the setting of haploid organisms. We consider multiple graph models as representations of this problem, and use techniques from graph rigidity theory to determine identifiability.
    Biologically, our models come from Hi-C data, microscopy data, and combinations thereof. Mathematically, we use unit ball and sphere packing models, as well as models consisting of distance and inequality constraints. In each setting, we describe and/or derive new results on realisability and uniqueness.
    We then propose a 3D reconstruction method based on semidefinite programming and apply it to synthetic and real data sets using our models.
\end{abstract}

\section{Introduction}

The 3-dimensional (3D) structure of the genome plays an important role in gene regulation~\citep{dekker2008gene,uhler2017regulation} and genome misfolding is linked to disease~\citep{norton2017crossed}. Two different approaches for inferring the 3D genome structure are based on chromosome conformation capture techniques and microscopy-based techniques. The primary microscopy-based technique has been the fluorescent in situ hybridisation (FISH)~\citep{amann1990fluorescent}. However, this is limited in resolution and restricted to a region of the genome. Recently, in situ genome sequencing (IGS) \citep{payne2021situ}, which combines sequencing with imaging techniques, has enabled genome-wide high-resolution 3D genome reconstructions. Currently the availability of IGS data is limited compared to data from chromosome conformation capture experiments (such as 3C, 4C, 5C, Hi-C, ChiA-PET, etc), which record interactions between different fragments of the genome. One of the most popular chromosome conformation capture techniques is Hi-C~\citep{lieberman2009comprehensive} that records interactions between fragments of a genome on a genome-wide scale. The output of a Hi-C experiment is a Hi-C or a contact matrix, where rows and columns correspond to fragments of the genome and the entries of the matrix record the number of interactions between the fragments. The resolution of Hi-C data is the number of base pairs in a genome fragment. 

A recent review of~\citet{oluwadare2019overview} summarises over 30 different approaches for constructing the 3D genome from contact matrices. The approaches can be divided on the one hand into distance-based, count-based, and probability-based depending on how the interaction frequencies are modelled, and on the other hand into consensus, ensemble and population methods based on the structure of the output~\citep[Figure 2]{oluwadare2019overview}. Distance-based methods first turn contact counts into distances and then infer the positions of loci from the pairwise distances~\citep{zhang2013inference,belyaeva2021identifying}. Different distance-based methods vary in how the interaction frequencies are converted into distances and how 3D positions of loci are inferred from the distances. Contact-based methods directly infer the 3D structure without first turning contacts into distances~\citep{paulsen2017chrom3d,abbas2019integrating}. Probability-based methods model interaction frequencies as random variables, and apply maximum likelihood or Bayesian inference based methods for 3D genome reconstruction~\citep{hu2013bayesian,varoquaux2014statistical}. 

So far, the main focus of 3D genome reconstruction has been on population data, where the contact matrix counts the contacts in a collection of cells. Since the 3D structure can differ in different cells, one may use population data to either infer a single structure that represents the ``average'' of structures or an ensemble of structures that are consistent with the data. The introduction of single-cell Hi-C by~\citet{nagano2013single} has enabled the study of 3D genome structure at the single-cell level. Further single cell Hi-C methods were introduced for haploid organisms by \citet{ramani2017massively,stevens20173d} and for diploid organisms by \citet{tan2018three}. Minimization of polymer models for 3D genome reconstruction is used by \citet{nagano2013single,stevens20173d,wettermann2020minimal,shi2021hi,kos2021perspectives}, Bayesian inference by \citet{rosenthal2019bayesian}, and manifold optimization by \citet{paulsen2015manifold}. ChromSDE~\citep{zhang2013inference} is a semidefinite optimization based method that was developed for population data, but is also applicable to single cell data. ShRec3D~\citep{lesne20143d} combines shortest path computations with multidimensional scaling for 3D genome reconstruction and this works both for single cell and population Hi-C data.

\textbf{Our contribution.} In this paper, we study 3D genome reconstruction from single-cell Hi-C data for haploid organisms. Our main contributions to the abundant literature on this topic are establishing a connection between the 3D genome reconstruction and the mathematics of rigidity theory and the study of uniqueness of 3D constructions.

Rigidity theory is a field of mathematics that investigates whether a point configuration is uniquely determined, up to rigid transformations, from some pairwise distances between the points. One can study the rigidity of a point configuration under various different assumptions. We  propose mathematical models associated to penny graphs, unit ball graphs, or some modifications of them (these graphs are defined formally in the sections that follow) that correspond to different biological methods of measuring single-cell data. 
In each of these mathematical models, we survey known results and derive new results on realisability and uniqueness. Finally, we apply semidefinite programming to obtain 3D genome reconstruction algorithms for three of the proposed models to compare them and analyse whether the reconstruction results are consistent with the uniqueness results.

It is, perhaps, more desirable to investigate 3D genome reconstruction for single-cell data in the diploid setting. However, the reconstruction of the 3D genome from contact matrices in the diploid setting for population data poses numerous challenges~\citep{segal2022can}. To the best of our knowledge, the uniqueness of reconstructions for single-cell data has not been studied even in the haploid setting. In this sense, our work serves as a bridge towards establishing a similar theory in the diploid setting.

\textbf{Structure of the paper.} 
In \Cref{sec:preliminaries}, we establish our notation and provide the necessary background from graph rigidity theory, along with a concise overview of the problem of 3D genome reconstruction for single-cell data in the haploid setting. The primary objectives of this section are to establish our notation and offer a brief motivation for investigating the rigidity of the proposed families of graphs from a biological perspective.

\Cref{sec:unit_ball,sec:inequalities_and_equalities,sec:Penny,sec:generic_penny} present our models. Each subsequent section introduces additional constraints to the previous models and compares the new model with those presented in the preceding sections. 
In \Cref{sec:unit_ball} we use unit ball graphs to represent the threshold model, discuss realisability in this model and show that uniqueness cannot be achieved without additional constraints being imposed. \Cref{sec:inequalities_and_equalities} then adds microscopy constraints. The main contributions of this section include a proof that the realisability question can be solved directly by semidefinite programming if we allow the realisation to live in an arbitrary dimension
and a proof that inequalities alone do not determine whether there is a finite number of reconstructions.
We also consider the case of equality constraints only, that is, where the data comes only from microscopy experiments. This situation turns out to be equivalent to the well-studied problem of global rigidity for bar-joint frameworks.
In order to be able to mathematically deduce unique reconstructions, we then consider more specialised models in \Cref{sec:Penny,sec:generic_penny}. In particular, in \Cref{sec:generic_penny}, we prove theoretical results, especially about uniqueness and the structure of the possible graphs that could arise from the biological data corresponding to this model.

In \Cref{sec:Algorithm},
we adapt 3D reconstruction algorithms based on semidefinite programming previously used by \citet{zhang2013inference,belyaeva2021identifying} to three of the models considered in this paper.  We apply this algorithm to synthetic and real data sets, and compare our results with ShRec3D~\citep{lesne20143d}.

\section{Preliminaries}\label{sec:preliminaries}

We first review the basic theory of rigidity for structures comprising of universal joints linked by stiff bars. Then we define a number of biological models and point to the subsequent sections, where we analyse these models using rigidity theory.

\subsection{An introduction to graph rigidity}
\label{sec:rigidity}

One of the primary contributions of this paper lies in its application of concepts and tools from rigidity theory to gain insights into the 3D genome reconstruction of single cells. In this section, we provide a concise and accessible description of the topic of rigidity theory.

A \emph{(bar-joint) framework} $(G,\rho)$ is the combination of a finite, simple graph $G=(V,E)$ and a realisation $\rho:V\rightarrow \mathbb{R}^d$. The realisation assigns positions to the vertices (represented by universal joints with full rotational freedom) and hence lengths to the edges (unbendable straight line segments). 

We are interested in understanding the set of vectors $\rho'\in \mathbb{R}^{d|V|}$ such that the two frameworks $(G,\rho)$ and $(G,\rho')$ have the same edge lengths.
Define the rigidity map $f_G:\mathbb{R}^{d|V|}\rightarrow \mathbb{R}^{|E|}$ by putting $f_G(\rho)=(\dots, \|\rho(v_i)-\rho(v_j)\|^2,\dots)_{v_iv_j\in E}$. Then two frameworks $(G,\rho)$ and $(G,\rho')$ are called \emph{equivalent} if $f_G(\rho)=f_G(\rho')$. 
Instantly, one sees that this set of equivalent frameworks is always infinite since $\rho'$ may be obtainable from $\rho$ by a composition of Euclidean isometries (translations, rotations and reflections).
Any such $\rho'$ is now said to be \emph{congruent} to $\rho$. This is equivalent to specifying that all distances between pairs $\rho(v_i)$ and $\rho(v_j)$ are the same as the distances between the corresponding pairs $\rho'(v_i)$ and $\rho'(v_j)$. 

After discarding isometries, we ask whether there are still infinitely many $\rho'$ with the aforementioned property, or equivalently, whether $(G,\rho)$ is \emph{flexible}. If not, then there are finitely many frameworks $(G,\rho')$ equivalent to $(G,\rho)$ (modulo isometries), and $(G,\rho)$ is said to be \emph{rigid}.
Furthermore, $(G,\rho)$ is \emph{minimally rigid} if $(G-e,\rho)$ is flexible for all $e\in E$.
See \Cref{fig:rig} for small examples.

If the framework $(G,\rho)$ is flexible, then one of several equivalent definitions of flexibility \citep{AsimowRoth} is in terms of motions. A \emph{(continuous) motion} of $(G,\rho)$ is a family of continuous functions $x:(0,1)\times V \rightarrow \mathbb{R}^d$ such that $x(0)=\rho$ and $(G,x(t))$ is equivalent to $(G,\rho)$ for all $t\in (0,1)$. Then $(G,\rho)$ is flexible if and only if there exists a motion that does not arise from isometries of $\mathbb{R}^d$, i.e.\ if at least one pair of vertices of $G$ has a different distance between them in $(G,\rho)$ and in $(G,x(t))$ for some $t\neq 0$. As is standard in the literature, we consider an idealised model where we permit edges to pass through each other in such a continuous motion.

\begin{figure}[ht]
    \centering
    \begin{subfigure}[b]{0.3\textwidth}
      \centering
        \begin{tikzpicture}[scale=1.3]
            \node[vertex] (a) at (0,0) {};
            \node[vertex] (b) at (1,0) {};
            \node[vertex] (c) at (0.5,0.866025) {};
            \node[vertex] (d) at (1.5,0.866025) {};
            \draw[edge] (a) -- (b) -- (d) -- (c) -- (a) -- cycle;
        \end{tikzpicture}
        \caption{flexible}
    \end{subfigure}
    \begin{subfigure}[b]{0.3\textwidth}
      \centering
        \begin{tikzpicture}[scale=1.3]
            \node[vertex] (a) at (0,0) {};
            \node[vertex] (b) at (1,0) {};
            \node[vertex] (c) at (0.5,0.866025) {};
            \node[vertex] (d) at (1.5,0.866025) {};
            \draw[edge] (a) -- (b) -- (d) -- (c) -- (a) -- cycle;
            \draw[edge] (b) -- (c);
        \end{tikzpicture}
        \caption{minimally rigid}\label{fig:rig:min}
    \end{subfigure}
    \begin{subfigure}[b]{0.3\textwidth}
      \centering
        \begin{tikzpicture}[scale=1.3]
            \node[vertex] (a) at (0,0) {};
            \node[vertex] (b) at (1,0) {};
            \node[vertex] (c) at (0.5,0.866025) {};
            \node[vertex] (d) at (1.5,0.866025) {};
            \draw[edge] (a) -- (b) -- (d) -- (c) -- (a) -- cycle;
            \draw[edge] (b) -- (c);
            \draw[edge] (a) -- (d);
        \end{tikzpicture}
        \caption{rigid}
    \end{subfigure}
  \caption{Graphs with different rigidity properties.}
  \label{fig:rig}
\end{figure}
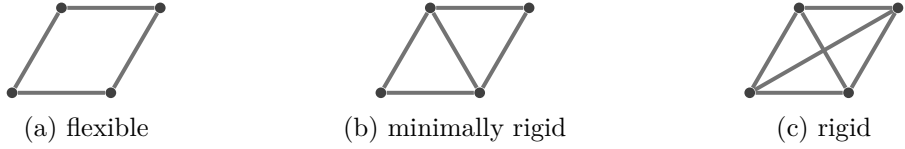

Among rigid graphs, one may ask how many different ways the structure can be realised. This is the global rigidity question, and we say that $(G,\rho)$ is \emph{globally rigid} if no other framework $(G,\rho')$ (up to Euclidean isometries) with the same edge lengths exists. In other words, $(G,\rho)$ is globally rigid if every framework in the same dimension as $(G,\rho)$ that is equivalent to it is also congruent to it. 

In each of the above definitions, it is clear that the dimension is of crucial importance. For example, every cycle $C_n$ is globally rigid (and hence rigid) on the line, but for all $n\geq 4$, the cycle $C_n$ is flexible in $\mathbb{R}^d$ for all $d\geq 2$. This motivates us to call a framework  
\emph{$d$-rigid} (resp.\ \emph{globally $d$-rigid}) if it is rigid (resp.\ globally rigid) in $\mathbb{R}^d$.

Later, we need the following elementary lemma, which has been presented for example by \citet[Lemma 2.6]{JJddim}, that relates vertex connectivity to $d$-rigidity.

\begin{lemma}\label{lem:connected}
    Let $(G,\rho)$ be $d$-rigid on at least $d+1$ vertices. Then $G$ is $d$-connected.    
\end{lemma}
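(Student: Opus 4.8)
The plan is to argue by contraposition: if $G$ is not $d$-connected, then $(G,\rho)$ is flexible whenever $|V| \geq d+1$. So suppose $G$ has a vertex cut $S \subseteq V$ with $|S| \leq d-1$, and write $V \setminus S = A \sqcup B$ with $A, B$ nonempty and no edges between $A$ and $B$. The key geometric observation is that a set of $|S| \leq d-1$ points in $\mathbb{R}^d$ always lies on a common line or lower-dimensional flat — more precisely, there is a nontrivial one-parameter family of isometries of $\mathbb{R}^d$ (a rotation about the affine span of $\rho(S)$, which has dimension at most $d-2$) fixing every point of $\rho(S)$. First I would make this explicit: since $\dim(\text{aff}\,\rho(S)) \leq |S| - 1 \leq d-2$, there is a codimension-$\geq 2$ affine subspace containing $\rho(S)$, and the group of rotations of $\mathbb{R}^d$ fixing that subspace pointwise is nontrivial (isomorphic to $SO(k)$ for some $k \geq 2$).

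Next I would use this rotation to build a nontrivial continuous motion. Pick a nontrivial one-parameter subgroup $t \mapsto \Phi_t$ of rotations fixing $\rho(S)$ pointwise, with $\Phi_0 = \mathrm{id}$. Define $x : (0,1) \times V \to \mathbb{R}^d$ by letting $x(t)$ agree with $\rho$ on $S \cup A$ and equal $\Phi_t \circ \rho$ on $B$ (after suitably reparametrising so the motion starts at $\rho$). I would check that $(G, x(t))$ is equivalent to $(G,\rho)$ for all $t$: edges inside $A \cup S$ are unmoved; edges inside $B \cup S$ have their lengths preserved because $\Phi_t$ is an isometry; and there are no edges between $A$ and $B$ by the choice of the cut. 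So all edge lengths are preserved. Finally, because $A$ and $B$ are nonempty and $\Phi_t$ is a nontrivial rotation, for generic $t$ the distance between some $a \in A$ and some $b \in B$ changes — here I should be slightly careful, as it is conceivable that every such pairwise distance is accidentally preserved; this is avoided by noting that if $\rho(A) \cup \rho(S)$ spanned only the fixed subspace of $\Phi_t$ then we could instead rotate $A$ as well and still get a motion, and in any case one can choose the axis of rotation to guarantee some $\rho(b)$ genuinely moves relative to $\rho(A)$. Thus the motion is non-congruent, so $(G,\rho)$ is flexible.

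The main obstacle is the last step — ruling out the degenerate possibility that the constructed motion happens to be trivial (i.e.\ congruent to $\rho$) even though it is not induced by a global isometry. The clean way around this is to observe that a motion is trivial only if it extends to an isometry of the whole configuration; since $A$ is held fixed while $B$ is rotated about a subspace of dimension $\leq d-2$, the only way this is a global isometry is if $\rho(B)$ lies in the fixed subspace of $\Phi_t$, in which case one repeats the construction with the roles adjusted, or observes directly that then $(G,\rho)$ has all its points in a proper affine subspace and the claim can be reduced to a lower dimension. Since the excerpt cites \cite[Lemma 2.6]{JJddim} for this statement, I would expect the paper to simply invoke that reference rather than reproduce the argument, and I would do likewise if a self-contained proof is not needed; otherwise the sketch above gives the full idea.
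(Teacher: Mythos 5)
The paper offers no proof of this lemma at all: it is stated with a citation to \cite[Lemma 2.6]{JJddim}, exactly as you anticipated in your final sentence. Your sketch is the standard argument behind that reference --- separate along a cut set $S$ with $|S|\leq d-1$, note that $\rho(S)$ lies in a $(d-2)$-dimensional affine flat $L$, and rotate one side of the separation about $L$ while fixing the other --- so in spirit you are doing the right thing.

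The one genuine gap is the degenerate case you flag at the end, and it is worth being precise about why it cannot simply be waved away: the lemma as literally stated, for arbitrary frameworks, is \emph{false}. Take $G$ a path on $d+1$ vertices and $\rho$ the constant map; every equivalent framework has all points coincident, so $(G,\rho)$ is (globally) rigid, yet $G$ is only $1$-connected. Some nondegeneracy hypothesis is therefore unavoidable; the cited result concerns generic frameworks (equivalently, rigid graphs), and that is how the lemma is used in the paper. Under genericity your construction closes cleanly and the case analysis you worry about disappears: if $\Phi_t$ is the rotation fixing the $(d-2)$-flat $L\supseteq\rho(S)$ pointwise, then the set of points equidistant from every point of the orbit circle of $\rho(b)$ is exactly $L$ itself, so $\|\rho(a)-\Phi_t(\rho(b))\|$ is constant in $t$ only if $\rho(a)\in L$ or $\rho(b)\in L$; hence the motion is a congruence for all $t$ only if $\rho(A)\subseteq L$ or $\rho(B)\subseteq L$, which is impossible for generic $\rho$. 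By contrast, your proposed fallback for the degenerate case --- ``reduce to a lower dimension when the points lie in a proper affine subspace'' --- does not go through as stated, since a framework contained in a proper affine subspace of $\mathbb{R}^d$ can perfectly well be rigid in $\mathbb{R}^d$ (a triangle in $\mathbb{R}^3$, for instance), so that reduction needs an argument you have not supplied. The clean fix is simply to assume $(G,\rho)$ generic, or to do as the paper does and cite the reference.
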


A common technique in rigidity theory is to linearise by considering the Jacobian derivative $\diff f_G|_\rho$ of the rigidity map. For consistency with the literature, we put $R(G,\rho)=\frac{1}{2}\diff f_G|_{\rho}$ and refer to $R(G,\rho)$ as the \emph{rigidity matrix} of $(G,\rho)$. Then, the framework $(G,\rho)$ in $\mathbb{R}^d$ is \emph{$d$-independent} (in the literature, this is sometimes referred to as \emph{$d$-stress-free}) if the rows of $R(G,\rho)$ are linearly independent. 

\begin{lemma}[\citet{Maxwell}]\label{lem:maxwell}
    Let $(G,\rho)$ be a $d$-independent graph on at least $d$ vertices. Then $|E|\leq d|V|-\binom{d+1}{2}$ and if $|E|=d|V|-\binom{d+1}{2}$ then $G$ is $d$-rigid. 
\end{lemma}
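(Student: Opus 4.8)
The plan is to linearise: replace the nonlinear rigidity map by the rigidity matrix $R(G,\rho)$ and read off both conclusions from its rank. By definition, $d$-independence of $(G,\rho)$ says precisely that the $|E|$ rows of $R(G,\rho)$ are linearly independent, i.e. $\rank R(G,\rho)=|E|$. By rank--nullity, $|E|=d|V|-\dim\ker R(G,\rho)$, so the whole lemma reduces to bounding $\dim\ker R(G,\rho)$ below by $\binom{d+1}{2}$, with the equality case being the situation where this bound is tight.

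First I would prove the inequality by exhibiting $\binom{d+1}{2}$ independent vectors in $\ker R(G,\rho)$. The kernel always contains the \emph{trivial infinitesimal motions}: for every skew-symmetric $S\in\R^{d\times d}$ and every $t\in\R^d$ the velocity field $v\mapsto S\rho(v)+t$ lies in $\ker R(G,\rho)$, being the derivative at $\theta=0$ of the family of rigid motions $x\mapsto e^{\theta S}x+\theta t$, which preserves all edge lengths. The space of such fields is the image of the $\binom{d+1}{2}$-dimensional algebra $\mathfrak{se}(d)$ under restriction to $\rho$; its dimension is exactly $\binom{d+1}{2}$ when the points $\rho(V)$ affinely span $\R^d$, and strictly smaller otherwise. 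The hypothesis $|V|\geq d$ is there to cover this deficit: if the affine span of $\rho(V)$ has dimension $k<d$, then after an isometry all points lie in $\R^{k}\times\{0\}\subseteq\R^d$, the matrix $R(G,\rho)$ only involves the first $k$ coordinates of each velocity, and hence $\ker R(G,\rho)$ additionally contains the full $(d-k)|V|$-dimensional space of velocity fields supported on the last $d-k$ coordinates. A one-line count — each of the $\binom{d+1}{2}-\binom{k+1}{2}=\sum_{j=k+1}^{d}j$ is at most $(d-k)d\le(d-k)|V|$ — then shows $\dim\ker R(G,\rho)\geq\binom{d+1}{2}$ in all cases, so $|E|=\rank R(G,\rho)\leq d|V|-\binom{d+1}{2}$.

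For the equality case, suppose $|E|=d|V|-\binom{d+1}{2}$, so $\dim\ker R(G,\rho)=\binom{d+1}{2}$. Running the count above backwards forces $\rho(V)$ to affinely span $\R^d$ when $|V|\geq d+1$, and to span a hyperplane when $|V|=d$; in either case the space of trivial infinitesimal motions already has dimension $\binom{d+1}{2}$. Since that space is contained in $\ker R(G,\rho)$ and has the same dimension, it \emph{equals} $\ker R(G,\rho)$: every infinitesimal flex of $(G,\rho)$ is trivial, i.e. $(G,\rho)$ is infinitesimally $d$-rigid. It then remains to invoke the classical implication that infinitesimal rigidity implies rigidity (one of the equivalences referenced via \cite{AsimowRoth}: near a regular point the rigidity map is a submersion onto a smooth manifold, so no nontrivial edge-length-preserving motion can emanate from $\rho$), which yields that $(G,\rho)$ is $d$-rigid.

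The main obstacle is the only genuinely delicate ingredient: the bookkeeping of trivial infinitesimal motions for configurations that do not affinely span $\R^d$. This is exactly where $|V|\geq d$ is used, and one must check that in the equality case the configuration is automatically nondegenerate enough for the dimension of trivial motions to reach $\binom{d+1}{2}$ — so that ``$\ker R(G,\rho)$ has the expected dimension'' really does upgrade to ``$(G,\rho)$ is infinitesimally rigid'' rather than merely constraining the rank. Everything else is linear algebra together with the cited equivalence between infinitesimal and (local) rigidity.
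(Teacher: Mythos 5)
The paper does not prove this lemma --- it is quoted from Maxwell's classical count and used as a black box --- so there is no in-paper argument to compare against; judged on its own, your proof is correct and is the standard one: rank--nullity on the rigidity matrix, a lower bound of $\binom{d+1}{2}$ on $\dim\ker R(G,\rho)$ coming from trivial infinitesimal motions (with the degenerate-span case handled by the extra velocity fields in the unused coordinates, which is exactly where $|V|\geq d$ enters), and Asimow--Roth to pass from infinitesimal rigidity to rigidity in the equality case. Two small imprecisions, neither of which damages the argument: (i) the restricted space of trivial motions is \emph{not} strictly smaller than $\binom{d+1}{2}$ whenever the points fail to affinely span $\mathbb{R}^d$ --- its dimension is $\binom{d+1}{2}-\binom{d-k}{2}$, which still equals $\binom{d+1}{2}$ when the affine span has dimension $k=d-1$; your count survives because you actually only use the $\binom{k+1}{2}$ motions supported inside the span, which are transverse to the $(d-k)|V|$ extra kernel vectors. (ii) The lemma's conclusion is that the \emph{graph} $G$ is $d$-rigid in the generic sense of \Cref{lem:AR}; you establish infinitesimal rigidity of the given $(G,\rho)$, and should add the one-line observation that $\rank R(G,\rho')\geq\rank R(G,\rho)$ for generic $\rho'$, so the generic framework is also infinitesimally rigid and hence rigid.
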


We say that a framework $(G,\rho)$ in $\mathbb{R}^d$ is \emph{infinitesimally flexible} if there exists a function $\dot{\rho}:V\rightarrow \mathbb{R}^d$ such that
\begin{equation*}
    \langle \rho(v_i)-\rho(v_j),\dot{\rho}(v_i)-\dot{\rho}(v_j)\rangle =0 \text{ for all } v_i v_j\in E.
\end{equation*}
An infinitesimal motion $\dot{p}$ is \emph{trivial} if there exists a skew-symmetric matrix $S$ and a vector $t$ such that $\dot{p}(v_i)=Sp(v_i)+t$ for all $v_i\in V$. The framework $(G,\rho)$ is \emph{infinitesimally rigid} if every infinitesimal motion of $(G,\rho)$ is trivial. Equivalently, $(G,\rho)$ is infinitesimally rigid if $G$ is complete on at most $d+1$ vertices or $|V|\geq d+2$ and $\rank(R(G,\rho))=d|V|-\binom{d+1}{2}$.

\begin{figure}[ht]
    \centering
    \begin{tikzpicture}[scale=0.8]
        \node[vertex] (1) at (0,0) {};
        \node[vertex] (2) at (2,0) {};
        \node[vertex] (3) at (3,2) {};
        \node[vertex] (4) at (-1,2) {};
        \node[vertex] (5) at (1,2) {};
        \draw[edge] (1)--(2) (1)--(3) (1)--(4) (2)--(3) (2)--(4) (3)--(5) (4)--(5);
        \draw[thick,-{Latex[round]}] (5)++(0,0.2)--++(0,0.6);
    \end{tikzpicture}
    \caption{A graph which would be infinitesimally rigid if realised as a generic framework, but the chosen framework has a non-trivial infinitesimal motion (as indicated).}
    \label{fig:inf}
\end{figure}
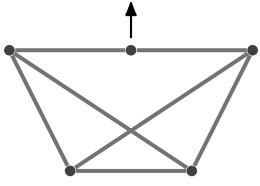

The rigidity (and global rigidity) of a framework depends on both the underlying graph, and the choice of realisation. See \Cref{fig:inf} for an example.
However, if we restrict attention to generic frameworks, then this complication disappears. A framework $(G,\rho)$ is \emph{generic} if the set of coordinates of $\rho$ forms an algebraically independent set over $\mathbb{Q}$. 

\begin{lemma}[\citet{AsimowRoth}]\label{lem:AR}
    Let $G$ be a graph on $n$ vertices.
    A generic framework $(G,\rho)$ is $d$-rigid if and only if $n\leq d$ and $G$ is complete or $n\geq d+1$ and $\rank(R(G,\rho))=d|V|-\binom{d+1}{2}$.
\end{lemma}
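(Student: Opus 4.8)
The plan is to reduce $d$-rigidity of a generic framework to a statement about the dimension of a fibre of the rigidity map $f_G$, exploiting that genericity forces $\rho$ to be a regular point; this is the route taken in the original argument of Asimow and Roth. First I would verify that a generic $\rho$ is indeed regular. The maximum of $\rank R(G,\sigma)$ over $\sigma\in\mathbb{R}^{d|V|}$ is attained precisely off the common vanishing locus of the maximal minors of $R(G,\cdot)$; as these minors are polynomials with rational coefficients, this locus is a proper algebraic subset defined over $\mathbb{Q}$, so every generic $\rho$ avoids it. Hence $\rank R(G,\cdot)$ is locally constant at a generic $\rho$, and by the constant-rank theorem $f_G^{-1}(f_G(\rho))$ coincides, on some neighbourhood $U$ of $\rho$, with a smooth submanifold $M$ of dimension $d|V|-\rank R(G,\rho)$. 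For the same reason, genericity makes $\rho(V)$ affinely independent when $n\le d$ and affinely spanning when $n\ge d+1$.

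Assume first $n\ge d+1$. The set $C(\rho)$ of realisations congruent to $\rho$ is the orbit of $\rho$ under the diagonal action of the Euclidean group $\mathrm{O}(d)\ltimes\mathbb{R}^d$; because $\rho(V)$ affinely spans $\mathbb{R}^d$, the stabiliser of $\rho$ is trivial and the orbit map is an immersion, so near $\rho$ the set $C(\rho)$ is an embedded submanifold of $M$ of dimension $\binom{d+1}{2}$. Since $C(\rho)\subseteq f_G^{-1}(f_G(\rho))$, we obtain Maxwell's bound $d|V|-\rank R(G,\rho)=\dim M\ge\binom{d+1}{2}$, and it remains to show that $(G,\rho)$ is $d$-rigid exactly when equality holds. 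If $\rank R(G,\rho)=d|V|-\binom{d+1}{2}$, then $C(\rho)$ is a submanifold of $M$ of full dimension and hence open in $M$; shrinking $U$ gives $f_G^{-1}(f_G(\rho))\cap U=C(\rho)\cap U$, so $(G,\rho)$ is infinitesimally and locally rigid, and for a generic framework this upgrades to $d$-rigidity in the finite-fibre sense of this paper. Conversely, if $(G,\rho)$ is $d$-rigid, then $f_G^{-1}(f_G(\rho))$ is a finite union of $\mathrm{O}(d)\ltimes\mathbb{R}^d$-orbits, each of dimension at most $\binom{d+1}{2}$; since a manifold cannot be covered by finitely many subsets of strictly smaller dimension, $\dim M\le\binom{d+1}{2}$, and equality follows.

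The case $n\le d$ must be treated separately, since there the congruence orbit is no longer $\binom{d+1}{2}$-dimensional. If $G=K_n$, the prescribed pairwise distances determine the affinely independent set $\rho(V)$ up to isometry, so $(G,\rho)$ is rigid (in fact globally rigid). If $G\ne K_n$, choose a non-edge: then $\rank R(G,\rho)\le|E|\le\binom{n}{2}-1$, whereas the stabiliser of an $(n-1)$-dimensional configuration equals $\mathrm{O}(d-n+1)$, so $\dim C(\rho)=dn-\binom{n}{2}$; the dimension count above now yields $\dim M\ge dn-\binom{n}{2}+1>\dim C(\rho)$, so the fibre meets infinitely many congruence classes and $(G,\rho)$ is flexible.

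The delicate step is the upgrade in the second paragraph from the local conclusion ($C(\rho)$ open in the fibre near $\rho$) to $d$-rigidity in the global finite-fibre sense: one must rule out ``distant'' components of $f_G^{-1}(f_G(\rho))$ that carry extra motions. This is precisely where genericity of $\rho$ — rather than mere regularity — is needed, through the fact that for generic $\rho$ the fibre $f_G^{-1}(f_G(\rho))$ has pure dimension $d|V|-\rank R(G,\rho)$ because $f_G(\rho)$ is then a generic point of the image variety. Making this, and the accompanying semialgebraic dimension arguments, precise is the main obstacle; the rest is routine differential topology, and the full details are in \cite{AsimowRoth}.
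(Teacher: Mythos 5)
The paper does not prove this lemma --- it is quoted directly from Asimow and Roth --- so there is no internal proof to compare against; your sketch is a faithful reconstruction of the standard argument (regularity of generic points, the constant-rank theorem, and a dimension count against the $\binom{d+1}{2}$-dimensional congruence orbit), and it correctly isolates the two regimes $n\le d$ and $n\ge d+1$. The one place needing more than you write is the step you yourself flag: passing from ``$C(\rho)$ is open in the fibre near $\rho$'' to the paper's finite-fibre definition of rigidity requires knowing that \emph{every} point of $f_G^{-1}(f_G(\rho))$ is a regular point of the same rank (so that each of the finitely many connected components of this algebraic set is a single congruence class); ``pure dimension of the fibre'' alone is not quite enough, and the rigorous version is the Sard plus Tarski--Seidenberg argument that the paper itself invokes in the proof of \Cref{prop:finite}. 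Since you identify exactly this as the deferred technical content and cite where it lives, the proposal is correct as a sketch.
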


It follows from \Cref{lem:AR} \citep[resp.][]{GortlerHealyThurston} that if $(G,\rho)$ is a generic and $d$-rigid framework (resp.\ globally $d$-rigid framework) then every generic framework $(G,\rho')$ is rigid (resp.\ globally $d$-rigid). In other words, in the generic case, rigidity depends only on the graph. Thus, we can define $d$-rigid and globally $d$-rigid for graphs (meaning there exists a generic framework that is $d$-rigid or globally $d$-rigid).

While genericity is a strong mathematical assumption, it is worth noting that almost all realisations $\rho$ of a given graph yield generic frameworks. Therefore, we can always approximate a given realisation $\rho'$ by a generic realisation $\rho$ within a sufficiently small open neighbourhood of $\rho'$. Moreover in practical applications noisy data is likely to be generic.

In the generic case, there exist graph-theoretic characterisations of rigidity and global rigidity in the Euclidean plane, as demonstrated by \citet{Geiringer} and \citet{JacksonJordan}. These characterisations have paved the way for efficient deterministic algorithms, such as network flow algorithms, to test rigidity \citep{JHpebble}. However, extending these results to three dimensions remains a challenging open problem.  

\subsection{From contacts to distances}

In this paper, we study 3D genome reconstruction from Hi-C or contact matrices. These matrices are obtained from Hi-C experiments. The first step (cross-linking) of a Hi-C experiment is to freeze interactions between close DNA segments by chemically linking nearby chromatin segments within the same nucleus spatially. After cross-linking, the DNA segments are cut into fragments (fragmentation) using a restriction enzyme and then ligated together (ligation). This step forms the hybrid DNA molecules representing interactions between different genomic regions. Then, the DNA fragments are purified from contaminants in reverse cross-linking to facilitate DNA sequencing. Finally, using short sequences from both ends of a ligated segment, the segment is matched to a pair of interacting chromosome regions. All such interactions are recorded in a contact matrix. The rows and columns of a contact matrix correspond to regions of the genome, and the entries count the number of interactions between the corresponding regions in a Hi-C experiment.

A Hi-C matrix can record interactions either in a population of cells or in a single cell. In this paper, we focus on the single-cell setting for haploid organisms. The single-cell contact count matrix for a haploid organism is a $0/1$-matrix recording whether there is an interaction between two genomic regions. The main problem studied in this paper is the following:

\begin{problem} \label{problem:main}
    Given a single-cell Hi-C matrix and possibly some pairwise distances between genomic loci for a haploid organism, study the 3D genome reconstruction from this data. The different aspects of the 3D genome reconstruction that we study are given in~\Cref{tab:questions}.
\end{problem}

\begin{table}[ht]
    \centering
    \begin{tabular}{ p{3cm}p{0.5cm}p{7.8cm}}
         \toprule
         \multirow[t]{2}{2.5cm}{Existence} & \faDna & Given a Hi-C matrix and possibly some pairwise distances, is there a 3D genome structure with the given Hi-C matrix and distances?\\
         \cmidrule{2-3}
         & \faSquareRoot* & Given a graph and possibly some pairwise distances between its vertices, is there a realisation of the graph in a model? \\
         \midrule
         \multirow[t]{2}{2.5cm}{Uniqueness} & \faDna & Given the Hi-C matrix of a 3D genome structure and possibly some pairwise distances, is there a unique 3D genome structure with the given Hi-C matrix and distances?\\
         \cmidrule{2-3}
         & \faSquareRoot* & Given a graph in a model and possibly some pairwise distances between its vertices, is there a unique realisation of the graph in the model?\\
         \midrule
         \multirow[t]{2}{3cm}{Uniqueness for fixed edge lengths} & \faDna & Given the Hi-C matrix of a 3D genome structure and all the pairwise distances corresponding to the ones in the Hi-C matrix, is there a unique 3D genome structure?\\
         \cmidrule{2-3}
         & \faSquareRoot* & Given a graph in a model together with its edge lengths, does it have a unique realisation in the model?\\
         \midrule
         \multirow[t]{2}{2.6cm}{Reconstruction algorithm} & \faDna & Reconstruct a 3D genome structure from a Hi-C matrix and possibly some pairwise distances. \\
         \cmidrule{2-3}
         & \faSquareRoot* & Find a realisation of a graph in a model given possibly some pairwise distances between its vertices.
         \\\bottomrule
    \end{tabular}
    \caption{Biological (\faDna) and mathematical (\faSquareRoot*) questions studied in this paper.}
    \label{tab:questions}
\end{table}

We use distance-based methods to study~\Cref{problem:main}, which means that we associate distance constraints to a contact matrix. To this end, we model the genome as a string of $n$ beads. The beads correspond to genomic loci in a Hi-C experiment. The positions of the beads are recorded by a matrix $X=[x_1,\ldots,x_n]^T \in \R^{n \times d}$. We are primarily interested in the real-life setting when $d=3$.
However, we also consider other values of $d$ when obtaining complete results for $d=3$ is not feasible. Specifically, we explore less-complex models where the genome is considered to lie in a 1- or 2-dimensional space.

In the subsequent sections, we study~\Cref{problem:main} across various models that incorporate distance constraints alongside the information given by a contact matrix. These models are summarised in \Cref{tab:models}.
In the initial model, an interaction between two loci indicates that the beads corresponding to those loci are separated by a distance no greater than a threshold value $d_c$. Conversely, the absence of an interaction suggests that the beads are more than $d_c$ units apart.
This particular model, which we call the threshold model, was introduced by \citet{trieu2014large}.
While it is the most general model, we see that 3D reconstructions are not uniquely identifiable within this framework, which motivates the study of alternative models. The second assumption, that the absence of an interaction suggests that the beads are more than $d_c$ units apart, in this and other models corresponds to the idealized setting when all contacts are sampled in an Hi-C experiment. In reality, only a small set of actual contacts is sampled, so even if a contact is not observed, then it can be present. We note that if a model is not identifiable with the second assumption and has at least one solution, then it is also not identifiable without any constraints for non-contacts.

In the second model, we assume that in addition to the thresholds, we have knowledge of pairwise distances between some of the beads. This can occur, for example, in the presence of microscopy data or when estimating distances between neighbouring beads. When a sufficient number of pairwise distances are available, the 3D reconstructions become identifiable under this model. However, it may not always be possible to obtain a sufficient number of pairwise distances, which motivates the exploration of two additional models.

\begin{figure}[ht]
    \centering
    \begin{tikzpicture}[bub/.style={rectangle,rounded corners=10pt,align=left,inner sep=0.5cm,minimum width=6cm},scale=1]
        \node[bub,draw=colB,fill=colB!5!white] (h) at (0,0) {\envh{\faDna}{HiC data}{colB}\\\envh{\faSquareRoot*}{inequalities}{colB}\\[1ex] \difficult{\faMapMarker*}{unidentifiable}\\\easy{\faDatabase}{widely available}};
        \node[bub,draw=colY,fill=colY!5!white] (h) at (5,-6) {\envh{\faDna}{microscopy data}{colY}\\\envh{\faSquareRoot*}{equalities}{colY}\\[1ex] \easy{\faMapMarker*}{identifiable}\\\difficult{\faDatabase}{limited}};
        \node[bub,draw=colY!50!colB,fill=colY!50!colB!5!white] (h) at (2.5,-3) {\envh{\faDna}{combined Hi-C and microscopy data}{colY!50!colB}\\\envh{\faSquareRoot*}{inequalities and equalities}{colY!50!colB}};
        \draw[bub,draw=black] (-3.5,-10)rectangle(8.5,2.5);
        \draw[dashed] (-3.5,-8.5)--(8.5,-8.5);
        \node[anchor=west] at (-2.5,-9) {\faDna\ \ldots\ biology};
        \node[anchor=west] at (-2.5,-9.5) {\faSquareRoot* \ldots\ maths};
        \node[anchor=west] at (1.5,-9) {\faMapMarker* \ldots\ position computation};
        \node[anchor=west] at (1.5,-9.5) {\faDatabase\ \ldots\ data acquisition};
    \end{tikzpicture}
    \caption{A schematic overview depicting the two main approaches to obtaining biological data, their mathematical counterparts and the advantages~\faCheckCircle\ and disadvantages~\faTimesCircle.}
    \label{fig:overview}
\end{figure}
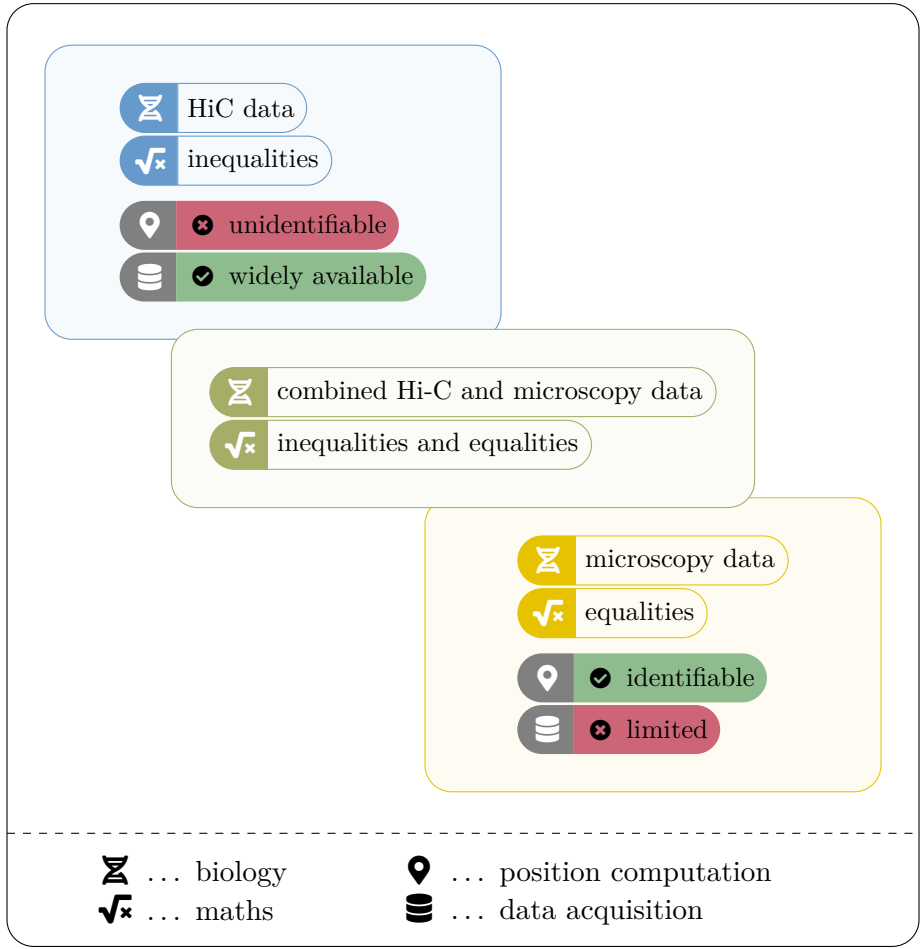

In the third model, an interaction between two loci implies that the distance between the corresponding beads precisely equals the threshold value $d_c$, as opposed to being at most $d_c$ in the threshold model. Similar to the previous models, the absence of an interaction indicates that the beads are separated by a distance greater than $d_c$. 
While this model is more restrictive than the threshold model, an advantage is that the restriction allows one to get unique 3D reconstructions in some cases.

The fourth model introduces genericity to the third one. More specifically, it assumes that an interaction between two loci implies that the distance between the corresponding beads is within an interval around the threshold $d_c$. The absence of an interaction implies a certain lower bound on the distance between the corresponding beads, which is made explicit in~\Cref{sec:generic_penny}. Additionally, both the third and fourth models can be enhanced by incorporating the extra assumption that distances between certain beads are known. Utilising all available information is consistently beneficial in all these models. 

Most of these models correspond to known frameworks in the rigidity theory. The correspondence between the biological and mathematical models is shown in~\Cref{tab:models}.
It is important to note, however, that while rigidity theory primarily focuses on edge lengths, our setting introduces additional complexity with constraints on missing edges, resulting in specific inequalities. These additional constraints make our problems more challenging and distinct from the classical rigidity theory.
\begin{table}[ht]
    \centering
    \rowcolors{2}{white}{black!5!white}
    \begin{tabular}{p{2.5cm}p{2.5cm}p{2cm}p{2cm}}
         \toprule
         biology & maths & edges & non-edges \\\midrule
         threshold & unit ball & \faLessThanEqual & \faGreaterThan\\
         threshold \& some distances & unit ball \& some distances & \faLessThanEqual \& some \faEquals & \faGreaterThan \& some \faEquals\\
         same distance & penny/marble  & \faEquals \textbf{1} & \faGreaterThan \\
         approx.\ same distance & generic radii penny & \faEquals \textbf{1}$\mathbf{\bpm\varepsilon}$ & \faGreaterThan
         \\\bottomrule 
    \end{tabular}
    \caption{The different models considered in this paper.}
    \label{tab:models}
\end{table}

\Cref{tab:models_and_questions} summarises the results on the existence and uniqueness of solutions in the different models referring to the respective theorems or literature.

\begin{table}[ht]
    \centering
    \rowcolors{2}{white}{black!5!white}
    \begin{tabular}{p{2cm}p{3.5cm}p{3.5cm}p{3.5cm}}
         \toprule
         model & existence & uniqueness & uniqueness for fixed edge lengths \\\midrule
         unit ball & \parbox[t]{3.4cm}{\Cref{lemma:existence_threshold_dim1} (dim 1)\\ \Cref{fig:forbidden-subgraphs-2d} (dim 2)\\ \Cref{comment:existence_threshold_dim3} (dim 3)\\[-2ex]} & \Cref{lemma:uniqueness_threshold_model} (no) & \parbox[t]{3.4cm}{\Cref{lemma:uniqueness_edge_lengths_threshold1}\\ \Cref{theorem:uniqueness_edge_lengths_threshold2}\\ \Cref{lemma:uniqueness_edge_lengths_threshold3}}\\
         \parbox[t]{3.4cm}{unit ball \&\\ some distances\\[-2ex]} & - & \Cref{prop:finite} & \parbox[t]{3.4cm}{\Cref{lemma:uniqueness_given_distances_equations_model}\\ \Cref{thm:ght}}\\
         penny/marble & NP-hard \citet{Breu96,Hlin97}  & \Cref{l:basic} & - \\
         interval radii penny & \Cref{lemma:existence_rigidity_interval_radii_penny} & ? & ?
         \\\bottomrule 
    \end{tabular}
    \caption{Theoretical results for the different models and questions.}
    \label{tab:models_and_questions}
\end{table}

\section{Threshold model --- unit ball graphs}\label{sec:unit_ball}

We begin by examining a model introduced by \citet{trieu2014large}, which defines a contact between two loci if their distance is below a threshold value of $d_c$. Conversely, a pair of loci is classified as non-contact if their distance exceeds the threshold. As the concepts we explore remain invariant under scalings, we can assume without loss of generality that $d_c=1$. In this particular model, a value of one in the Hi-C matrix indicates that the corresponding distance is smaller than one, while a zero represents a distance greater than one.

\begin{definition}[\citet{garamvolgyi2020global}]
    A graph $G=(V,E)$ is a \emph{($d$-dimensional) unit ball graph} if there exists a realisation $\rho:V \rightarrow \mathbb{R}^d$ such that $\|\rho(v)-\rho(w)\| \leq 1$ if and only if $vw \in E$.
    In such cases,
    the realisation $\rho$ is called a \emph{($d$-dimensional) unit ball realisation of $G$} and the pair $(G,\rho)$ is called a \emph{($d$-dimensional) unit ball framework}.
    A 1-dimensional unit ball graph is also called a \emph{unit interval graph},
    and a 2-dimensional unit ball graph is also called a \emph{unit disk graph}. See \Cref{fig:unitdisk} for an example.
\end{definition}

\begin{figure}[ht]
    \centering
    \begin{tikzpicture}
        \node[vertex] (a) at (-0.9,0) {};
        \node[vertex] (b) at (0,0) {};
        \node[vertex] (c) at (120:1) {};
        \node[vertex] (d) at (60:0.8) {};
        \draw[edge] (a)edge(b) (a)edge(c) (b)edge(c) (b)edge(d) (c)edge(d);
        \begin{scope}[on background layer]
            \foreach \v in {a,b,c,d}
            {
                \draw[disk] (\v) circle[radius=1cm];
            }
        \end{scope}
    \end{tikzpicture}
    \qquad
    \begin{tikzpicture}
        \node[vertex] (a) at (-1.2,0) {};
        \node[vertex] (b) at (0,0) {};
        \node[vertex] (c) at (120:1) {};
        \node[vertex] (d) at (60:1) {};
        \draw[edge] (a)edge(b) (a)edge(c) (b)edge(c) (b)edge(d) (c)edge(d);
        \begin{scope}[on background layer]
            \foreach \v in {a,b}
            {
                \draw[nodisk] (\v) circle[radius=1cm];
            }
        \end{scope}
    \end{tikzpicture}
    \caption{A unit disk graph with a unit disk realisation (left) and another realisation which does not satisfy the unit disk condition (right).}
    \label{fig:unitdisk}
\end{figure}
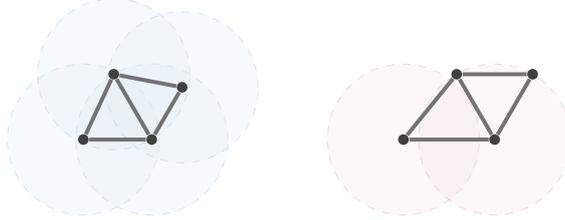

\subsection{Realisability}

Biologically the key interest is in 3D realisations however mathematically we also discuss realisations in lower dimensional spaces where there is already rich theory. For example in the literature, the realisation problem has been studied and solved for the case when $d=1$.
Specifically, a graph is a unit interval graph if and only if it is an \emph{indifference graph} \citep[for a formal definition see][]{indifference}. 

A \emph{clique} $X$ in a graph $G$ is a complete subgraph of $G$. The \emph{clique graph} of $G$ is another graph $H$ in which the vertices correspond to the maximal cliques of $G$, and two vertices are adjacent if the corresponding maximal cliques intersect in at least one vertex.

\begin{lemma}[\citet{Lekkeikerker1962}] \label{lemma:existence_threshold_dim1}
    A graph $G$ is a unit interval graph if and only if it is chordal, and every connected component of the clique graph of $G$ is a path.
\end{lemma}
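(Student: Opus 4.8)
The plan is to prove both directions by connecting the characterisation to the classical Roberts theorem that unit interval graphs are exactly the indifference graphs, while working directly with the combinatorial structure claimed. Recall that a graph is an indifference graph precisely when it admits an ordering of the vertices (a \emph{linear arrangement}) such that, for every vertex, its closed neighbourhood is consecutive in the ordering; equivalently, unit interval graphs are the graphs with no induced $K_{1,3}$, no induced net, and no induced hole, and they are a subclass of interval graphs, which are exactly the chordal graphs whose maximal cliques admit a \emph{consecutive arrangement} (a linear order in which, for each vertex, the maximal cliques containing it are consecutive). So chordality will be immediate in the forward direction, and the content is translating ``the clique graph components are paths'' into the consecutiveness condition that distinguishes unit interval graphs among interval graphs.

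First, for the forward direction, suppose $G$ is a unit interval graph with unit interval realisation $\rho : V \to \mathbb{R}$. Unit interval graphs are interval graphs, hence chordal, which handles the first condition. For the clique graph condition, I would use the fact that in a unit interval representation the maximal cliques can be linearly ordered $X_1, \dots, X_m$ by the left endpoints of the corresponding unit intervals' common intersection, and that consecutive maximal cliques $X_i, X_{i+1}$ overlap in a vertex whereas non-consecutive ones $X_i, X_j$ with $j > i+1$ are disjoint — the key point being that in the unit (as opposed to general) interval case, a vertex cannot belong to three maximal cliques $X_i, X_j, X_k$ with $i < j < k$ without $X_i$ and $X_k$ already intersecting, forcing adjacency patterns incompatible with maximality. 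Thus each connected component of the clique graph is a path. One must also argue that distinct connected components of $G$ give distinct components of the clique graph and cannot merge — this is immediate since maximal cliques from different components of $G$ are vertex-disjoint.

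For the converse, assume $G$ is chordal with every component of its clique graph a path. Work component-wise (a disjoint union of unit interval graphs on pairwise-far intervals is again a unit interval graph), so assume $G$ connected and its clique graph a path $X_1 - X_2 - \cdots - X_m$. Since $G$ is chordal, each edge lies in a maximal clique and each maximal clique appears as some $X_i$. The plan is to build the representation greedily along the path: place the intervals so that the vertices of $X_i$ all pairwise intersect, consecutive cliques share exactly their common vertices, and — crucially — each vertex $v$ occurs in a \emph{consecutive} block $X_a, X_{a+1}, \dots, X_b$ of the path. This last claim is where the real work lies: I would prove it from chordality plus the path structure of the clique graph, by showing that if $v \in X_a \cap X_b$ with $a < b$ then $v$ lies in every intermediate $X_i$, arguing via a shortest path in the clique graph between $X_a$ and $X_b$ and using chordality (a Helly-type property of cliques on a path) to rule out a vertex ``skipping'' a clique. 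Given consecutiveness of vertex-blocks, assign to $v$ the interval $[i/(2m), i/(2m) + 1]$ where roughly $i$ tracks the first clique containing $v$, tuning the small offsets within each clique so that $u \sim v$ iff their blocks overlap iff $|\rho(u) - \rho(v)| \le 1$; since all intervals have the same length this is a unit interval representation.

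The main obstacle I anticipate is the converse, specifically proving that the path structure of the clique graph forces each vertex to occupy a consecutive run of maximal cliques, and then checking that the greedy assignment can be carried out without an adjacency between vertices whose clique-blocks are non-overlapping (equivalently, that the minimum ``gap'' between clique positions always exceeds the interval length $1$, which requires choosing the spacing $1/(2m)$ and the within-clique perturbations carefully). A secondary subtlety is that ``clique graph'' here uses \emph{maximal} cliques and adjacency means intersecting in at least one vertex, so I must be careful that a path in the clique graph does not hide extra intersections (e.g. $X_1 \cap X_3 \ne \emptyset$ while $X_1 \cap X_2$, $X_2\cap X_3$ are the only edges) — ruling this out is exactly the consecutiveness-of-vertex-blocks lemma, so it is really one obstacle wearing two hats.
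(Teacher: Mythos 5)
The paper offers no proof of this lemma --- it is stated purely as a citation to Lekkerkerker--Boland --- so there is no in-paper argument to compare yours against; the comparison has to be with the statement itself, and there your forward direction breaks. The ``key point'' you rely on --- that in a unit interval graph a vertex cannot lie in three maximal cliques, so that non-consecutive maximal cliques in the left-endpoint order are disjoint --- is false, and with it the implication ``unit interval $\Rightarrow$ every component of the clique graph is a path'' fails under the paper's own definitions. Concretely, take $G=P_5^2$, the second power of the path on vertices $1,2,3,4,5$, realised by $\rho=(0,\,0.51,\,0.99,\,1.5,\,1.98)$ with adjacency $|\rho(i)-\rho(j)|\le 1$. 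This is a unit interval graph and is chordal, its maximal cliques are $\{1,2,3\}$, $\{2,3,4\}$, $\{3,4,5\}$, all three contain the vertex $3$, so they pairwise intersect and the clique graph (as defined in the paper: maximal cliques adjacent whenever they share a vertex) is a triangle, not a path. No refinement of your ordering argument can rescue this; the correct classical statements replace the clique graph by the \emph{reduced} clique graph (adjacency only when the intersection is a minimal separator --- here $\{3\}$ is not a separator of $P_5^2$, so that graph is a path), or use a consecutive arrangement of maximal cliques, or Roberts' characterisation of unit interval graphs as $K_{1,3}$-free interval graphs.

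Your converse direction, by contrast, is essentially sound and is easier than you anticipate: if a component of the clique graph is the path $X_1$--$X_2$--$\cdots$--$X_m$, then $X_i\cap X_j\neq\emptyset$ for $|i-j|\ge 2$ would itself be an edge of the clique graph, so the path hypothesis immediately forces every vertex to lie in at most two, necessarily consecutive, maximal cliques; no separate Helly-type or chordality argument is needed for the ``consecutive blocks'' claim you identify as the main obstacle. What remains is exactly the spacing issue you flag: one must place the blocks $X_i\setminus(X_{i-1}\cup X_{i+1})$ and $X_i\cap X_{i+1}$ on the line so that each $X_i$ has diameter at most $1$ while the private parts of $X_i$ and $X_{i+1}$ are more than $1$ apart, which forces the inter-clique gaps to shrink on the order of $1/m$ but is feasible. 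So your proposal can be completed into a proof of the ``if'' direction, but the ``only if'' direction cannot be proved because, as stated with the paper's definition of clique graph, it is not true.
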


While such graphs are easy to understand, when we move up to dimension $2$, the problem becomes NP-hard,  
and only partial results are available. We refer to \citet{Breu98} for a detailed analysis of the complexity and to \citet{atminas2018forbidden} for some forbidden subgraphs. In particular, the class of unit disk graphs is \emph{hereditary}, indicating closure under vertex deletions. Consequently, it raises the question of characterising unit disk graphs based on their minimal forbidden induced subgraphs. 
The known minimal forbidden subgraphs in dimension 2 are the infinite families described by \citet{atminas2018forbidden} along with the 7 graphs depicted in \Cref{fig:forbidden-subgraphs-2d}.

\begin{figure}[ht]
    \centering
    \begin{tikzpicture}[scale=0.8]
        \node[vertex] (a) at (0,0) {};
        \foreach \w in {0,60,...,300}
        {
            \node[vertex] (b\w) at (\w:1) {};
            \draw[edge] (a)edge(b\w);
        }
        \node[labelsty] at (0,-1.5) {$K_{1,6}$};
    \end{tikzpicture}
    \quad
    \begin{tikzpicture}[scale=0.8]
        \node[vertex] (a) at (-1,0) {};
        \node[vertex] (b) at (0,0) {};
        \node[vertex] (c) at (1,0) {};
        \node[vertex] (d) at (0,1) {};
        \node[vertex] (e) at (0,-1) {};
        \draw[edge] (a)edge(b) (a)edge(d) (a)edge(e) (b)edge(c) (c)edge(d) (c)edge(e);
        \node[labelsty] at (0,-1.5) {$K_{2,3}$};
    \end{tikzpicture}
    
    \begin{tikzpicture}[scale=0.8]
        \node[vertex] (a) at (-1,0) {};
        \node[vertex] (b) at (0,0) {};
        \node[vertex] (c) at (1,0) {};
        \node[vertex] (d) at (-1.5,1) {};
        \node[vertex] (e) at (-1.5,-1) {};
        \node[vertex] (f) at (0,1) {};
        \node[vertex] (g) at (0,-1) {};
        \draw[edge] (a)edge(b) (a)edge(d) (a)edge(e) (b)edge(c) (c)edge(f) (c)edge(g) (d)edge(f) (e)edge(g) (d)edge(e);
        \node[labelsty] at (-0.25,-1.5) {$G_1$};
    \end{tikzpicture}
    \quad
    \begin{tikzpicture}[scale=0.8]
        \node[vertex] (a) at (-1,0) {};
        \node[vertex] (b) at (0,0) {};
        \node[vertex] (c) at (1,0) {};
        \node[vertex] (d) at (0,0.5) {};
        \node[vertex] (e) at (-2,0) {};
        \node[vertex] (f) at (0,1) {};
        \node[vertex] (g) at (0,-1) {};
        \draw[edge] (a)edge(b) (a)edge(d) (a)edge(e) (b)edge(c) (c)edge(f) (c)edge(g) (d)edge(f) (e)edge(g) (f)edge(e);
        \node[labelsty] at (-0.25,-1.5) {$G_2$};
    \end{tikzpicture}
    \quad
    \begin{tikzpicture}[scale=0.8]
        \node[vertex] (a) at (-1,0) {};
        \node[vertex] (b) at (0,0) {};
        \node[vertex] (d) at (-0.5,1) {};
        \node[vertex] (e) at (-0.5,-1) {};
        \node[vertex] (f) at (0.5,1) {};
        \node[vertex] (g) at (0.5,-1) {};
        \draw[edge] (a)edge(d) (a)edge(e) (b)edge(d) (b)edge(e) (d)edge(f) (e)edge(g) (f)edge(g);
        \node[labelsty] at (-0.25,-1.5) {$G_3$};
    \end{tikzpicture}
    \quad
    \begin{tikzpicture}[scale=0.8]
        \node[vertex] (a) at (-1,0) {};
        \node[vertex] (b) at (0,0) {};
        \node[vertex] (c) at (1,0) {};
        \node[vertex] (d) at (-0.5,1) {};
        \node[vertex] (e) at (-0.5,-1) {};
        \node[vertex] (f) at (0.5,1) {};
        \node[vertex] (g) at (0.5,-1) {};
        \draw[edge] (a)edge(d) (a)edge(e) (b)edge(d) (b)edge(e) (c)edge(f) (c)edge(g) (d)edge(f) (e)edge(g);
        \node[labelsty] at (-0.25,-1.5) {$G_4$};
    \end{tikzpicture}
    \quad
    \begin{tikzpicture}[scale=0.8]
        \node[vertex] (a) at (0,0) {};
        \node[vertex] (b) at (1,0) {};
        \node[vertex] (c) at (2,0) {};
        \node[vertex] (d) at (0,1) {};
        \node[vertex] (e) at (1,1) {};
        \node[vertex] (f) at (2,1) {};
        \node[vertex] (g) at (1,2) {};
        \draw[edge] (a)edge(b) (a)edge(d) (b)edge(c) (b)edge(e) (c)edge(f) (d)edge(e) (d)edge(g) (e)edge(f) (f)edge(g);
        \node[labelsty] at (1,-0.5) {$G_5$};
    \end{tikzpicture}
    \caption{Some known minimal forbidden subgraphs in dimension two. The existence of such a subgraph prevents a graph from having a unit-disc realisation.}
    \label{fig:forbidden-subgraphs-2d}
\end{figure}

For $d\geq 3$, the problem of determining whether a graph is a unit ball graph remains NP-hard \citep{Breu98}. However, to the best of our knowledge, there is essentially no published literature
regarding forbidden subgraphs in this context. 

\begin{lemma} \label{comment:existence_threshold_dim3}
    The complete bipartite graph $K_{1,13}$ (with parts of size 1 and 13) is a minimal forbidden subgraph of the family of unit ball graphs in dimension 3.
\end{lemma}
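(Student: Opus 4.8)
The plan is to use that $3$-dimensional unit ball graphs form a hereditary class (restrict any unit ball realisation to a vertex subset), so it suffices to prove two things: (a) $K_{1,13}$ is not a $3$-dimensional unit ball graph, and (b) every proper induced subgraph of $K_{1,13}$ is. Part (b) should be routine; part (a) is the heart of the matter, and I would reduce it to the classical fact that the kissing number of $\mathbb{R}^3$ equals $12$.

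For (a) I would argue by contradiction. Suppose $(K_{1,13},\rho)$ is a $3$-dimensional unit ball framework; put the centre of the star at the origin and call the images of the leaves $v_1,\dots,v_{13}$. The edge/non-edge conditions then read $\|v_i\|\le 1$ and $\|v_i-v_j\|>1$ for $i\ne j$; in particular no $v_i$ is the origin. The key step is to normalise each leaf onto the unit sphere, $\hat v_i:=v_i/\|v_i\|$, and show that this preserves the property of being pairwise more than $1$ apart, i.e.\ $\langle\hat v_i,\hat v_j\rangle<\tfrac12$. Writing $r=\|v_i\|\le1$, $s=\|v_j\|\le1$, the inequality $\|v_i-v_j\|^2>1$ is $2rs\langle\hat v_i,\hat v_j\rangle<r^2+s^2-1$; this forces $\langle\hat v_i,\hat v_j\rangle<0$ when $r^2+s^2\le1$, and otherwise gives $\langle\hat v_i,\hat v_j\rangle<\frac{r^2+s^2-1}{2rs}\le\frac12$, using $r^2-rs+s^2\le\max(r^2,s^2)\le1$. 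Hence $\|\hat v_i-\hat v_j\|^2=2-2\langle\hat v_i,\hat v_j\rangle>1$, so $\hat v_1,\dots,\hat v_{13}$ are $13$ distinct unit vectors with pairwise angle exceeding $60^\circ$, contradicting the kissing number bound in dimension three.

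For (b), a proper induced subgraph of $K_{1,13}$ either omits the centre, hence is edgeless and trivially a unit ball graph (place its vertices along a line more than $1$ apart), or it is a star $K_{1,k}$ with $k\le12$; by heredity it is then enough to realise $K_{1,12}$. This is done by putting the centre at the origin and the twelve leaves at the vertices of a regular icosahedron inscribed in the unit sphere: adjacent icosahedron vertices subtend the angle $\arccos(1/\sqrt5)\approx 63.43^\circ>60^\circ$, so all leaf--leaf distances exceed $1$ while all leaf--centre distances equal $1$ --- this is exactly the standard configuration witnessing that the kissing number of $\mathbb{R}^3$ is at least $12$.

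\textbf{Main obstacle.} The only genuine subtlety is the normalisation step in (a): radial projection onto the sphere does \emph{not} preserve distances in general (consider two nearly parallel short vectors), so one cannot invoke a blanket ``projection is non-expansive/non-contractive'' statement. The hypotheses $\|v_i-v_j\|>1$ and $\|v_i\|,\|v_j\|\le1$ are precisely what make the inner-product estimate $\langle\hat v_i,\hat v_j\rangle<\tfrac12$ valid, and this estimate is the bridge to the kissing-number bound; everything else is standard.
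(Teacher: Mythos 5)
Your proof is correct and follows essentially the same route as the paper: reduce non-realisability of $K_{1,13}$ to the kissing number of $\mathbb{R}^3$ being $12$, and realise $K_{1,12}$ via an icosahedral configuration in which the outer spheres touch only the central one. Your radial-projection step (showing $\langle\hat v_i,\hat v_j\rangle<\tfrac12$ for points in the closed unit ball that are pairwise more than $1$ apart) carefully fills in a reduction that the paper's proof leaves implicit, since the kissing bound directly concerns points on the sphere rather than in the ball.
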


\begin{proof}
    The kissing number in dimension three is 12 \citep[see, e.g.][]{Kissing3}. In fact, the spheres can be suitably arranged (e.g.\ in form of an icosahedron) such that they only touch the central one \citep{PfenderZiegler},
    as seen in \Cref{fig:kissingnumber}.\footnote{Due to this, it was not clear for a long time that the kissing number is indeed 12. This was a famous source of disagreement between Isaac Newton and David Gregory.
    While Newton believed that the kissing number in 3D is 12,
    Gregory (incorrectly) believed that the number was in fact 13.}
    Regardless, this shows that $K_{1,12}$ is a unit ball graph but $K_{1,13}$ is not.
    As the only proper connected subgraphs of $K_{1,13}$ with any edges are isomorphic to $K_{1,12}$,
    it follows that $K_{1,13}$ is a minimal forbidden subgraph.
    \begin{figure}[ht]
        \centering
        \includegraphics[width=5cm]{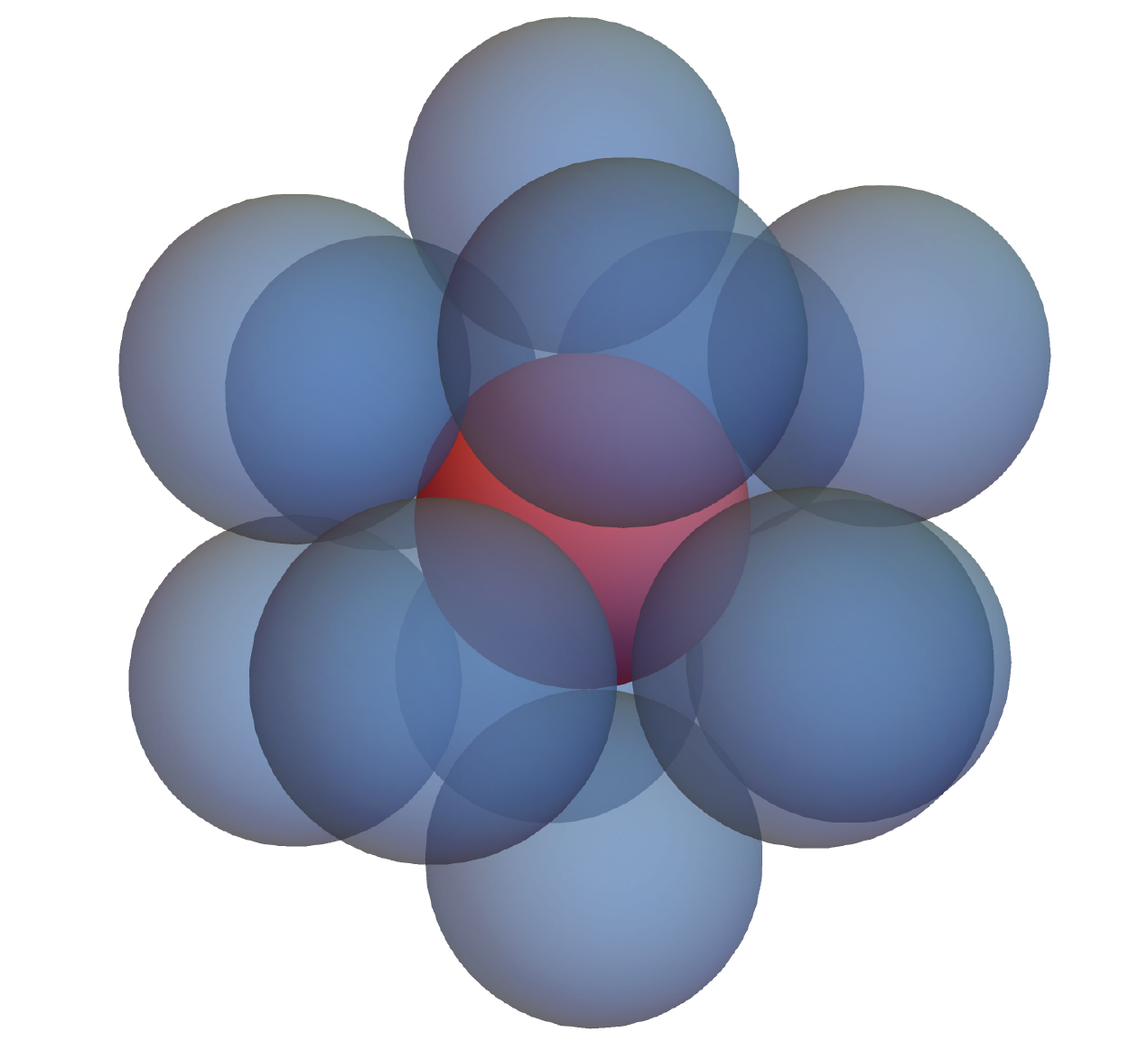}
        \caption{A central sphere with 12 spheres touching it but none of the outer spheres touch each other.}
        \label{fig:kissingnumber}
    \end{figure}
\end{proof}

We expect that there are a large number of further minimal forbidden subgraphs, some that arise from extending the examples in \Cref{fig:forbidden-subgraphs-2d}, as well as various infinite families (for example extending those given by \citet{atminas2018forbidden}). 

\subsection{Uniqueness}

In this subsection, we explore the uniqueness of reconstructions derived from the data within the Hi-C matrix. The results here apply in any dimension. Within this model, the Hi-C matrix corresponds precisely to the well-known adjacency matrix, which in turn uniquely determines the underlying graph. Consequently, there exists prior theoretical knowledge, particularly when $d=1$. However, before delving into the details, we first establish a negative result.

\begin{lemma} \label{lemma:uniqueness_threshold_model}
    Given a $d$-dimensional unit ball graph $G$, there are infinitely many $d$-dimensional unit ball realisations of $G$ modulo isometries and scalings.
\end{lemma}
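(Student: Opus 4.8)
The plan is to start from the given unit ball realisation, contract it slightly so that \emph{all} the defining constraints become strict inequalities, and then conclude by a dimension count against the group generated by isometries and scalings.

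So let $\rho$ be a $d$-dimensional unit ball realisation of $G=(V,E)$. Every non-edge constraint $\|\rho(v)-\rho(w)\|>1$ is strict, so (taking the minimum over the empty family to be $1$) the number
\[
  \varepsilon_0 \;:=\; \min_{vw\notin E}\Bigl(1-\tfrac{1}{\|\rho(v)-\rho(w)\|}\Bigr)
\]
is positive, and for any fixed $\varepsilon\in(0,\varepsilon_0)$ the realisation $\rho^{\ast}:=(1-\varepsilon)\rho$ still satisfies $\|\rho^{\ast}(v)-\rho^{\ast}(w)\|>1$ for every non-edge, while now $\|\rho^{\ast}(v)-\rho^{\ast}(w)\|\le 1-\varepsilon<1$ for every edge. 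Hence $\rho^{\ast}$ is again a unit ball realisation of $G$, and all inequalities defining the set
\[
  U:=\bigl\{\rho'\in\R^{d|V|}\ :\ \|\rho'(v)-\rho'(w)\|\le 1 \text{ for } vw\in E,\ \ \|\rho'(v)-\rho'(w)\|>1 \text{ for } vw\notin E\bigr\}
\]
hold strictly at $\rho^{\ast}$; being an open condition, this shows $\rho^{\ast}$ lies in the interior of $U$. Fix an open ball $W\subseteq U$ around $\rho^{\ast}$, so that every $\rho'\in W$ is a unit ball realisation of $G$.

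Now let $\Gamma$ be the group of transformations of $\R^d$ obtained by composing a Euclidean isometry with a positive dilation, acting diagonally on $\R^{d|V|}$; this is exactly the group by which we quotient, and $\Gamma$ is a manifold of dimension $\binom{d+1}{2}+1$. Each $\Gamma$-orbit is the image of $\Gamma$ under a smooth map, hence has measure zero in $\R^{d|V|}$ as soon as $d|V|>\binom{d+1}{2}+1$ — which certainly holds whenever $d\le 3$ and $|V|\ge 3$. The orbits meeting $W$ partition $W$, so if only countably many did, then $W$ would be a countable union of measure-zero sets and hence itself of measure zero, contradicting that $W$ is a nonempty open set. Therefore $W$ meets uncountably many distinct $\Gamma$-orbits, which is precisely the claim: $G$ admits infinitely many unit ball realisations modulo isometries and scalings.

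The one genuinely essential point is the reduction to strict inequalities in the first step: this is exactly where the open (strict) nature of the non-edge constraints is used, and without it $\rho$ itself could lie on the boundary of $U$ rather than in its interior. The dimension inequality $d|V|>\binom{d+1}{2}+1$ is merely a bookkeeping matter, the only exceptions being graphs on at most two vertices (for which the space of realisations is genuinely finite and which are handled trivially by inspection). Beyond that I do not anticipate any obstacle, as orbit-counting of this kind is routine.
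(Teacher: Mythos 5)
Your proof is correct and follows essentially the same route as the paper's: scale by $1-\varepsilon$ to push every defining constraint into the strict regime, then observe that a whole open neighbourhood of the scaled realisation consists of unit ball realisations. You are in fact more explicit than the paper about the final counting step (that a nonempty open set in $\R^{d|V|}$ must meet uncountably many orbits of the isometry-and-scaling group, since each orbit has measure zero), and you rightly flag that graphs on at most two vertices are genuine trivial exceptions which the paper's own proof also glosses over.
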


\begin{proof}
    Let $G$ be a $d$-dimensional unit ball graph and let $\rho$ be a realisation of $G$. If none of the edges of $\rho$ have length equal to one, then any small enough perturbation of the realisation $\rho$ is also a unit $d$-ball realisation of $G$. If some of the edge lengths of $\rho$ are equal to one, then we first scale the realisation by $(1-\varepsilon)$. For a small enough $\varepsilon >0$, this scaling does not affect the number of $d$-dimensional unit ball realisations. Following this scaling, we may perturb as above.
\end{proof}

The above lemma indicates that we cannot obtain uniqueness of 3D reconstructions unless we pose further constraints. In the following sections, we explore what kind of additional constraints can be added to obtain uniqueness of 3D reconstructions. 
Before doing that, we briefly describe results from \citet{garamvolgyi2020global}, where they fix the edge lengths and from that are able to determine rigidity and global rigidity for some families of unit ball graphs.

A $d$-dimensional unit ball framework $(G,\rho)$ is \emph{unit ball globally rigid} in $\mathbb{R}^d$ if whenever $(G,\rho')$ is an equivalent unit ball realisation of $G$ in $\mathbb{R}^d$, then $(G,\rho')$ is congruent to $(G,\rho)$.

\begin{lemma}[{\citet[Lemma 3.1]{garamvolgyi2020global}}]\label{lemma:uniqueness_edge_lengths_threshold1}
    A generic unit ball globally rigid framework $(G,\rho)$ in $\mathbb{R}^d$ is rigid.    
\end{lemma}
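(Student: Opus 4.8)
The plan is to argue by contraposition: I will show that if $(G,\rho)$ is \emph{not} rigid then it fails to be unit ball globally rigid, with genericity used exactly to turn flexibility into the existence of a nearby equivalent but non-congruent framework.

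First I would record what genericity buys for $\rho$ itself. Since the coordinates of $\rho$ are algebraically independent over $\mathbb{Q}$, no squared distance $\|\rho(v)-\rho(w)\|^2$ can equal $1$; together with the hypothesis that $\rho$ is a unit ball realisation of $G$, this yields the \emph{strict} inequalities $\|\rho(v)-\rho(w)\|<1$ for every edge $vw\in E$ and $\|\rho(v)-\rho(w)\|>1$ for every non-edge $vw\notin E$. This "strictly satisfied'' form of the unit ball condition is what makes it stable under small perturbations.

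Next, assume $(G,\rho)$ is flexible. Because $\rho$ is generic, \Cref{lem:AR} identifies flexibility with the standard notions of rigidity, so a flexible generic framework admits a (non-trivial, continuous) motion and, in particular, $\rho$ is not isolated modulo congruence among the frameworks equivalent to it: there is a realisation $\rho'\colon V\to\mathbb{R}^d$, as close to $\rho$ as we wish, with $f_G(\rho')=f_G(\rho)$ (so $(G,\rho')$ is equivalent to $(G,\rho)$) but with $(G,\rho')$ not congruent to $(G,\rho)$. I would then verify that, for $\rho'$ chosen sufficiently close to $\rho$, the framework $(G,\rho')$ is still a unit ball realisation of $G$: for each edge $vw$ we have $\|\rho'(v)-\rho'(w)\|=\|\rho(v)-\rho(w)\|\le 1$ since equivalence preserves edge lengths, while for each of the \emph{finitely many} non-edges $vw$ the strict inequality $\|\rho(v)-\rho(w)\|>1$ persists under a small enough perturbation; hence $\|\rho'(v)-\rho'(w)\|\le 1$ if and only if $vw\in E$. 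Thus $(G,\rho')$ is an equivalent unit ball realisation of $G$ in $\mathbb{R}^d$ that is not congruent to $(G,\rho)$, contradicting unit ball global rigidity.

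The only genuinely delicate step is the passage from "flexible'' to "there exists a nearby equivalent non-congruent framework''. For non-generic realisations the definition of rigidity in use (finitely many equivalent frameworks modulo isometries) need not be a local condition, so the genericity hypothesis is essential here and is dispatched by \Cref{lem:AR}; the remainder is a routine continuity argument over the finite set of non-edges.
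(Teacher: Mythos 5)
The paper states this lemma with a citation to \cite[Lemma 3.1]{garamvolgyi2020global} and supplies no proof of its own, so there is nothing internal to compare against; your argument is correct and is essentially the standard perturbation proof used in the cited reference. One small simplification: the strict inequality $\|\rho(v)-\rho(w)\|>1$ for non-edges is already part of the definition of a unit ball realisation, so genericity is not needed to make the unit ball condition stable under perturbation --- it is needed only, as you correctly isolate, to convert flexibility into the existence of a nearby equivalent but non-congruent realisation via the Asimow--Roth description of the fibre $f_G^{-1}(f_G(\rho))$ near a generic point (\Cref{lem:AR}).
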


A minimally $d$-rigid graph is \emph{special} if every proper subgraph that is $d$-rigid is a complete graph. Using this definition the following result gives infinite families of unit ball globally rigid graphs when $d=2$ and when $d=3$.

\begin{theorem}[{\citet[Lemma 5.1, Theorem 4.6 and Subsection 7.3]{garamvolgyi2020global}}] \label{theorem:uniqueness_edge_lengths_threshold2}
    Let $G$ be a graph.
    \begin{enumerate}
        \item If $G$ is a special minimally 2-rigid graph,
        then $G$ has a generic realisation in $\mathbb{R}^2$ that is unit ball global.
        \item If $G$ is a 4-connected maximal planar graph that is also a unit ball graph in $\mathbb{R}^3$,
        then $G$ has a generic realisation in $\mathbb{R}^3$ as a globally rigid unit ball graph.
    \end{enumerate}
\end{theorem}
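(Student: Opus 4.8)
The plan is to prove the two parts in parallel but separately, since they live in dimensions $2$ and $3$. First I would pin down the shape of the problem. A $4$-connected maximal planar graph has exactly $3|V|-6$ edges and, being a triangulation of the $2$-sphere, is generically $3$-rigid (Gluck), hence in fact \emph{minimally} $3$-rigid; and a minimally $d$-rigid graph on more than $d+1$ vertices is never globally $d$-rigid, because deleting any edge leaves a flexible graph. So in both parts the right reading of ``unit ball global'' is the weaker notion behind \Cref{lemma:uniqueness_edge_lengths_threshold1} (every equivalent \emph{unit ball} realisation is congruent), and the whole content is that the strict inequalities $\|\rho(u)-\rho(v)\|>1$ on non-edges can force congruence where the edge lengths alone cannot. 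This is possible precisely because the hypotheses forbid ``cheap'' flips: for $d=2$ the condition \emph{special} forbids reflecting a rigid proper subgraph across its two attachment points, and for $d=3$ the $4$-connectivity of a triangulation is exactly the absence of a separating triangle, forbidding the analogous reflections.

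For part (i) I would induct on $|V(G)|$. The base cases $G=K_2,\,K_3$ are globally $2$-rigid with no non-edges, so any generic realisation scaled until every edge has length $<1$ already works. The inductive step rests on a structural reduction — presumably the role of the cited \cite[Lemma 5.1]{garamvolgyi2020global} — that every larger special minimally $2$-rigid graph $G$ is obtained from a smaller one $G'$ by an inverse Henneberg move: deleting a degree-$2$ vertex, or an inverse $1$-extension (delete a degree-$3$ vertex, re-add one missing edge among its three neighbours). The $1$-extension cannot be avoided, since some special minimally $2$-rigid graphs, such as the triangular prism, have minimum degree $3$. Given by induction a generic realisation $\rho'$ of $G'$ that is a unit ball framework with all edges $<1$ and all non-edges $>1$, I would extend: for a degree-$2$ addition with new neighbours $x,y$, place the new vertex generically in the open set of points within distance $1$ of $x$ and $y$ and at distance $>1$ from every other vertex (we may choose $\rho'$ so that $x$ and $y$ are close enough for this set to be non-empty). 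Any unit ball realisation of $G$ equivalent to the result restricts to a unit ball realisation of $G'$ equivalent to $\rho'$, hence congruent to $\rho'$ by induction; the new vertex then lies at one of its two mirror positions across the line $xy$, and the strict inequalities single out the intended one. For an inverse $1$-extension one must first deform $\rho'$ to stretch the re-added edge past length $1$ before inserting the new vertex, which is the delicate step.

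Part (ii) follows the same template, but the recursion is now driven by a generation theorem for $4$-connected triangulations — every such graph on more than six vertices has an edge whose contraction is again a $4$-connected simple triangulation — so the induction bottoms out at the octahedron. For the octahedron I would start from the regular realisation — twelve edges of a common length in $(1/\sqrt2,\,1)$, three diagonals of length $>1$ — perturb to a generic unit ball realisation, and verify by hand that every framework with those twelve edge lengths and all three non-edges $>1$ is congruent to it. The hypothesis that $G$ is a unit ball graph in $\mathbb{R}^3$ is used here to guarantee enough room when re-inserting a contracted vertex, and it is genuinely needed: unlike in part (i), not every $4$-connected triangulation is a unit ball graph in $\mathbb{R}^3$.

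The step I expect to be the main obstacle, in both parts, is the inductive step across the edge-split / edge-contraction move. Combinatorially one must check the reduction stays inside the class (specialness for $d=2$; being a $4$-connected triangulation for $d=3$), which is not automatic. Geometrically, the naive ``restrict and apply induction'' argument fails here: after an inverse $1$-extension the graph $G$ no longer contains the edge $ab$ of $G'$ that the move removes, so the restriction of a framework of $G$ to $V(G')$ is only a realisation of the flexible graph $G'-ab$. One then needs a real argument — an equilibrium stress, or a direct geometric estimate — showing that among the finitely many frameworks sharing the edge lengths of $(G,\rho)$, every one other than $\rho$ drives some non-edge (possibly $ab$, or an edge at the new vertex) down to length at most $1$; this is where ``no cheap flips'' must be made quantitative. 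As a by-product, the construction for part (i) also certifies that special minimally $2$-rigid graphs are unit disk graphs, which is part of the claim.
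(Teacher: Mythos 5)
First, a point of order: the paper does not prove this statement. \Cref{theorem:uniqueness_edge_lengths_threshold2} is quoted from \cite{garamvolgyi2020global} (their Lemma 5.1, Theorem 4.6 and Subsection 7.3) and is used as a black box, so there is no in-paper proof to measure your attempt against. Your framing of what is really being claimed is accurate and worth crediting: a $4$-connected maximal planar graph has $3|V|-6$ edges and is generically isostatic in $\mathbb{R}^3$, so by Hendrickson's necessary condition (\Cref{lemma:uniqueness_given_distances_equations_model}) neither class can be globally rigid in the ordinary bar-joint sense; the content of the theorem is that the strict non-edge inequalities eliminate the spurious equivalent realisations, and the hypotheses (``special'', respectively $4$-connectedness, i.e.\ no separating triangle) are precisely anti-reflection conditions. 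Your observation that part (ii) genuinely needs the unit-ball hypothesis is also correct (a vertex of large degree in a triangulation forces an induced $K_{1,13}$, cf.\ \Cref{comment:existence_threshold_dim3}).

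As a proof, however, the proposal has two genuine gaps, both of which you partly flag yourself. First, the combinatorial reductions are asserted, not established: you need every special minimally $2$-rigid graph on at least $4$ vertices to admit an inverse Henneberg move landing back in the special class, and every $4$-connected triangulation on at least $7$ vertices to contract to a smaller $4$-connected triangulation that is still a unit ball graph. Neither closure property is automatic; for the special class the inverse $1$-extension involves a choice of which edge to reinstate, and it is unclear that some choice always avoids creating a non-complete proper rigid subgraph. Second, the geometric step is incomplete even in the easiest case. For a $0$-extension you claim ``the strict inequalities single out the intended'' position of the new degree-$2$ vertex $v$, but the mirror image of $v$ in the line through $\rho(x),\rho(y)$ is excluded only if it lies within distance $1$ of some non-neighbour of $v$; if $v$ is placed near that line, its reflection is near $v$, violates nothing, and yields a second non-congruent unit ball realisation. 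So the placement must simultaneously put $v$ at distance greater than $1$ from every non-neighbour while forcing its reflection to collide with one of them, a condition you neither verify nor show is always achievable. For the $1$-extension the situation is worse, as you concede: the restriction of an equivalent realisation to $V(G')$ only realises the flexible graph $G'-ab$, so induction does not apply, and the required argument that every non-congruent equivalent realisation drives some non-edge to length at most $1$ is exactly the content of the theorem and is left open. In short: a correct diagnosis and a reasonable plan, but not a proof; the gaps are closed (by different means) in \cite{garamvolgyi2020global}, which is why this paper imports the result rather than reproving it.
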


\section{Threshold and microscopy - inequalities and some equalities}\label{sec:inequalities_and_equalities}

Our next model considers both inequality and equality constraints. We first discuss realisability and then discuss the usefulness of inequality constraints for reducing the number of realisations. The section concludes with a brief analysis of uniqueness in the special case when all the constraints are equalities.

\subsection{Realisability}

Let us, momentarily, ignore the issue of dimensionality for the solutions to our constraint systems.
This allows us to use semidefinite programming to solve the realisability question.

\begin{proposition}\label{p:realisability}
    Let $G=(V,E)$ be a graph,
    let $\lambda :E \rightarrow \mathbb{R}_{\geq 0}$,
    and let $s : E \rightarrow \{ -1,0,+1 \}$.
    Then determining if there exists a realisation $\rho$ of $G$ in $\mathbb{R}^{|V|-1}$ such that
    \begin{equation*}
        \|\rho(v) - \rho(w) \|^2 \,
        \begin{cases}
            \leq \lambda(vw) &\text{if } s(vw) = -1,\\
            = \lambda(vw) &\text{if } s(vw) = 0,\\
            \geq \lambda(vw) &\text{if } s(vw) = +1
        \end{cases}
    \end{equation*}
    can be achieved using a semidefinite program.
\end{proposition}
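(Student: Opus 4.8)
The plan is to reformulate the realisability question as a feasibility problem for a Gram matrix. A realisation $\rho$ of $G$ in $\mathbb{R}^{|V|-1}$ is equivalent to a choice of $|V|$ vectors $\rho(v_1),\dots,\rho(v_n)$ in $\mathbb{R}^{n-1}$, where $n=|V|$. After translating so that (say) $\rho(v_n)=0$, the remaining $n-1$ vectors can be collected as the columns of a matrix $P\in\mathbb{R}^{(n-1)\times(n-1)}$, and the squared distances $\|\rho(v_i)-\rho(v_j)\|^2$ are affine-linear functions of the entries of the Gram matrix $Q=P^TP\succeq 0$. Conversely, any positive semidefinite matrix $Q\in\mathbb{R}^{(n-1)\times(n-1)}$ arises as $P^TP$ for some such $P$ (e.g.\ via a Cholesky or square-root factorisation), and this $P$ gives a realisation in $\mathbb{R}^{n-1}$. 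So there is a realisation with the prescribed distance constraints if and only if there is a positive semidefinite $Q$ of size $(n-1)\times(n-1)$ satisfying the corresponding linear constraints.

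The key steps, in order, are: (i) set up the Gram-matrix variable $Q\succeq 0$ and record the identity $\|\rho(v_i)-\rho(v_j)\|^2 = Q_{ii} + Q_{jj} - 2Q_{ij}$ (with the convention that row/column $n$ is omitted, so $Q_{ni}=Q_{in}=Q_{nn}=0$, i.e.\ $\|\rho(v_i)-\rho(v_n)\|^2 = Q_{ii}$); (ii) observe that for each edge $vw\in E$ the constraint in the statement is one linear inequality ($\leq$ or $\geq$) or one linear equality in the entries of $Q$, according to the value of $s(vw)$; (iii) conclude that the feasible set is the intersection of the positive semidefinite cone $\mathbb{S}^{n-1}_{\succeq 0}$ with finitely many affine half-spaces and hyperplanes, which is exactly the feasible region of a semidefinite program; (iv) note that a dummy scalar objective (say, minimise $0$, or introduce a slack variable to turn it into a strict SDP in standard form) makes this a bona fide SDP whose feasibility is decided by standard SDP solvers. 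Finally one remarks that the rank of an optimal (or any feasible) $Q$ is at most $n-1$ automatically, so the recovered realisation indeed lies in $\mathbb{R}^{n-1}$, and one may extract $\rho$ from $Q$ by any factorisation $Q = P^TP$.

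I do not expect a serious obstacle here; the proposition is essentially the standard Gram-matrix / Schoenberg reformulation of Euclidean distance geometry problems as semidefinite feasibility, and the only things to be careful about are bookkeeping: fixing the translational degree of freedom by pinning one vertex at the origin (so that $(n-1)\times(n-1)$, rather than $n\times n$, Gram matrices suffice), and checking that allowing $\lambda(vw)=0$ and $s(vw)\in\{-1,0,+1\}$ causes no degeneracy (it does not — a zero right-hand side is still a valid linear constraint). The mild point worth spelling out is the ``only if'' direction of the dimension claim: any feasible Gram matrix of size $n-1$ has rank at most $n-1$, hence factors through $\mathbb{R}^{n-1}$, so no realisation is lost by restricting to that dimension, and conversely every realisation in $\mathbb{R}^{n-1}$ yields such a Gram matrix; thus the SDP is feasible precisely when a realisation exists.
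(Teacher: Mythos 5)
Your proposal is correct and is essentially the paper's own argument: the paper parametrises by the (squared) distance matrix and passes to the Gram matrix via the linear map $\pi(M)_{ij}=M_{in}+M_{jn}-M_{ij}$, invoking Schoenberg's criterion, whereas you take the Gram matrix $Q\succeq 0$ as the variable directly and impose the linear constraints $Q_{ii}+Q_{jj}-2Q_{ij}\lessgtr\lambda(ij)$ — the same SDP up to a linear change of variables. No gap.
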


\begin{proof}
    Label the vertices of $G$ as $\{1,\ldots,n\}$.
    Let $\mathcal{M}$ be the closed convex set of all $n \times n$ symmetric matrices $M$ where $M_{ii}=0$ for all $1 \leq i \leq n$,
    and 
    \begin{equation*}
        M_{ij} \,
        \begin{cases}
            \leq \lambda(ij) &\text{if } s(ij) = -1,\\
            = \lambda(ij) &\text{if } s(ij) = 0,\\
            \geq \lambda(ij) &\text{if } s(ij) = +1
        \end{cases}
    \end{equation*}
    for each edge $ij \in E$.
    Let $\pi$ be the linear map from the space of $n \times n$ symmetric matrices to the space of $(n-1) \times (n-1)$ symmetric matrices where $\pi(M)_{ij} = M_{in} + M_{jn} - M_{ij}$ for each $1 \leq i,j \leq n-1$.
    The set $\pi(\mathcal{M})$ is now a closed convex set.
    By \citet[Theorem 1]{schoenberg1935remarks},
    there exists a realisation $\rho$ of $G$ in $\mathbb{R}^{|V|-1}$ satisfying our desired inequalities if and only if there exists a positive semidefinite matrix $A \in \pi(\mathcal{M})$.
    Hence, determining the existence of a suitable $\rho$ is equivalent to finding a feasible solution to a semidefinite program.    
\end{proof}

The two main issues with \Cref{p:realisability} are:
(i) it does not allow for strict inequalities,
and (ii) it does not guarantee the dimension of the realisation which is found.
The first issue can be (partially) resolved by replacing all strict inequalities with non-strict inequalities with additional epsilon terms.
The second issue is significantly more difficult to deal with.
The problem lies in that any equality/inequality for the dimension corresponds to a non-convex polynomial objective function for the optimisation problem.
One method for forcing the dimension of realisations to be contained within the desired dimension (e.g., 3-dimensional space) is to solve the semidefinite program given in \Cref{p:realisability}, and then `project' the realisation into the correct dimension. Although this solves the issue of dimension, it can create issues with the required edge-length bounds for the realisation. A follow-up local optimisation can then be performed using gradient descent to improve the output and (if the bounding constraints are not met) obtain a better approximation of a true solution. We analyse one variant of this method in more detail in \Cref{subsec:optform}.

\subsection{Uniqueness}

Due to the non-uniqueness of 3D reconstructions under the unit ball model, we explore the additional constraints that lead to a finite number or unique reconstructions. The unit ball graph model is characterised by a specific set of distance inequalities for edges. 
In this section, we extend our models, by allowing more general distance inequalities and incorporating equalities defined by distance constraints.
 
One case in which distance constraints arise is when we assume that the edge lengths between neighbouring beads are known, as depicted in \Cref{fig:backboneconnection}. We call the graph consisting of all beads on a chromosome and edges between neighbouring beads a \emph{backbone}. These edge lengths can be estimated based on the length of a chromosome and the resolution of bins that correspond to individual beads. This assumption was made by \citet{belyaeva2021identifying}.

\begin{figure}[ht]
    \centering
    \begin{tikzpicture}
        \node[vertex] (a3) at (0,0) {};
        \node[vertex] (a4) at (-0.10,1) {};
        \node[vertex] (a5) at (-0.2,2) {};
        \node[vertex] (a2) at (-0.1,-1) {};
        \node[vertex] (a1) at (-0.2,-2) {};
        
        \begin{scope}[xshift=4cm]
            \node[vertex] (b3) at (0,0) {};
            \node[vertex] (b4) at (0.1,1.1) {};
            \node[vertex] (b5) at (0.3,1.9) {};
            \node[vertex] (b2) at (-0.1,-1.3) {};
            \node[vertex] (b1) at (0.2,-2.3) {};
        \end{scope}
        
        \draw[backbone] (a1)to[bend left=5](a2) (a2)to[bend left=-5](a3) (a3)to[bend left=-5](a4) (a4)to[bend left=5](a5);
         \draw[backbone] (b1)to[bend left=10](b2) (b2)to[bend left=5](b3) (b3)to[bend left=-5](b4) (b4)to[bend left=5](b5);
        
        \foreach \x in {1,2,3,4,5}
        {
            \draw[edge] (a3)edge(b\x);
        }
        \foreach \x in {1,2,4,5}
        {
            \draw[edge] (b3)edge(a\x);
        }
        \foreach \x in {2,3,4}
        {
            \draw[edge] (a2)edge(b\x);
            \draw[edge] (a4)edge(b\x);
        }
        
        \draw[backbone] (b5) .. controls ($(b5)+(0.15,0.5)$) .. ($(b5)+(0,1)$);
        \draw[backbone] (b1) .. controls ($(b1)+(0.15,-0.25)$) .. ($(b1)+(0.3,-1)$);
        \draw[backbone] (a5) .. controls ($(a5)+(-0.05,0.5)$) .. ($(a5)+(0,1)$);
        \draw[backbone] (a1) .. controls ($(a1)+(-0.05,-0.5)$) .. ($(a1)+(0,-1)$);
    \end{tikzpicture}
    \caption{Two backbones with additional edges corresponding to contacts between pairs of loci. Lengths of the red edges are assumed to be known while grey edges are only known to be smaller than a threshold.}
    \label{fig:backboneconnection}
\end{figure}

Another situation that provides edge lengths arises when in addition to Hi-C data, we also have microscopy data, e.g.\ from Fluorescence In Situ Hybridisation (FISH)~\citep{amann1990fluorescent} experiments. While FISH is limited to studying a specific region of genome, these two types of data (Hi-C and microscopy data) complement each other and, in certain cases, can yield unique reconstructions.  
This approach was proposed by \citet{abbas2019integrating}.

\begin{proposition}\label{prop:finite}
    Let $X =\{x_1,\ldots,x_n\}$ be a generic set of points in $\mathbb{R}^d$ that satisfies a set of distance constraints and distance inequalities between pairs of points.
    Further, suppose that no distance inequality is satisfied as an equality,
    e.g.\ if our distance inequality states that $\|x_i-x_j\| \leq r$ then $\|x_i-x_j\|<r$.
    Then there exist finitely many sets of points satisfying the same set of distance constraints and strict distance inequalities (modulo isometries) if and only if there exist finitely many sets of points satisfying the same set of distance constraints (modulo isometries).
\end{proposition}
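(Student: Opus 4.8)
The plan is to treat the two implications separately: the reverse implication is immediate, and the forward one, which I would argue contrapositively, carries all the content. Throughout, write $G_=$ for the graph on $\{x_1,\dots,x_n\}$ whose edges are the pairs carrying an equality constraint, let $Z=f_{G_=}^{-1}(f_{G_=}(X))\subseteq\mathbb{R}^{dn}$ be the set of configurations meeting all the equality constraints, and let $U\subseteq\mathbb{R}^{dn}$ be the \emph{open} set of configurations at which every one of the strict inequalities holds; by hypothesis $X\in U$. Then the configurations satisfying the equalities together with the strict inequalities are exactly $Z\cap U$. For the direction ``finitely many solutions to the equalities $\Rightarrow$ finitely many solutions to the equalities and strict inequalities'', there is nothing to do: $Z\cap U\subseteq Z$, so a subset of a set with finitely many isometry classes again has finitely many isometry classes.

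For the converse I would argue by contraposition: assume $Z$ has infinitely many isometry classes and deduce the same for $Z\cap U$. Because $X$ is generic, the framework $(G_=,X)$ is rigid if and only if $Z$ consists of finitely many isometry classes — this is the standard fact that for generic frameworks, rigidity, infinitesimal rigidity, and finiteness of the set of equivalent frameworks modulo congruence all coincide (\Cref{lem:AR}, together with the observation that when $(G_=,X)$ is generic and rigid the fibre $Z$ has dimension $\binom{d+1}{2}$, forcing it to be a finite union of congruence orbits). Hence $(G_=,X)$ is flexible, so there is a non-trivial continuous motion $x$ of $(G_=,X)$ with $x(0)=X$ and $x(t)\in Z$ for every $t$. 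Since each coordinate of $x$ is continuous and $U$ is open, there is $\varepsilon>0$ with $x(t)\in U$, hence $x(t)\in Z\cap U$, for all $t\in[0,\varepsilon)$. Finally, the arc $\{x(t):0\le t<\varepsilon\}$ meets infinitely many isometry classes: if it met only finitely many, then, being connected and contained in the union of those finitely many Euclidean-isometry orbits — which are pairwise disjoint and closed, since the isometry group acts properly on affinely spanning configurations and every configuration near the generic $X$ is of this type — it would lie in a single orbit, making $x$ a trivial motion and contradicting flexibility. Thus $Z\cap U$ has infinitely many isometry classes, which is the contrapositive of what we wanted.

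The step I expect to be the main obstacle is the one place genericity is genuinely needed: passing from ``$Z$ is infinite modulo isometries'' (a global statement about the solution variety) to ``$(G_=,X)$ is flexible at $X$ itself'' (a local statement). Without genericity this can fail, since $X$ could be an isolated rigid solution coexisting with a positive-dimensional family of other solutions sitting elsewhere in $Z$, conceivably entirely outside $U$; ruling this out amounts to invoking that a generic rigid framework has only finitely many equivalent realisations, i.e.\ that its solution variety is a finite union of congruence orbits. Everything else — persistence of the strict inequalities under a sufficiently small perturbation, and the observation that a connected arc not contained in a single congruence orbit meets infinitely many of them — is routine.
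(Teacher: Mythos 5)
Your proof is correct in substance but takes a different route from the paper's. The paper picks a solution $\rho$ satisfying the strict inequalities, shows that the fibre $f_{G_=}^{-1}(f_{G_=}(\rho))$ is a smooth manifold (Sard plus Tarski--Seidenberg elimination, via the cited regular-value lemma), passes to the local quotient by the isometry group near $\rho$, argues that this quotient has positive dimension (using compactness when $G_=$ is connected, and a translation argument otherwise), and concludes that the resulting uncountable family of nearby non-congruent solutions still satisfies the strict inequalities. You instead package the same genericity input into the standard black box ``a generic rigid framework has only finitely many equivalent realisations modulo congruence'', deduce flexibility of $(G_=,X)$, and then run a continuous motion inside the open set $U$ cut out by the strict inequalities. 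Both arguments turn on the same pivot --- genericity is what converts the global hypothesis ``$Z$ is infinite modulo isometries'' into a local statement at $X$ itself --- and you correctly identify this as the crux. Your version is more modular and shorter; the paper's is more self-contained and gives the slightly stronger conclusion that the nearby solutions form a positive-dimensional family.

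One step needs tightening. At the end you argue that if the arc $\{x(t):0\le t<\varepsilon\}$ lay in a single congruence orbit then ``$x$ would be a trivial motion, contradicting flexibility''. With the paper's definition, a non-trivial motion is only required to change some pairwise distance at \emph{some} $t\neq 0$; a priori it could be trivial on the entire initial segment $[0,\varepsilon)$ and only leave the orbit of $X$ after exiting $U$, so triviality of the restriction is not by itself a contradiction. This is easily repaired: either invoke the sharper form of the Asimow--Roth flexing theorem, which supplies a path with $x(t)$ non-congruent to $X$ for all sufficiently small $t>0$, after which your connectedness-plus-closed-orbits argument goes through verbatim; or bypass the motion entirely by using local flexibility to choose $\rho_k\to X$ in $Z$ with each $\rho_k$ not congruent to $X$ --- eventually $\rho_k\in U$, and if these occupied only finitely many congruence classes then $X$ would lie in the closure of one of those (closed, since the configurations near the generic $X$ affinely span) orbits and hence be congruent to some $\rho_k$, a contradiction. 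Everything else is fine.
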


\begin{remark}
    Recall from \Cref{sec:rigidity} that, if there exist only finitely many sets of points satisfying the same set of distance constraints (modulo isometries), then we say that the corresponding bar-joint framework is rigid. Further, a unique solution means that the bar-joint framework is globally rigid. 
\end{remark}

\begin{proof}[{\textbf{Proof of \Cref{prop:finite}}}]
    If there exist finitely many sets of points satisfying the same set of distance constraints (modulo isometries), then there are also finitely many sets of points satisfying the same set of distance constraints and strict distance inequalities (modulo isometries). For the converse direction, assume that there are infinitely many sets of points satisfying the same set of distance constraints (modulo isometries). Select one set that is also satisfying the inequality constraints and consider the corresponding generic bar-joint framework $(G,\rho)$.
    
    Let $f_G^{-1}(f_G(\rho))$ denote the configuration space, that is the set of all $q\in \mathbb{R}^{d|V|}$ such that $(G,\rho)$ and $(G,q)$ have the same edge lengths. 
    The fact that $f_G^{-1}(f_G(\rho))$ is a smooth manifold is non-trivial but it is well-known in rigidity theory. This follows from a basic lemma in differential topology \citep[e.g.][p. 11, Lemma 1]{Milnor1965} after proving that $f_G(\rho)$ is a regular value of $f_G$ for any generic $\rho$, where $f_G$ is the rigidity map. Establishing that $f_G(\rho)$ is a regular value is an application of both Sard's theorem and Tarski-Seidenberg elimination theory \citep[see, e.g.,][Lemma 11]{JMN14}. 

    Let $\sim$ be the equivalence relation given by congruences of realisations,
    and let $U \subset \mathbb{R}^{d|V|}$ be an open neighbourhood of $\rho$.
    We claim that, for sufficiently small $U$, the set $f_G^{-1}(f_G(\rho)) \cap U/\!\!\sim$ is a smooth manifold. 
    Since the Euclidean isometries form a Lie group that acts freely on the set of realisations in general position\footnote{In that for every isometry $g$ and every general position realisation $\rho$, the realisation $g \circ \rho$ is equal to $\rho$ if and only if $g$ is the identity map. Every generic realisation is in general position, and the set of general position realisations forms an open dense set.},
    $f_G^{-1}(f_G(\rho)) \cap U/\!\!\sim$ is also a smooth manifold for sufficiently small $U$.
    We now note that $f_G^{-1}(f_G(\rho))\cap U/\!\!\sim$ has positive dimension:
    either $G$ is connected and so $f_G^{-1}(f_G(\rho))\cap U/\!\!\sim$ is compact, and so must have positive dimension since it contains infinitely many points, or $G$ is disconnected,
    in which it is a simple exercise to show that $f_G^{-1}(f_G(\rho))\cap U/\!\!\sim$ has positive dimension stemming from translating exactly one connected component of the framework $(G, \rho)$. 
    
    Since the framework satisfies the inequalities strictly, all perturbations of the framework in a sufficiently small open neighbourhood of $\rho$ also satisfy the inequalities. Since $f_G^{-1}(f_G(\rho))\cap U/\!\!\sim$ has positive dimension, there are infinitely many edge length preserving perturbations modulo isometries, which completes the proof.
\end{proof}

While inequalities alone do not determine whether there is a finite number of reconstructions, they can reduce the solution space from a finite number to exactly one. An elementary example is given by \Cref{fig:rig:min} which as depicted contains only equalities which leads to exactly two solutions (modulo isometries). However, if we introduce an inequality constraining the distance between the two non-adjacent vertices, then we can make the solution unique.

Whether a set of distance constraints characterises a finite set of $|V|$ points can be checked with high probability by examining the rank of the corresponding rigidity matrix $R(G,\rho)$. If the rank is not $d|V|-\binom{d+1}{2}$ then the constraints do not locally characterise the points up to rigid transformations (\Cref{lem:AR}). On the other hand, if the rank is $d|V|-\binom{d+1}{2}$ then the rank is unchanged if we move to a generic framework $(G,\rho')$, and so \Cref{lem:AR} shows that the constraints do characterise the points.

\subsection{Equality constraints}

A special case of this setting is when we have only equality constraints, i.e.\ we have data from microscopy experiments but no Hi-C data. In this case, the uniqueness of a solution is equivalent to global rigidity (and finiteness is equivalent to rigidity) of bar-joint frameworks. Here, beyond the material we already presented in \Cref{sec:rigidity}, there is a large literature that one can exploit \citep[see][inter alia]{AsimowRoth,Con05,GortlerHealyThurston,Hen,JacksonJordan}.
In particular let us note the following extension of \Cref{lem:maxwell} to global rigidity which is a consequence of a theorem of Hendrickson.

\begin{lemma}[\citet{Hen}] \label{lemma:uniqueness_given_distances_equations_model}
    Let $G$ be a graph on at least $d+2$ vertices.
    If $G$ is globally $d$-rigid, then $|E|\geq d|V|-\binom{d+1}{2}+1$. 
\end{lemma}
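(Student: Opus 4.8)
The plan is to derive this from Hendrickson's necessary condition for global rigidity, namely that a globally $d$-rigid graph on at least $d+2$ vertices must be redundantly $d$-rigid (that is, $G-e$ is $d$-rigid for every edge $e$). I would first recall this statement, citing \cite{Hen}, and note that redundant $d$-rigidity immediately implies that $G$ itself is $d$-rigid and moreover contains no edge whose removal destroys $d$-rigidity.

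The core of the argument is then a counting step. Since $G$ is $d$-rigid on $n = |V| \geq d+2$ vertices, it contains a spanning minimally $d$-rigid subgraph $H$; by \Cref{lem:maxwell} (applied to the $d$-independent graph $H$, which has at least $d$ vertices), we have $|E(H)| = d|V| - \binom{d+1}{2}$. Now pick any edge $e \in E(H)$. Since $G$ is redundantly $d$-rigid, $G - e$ is $d$-rigid, so $G-e$ contains a spanning $d$-rigid subgraph and hence has at least $d|V| - \binom{d+1}{2}$ edges. But $E(G-e) = E(G) \setminus \{e\}$, so $|E(G)| - 1 \geq d|V| - \binom{d+1}{2}$, giving $|E(G)| \geq d|V| - \binom{d+1}{2} + 1$, as required. (I should make sure $H$ is nonempty so that such an $e$ exists, which holds because $n \geq d+2 \geq 2$ forces $d|V| - \binom{d+1}{2} \geq 1$; alternatively just observe $G$ itself has an edge since it is $d$-rigid on $\geq d+1$ vertices.)

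The main obstacle — really the only nontrivial input — is justifying that globally $d$-rigid implies redundantly $d$-rigid, i.e.\ quoting Hendrickson's theorem correctly and in the generic form we need. Since the excerpt permits us to assume results from the literature, I would simply invoke \cite{Hen} for this; the rest is elementary edge counting built on \Cref{lem:maxwell}. One small point to be careful about: \Cref{lem:maxwell} as stated gives $|E| \leq d|V| - \binom{d+1}{2}$ for $d$-independent graphs and the rigidity conclusion in the equality case, so to extract ``a $d$-rigid graph on $\geq d$ vertices has at least $d|V| - \binom{d+1}{2}$ edges'' I pass to a minimally $d$-rigid (equivalently, spanning $d$-independent and $d$-rigid) subgraph and apply the lemma there; I would state this minimal-subgraph reduction explicitly rather than treat it as obvious.

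Alternatively, one can phrase the whole thing without choosing subgraphs: a $d$-rigid framework on $n \geq d+1$ vertices has $\rank R(G,\rho) = d n - \binom{d+1}{2}$, and since rank is at most the number of rows, $|E| \geq dn - \binom{d+1}{2}$; applying this to the $d$-rigid graph $G-e$ (using $n \geq d+2 > d+1$) gives $|E(G)| - 1 \geq dn - \binom{d+1}{2}$ directly. I would likely present this rank-based version as it avoids the subgraph extraction, and only fall back on the subgraph argument if a reader prefers the purely combinatorial route.
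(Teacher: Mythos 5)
Your proposal is correct and matches the paper's intent: the paper states this lemma as ``a consequence of a theorem of Hendrickson'' and simply cites \cite{Hen}, with the implicit argument being exactly what you spell out — global $d$-rigidity on at least $d+2$ vertices implies redundant $d$-rigidity, and then the rank bound $|E(G-e)| \geq \rank R(G-e,\rho) = d|V|-\binom{d+1}{2}$ (via \Cref{lem:AR}, since $|V|\geq d+2 > d+1$) yields the count. Both your rank-based version and the spanning minimally rigid subgraph version are valid fillings of the same outline.
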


There exists a random polynomial time algorithm for testing whether a graph is $d$-rigid. This algorithm \citep{JHpebble} follows from \Cref{lem:AR}. A similar situation exists for global $d$-rigidity, but we need another concept first.

An \emph{equilibrium stress} $\omega$ of a framework $(G,\rho)$ is a vector in the cokernel of $R(G,\rho)$. The \emph{stress matrix} $\Omega$ is the $|V|\times |V|$ symmetric matrix whose off-diagonal $ij$-entry is 0 if $ij$ is not an edge and $-\omega_{ij}$ if $ij$ is an edge, and the diagonal entry in row $i$ is the sum $\sum_{j}\omega_{ij}$. In other words, $\Omega$ is the Laplacian matrix of $G$ weighted by $\omega$.

\begin{theorem}[\citet{Con05}, \citet{GortlerHealyThurston}]\label{thm:ght}
    A graph $G$ on at least $d+2$ vertices is globally $d$-rigid if and only if $\rank(\Omega)=|V|-d-1$.
\end{theorem}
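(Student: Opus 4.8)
As set up in \Cref{sec:rigidity}, the claim is really about a generic framework: ``$\rank(\Omega)=|V|-d-1$'' means that some generic framework $(G,\rho)$ (equivalently, every generic one) carries an equilibrium stress $\omega$ whose stress matrix $\Omega$ has this rank. Write $n=|V|$. Two observations frame the whole proof. First, for any equilibrium stress $\omega$ the equilibrium condition $\sum_j\omega_{ij}(\rho(i)-\rho(j))=0$ says exactly that $\Omega P=0$, where $P$ is the $n\times d$ matrix with $i$-th row $\rho(i)^\top$; together with $\Omega\mathbf 1=0$ this shows $\ker\Omega$ contains the $(d+1)$-dimensional space spanned by $\mathbf 1$ and the coordinate functions of $\rho$ (using that a generic $\rho$ affinely spans $\mathbb{R}^d$), so $\rank\Omega\le n-d-1$ always; the theorem asks exactly when this bound is attained. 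Second, $\omega$ ranges over the cokernel of $R(G,\rho)$, a space of dimension $|E|-\rank R(G,\rho)$. I would prove the two implications separately; both are substantial, and I outline only the architecture of each.

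\textbf{Sufficiency.} Suppose $(G,\rho)$ is generic with an equilibrium stress $\omega$ and $\rank\Omega=n-d-1$, and let $(G,\rho')$ be an equivalent framework in $\mathbb{R}^d$; the goal is to show $\rho'$ is congruent to $\rho$. The first step is to show $\rho'$ is an \emph{affine} image of $\rho$. Since $\omega$ is supported on edges and $(G,\rho')$ has the same edge lengths, the stress energy $E_\omega(q):=\sum_{ij\in E}\omega_{ij}\|q(i)-q(j)\|^2$ satisfies $E_\omega(\rho')=E_\omega(\rho)=0$, the last equality by equilibrium. Writing $E_\omega(q)=\tr(Q\,\Omega\,Q^\top)$ with $Q$ the $d\times n$ coordinate matrix of $q$, the crucial input — and the one genericity buys, via the reduction that a generic framework with a rank-$(n-d-1)$ stress admits one whose $\Omega$ is positive (or negative) semidefinite — is that $E_\omega$ is a definite form on each coordinate row modulo $\ker\Omega$; hence $E_\omega(\rho')=0$ forces every coordinate function of $\rho'$ into $\ker\Omega=\mathrm{span}(\mathbf 1,x_1,\dots,x_d)$, i.e.\ $\rho'=A\rho+b$. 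The second step upgrades ``affine'' to ``congruent'': since $\rho'=A\rho+b$ preserves all edge lengths, the symmetric matrix $Q_0:=A^\top A-I_d$ satisfies $(\rho(i)-\rho(j))^\top Q_0(\rho(i)-\rho(j))=0$ for every $ij\in E$; because $G$ is $d$-rigid (a consequence of the stress hypothesis) and $\rho$ is generic, these edge equations have only the solution $Q_0=0$ in $\mathrm{Sym}(d)$, so $A$ is orthogonal and $\rho'$ is congruent to $\rho$.

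\textbf{Necessity.} Now suppose $(G,\rho)$ is generic and globally $d$-rigid. Then it is $d$-rigid, so by \Cref{lem:AR} $\rank R(G,\rho)=dn-\binom{d+1}{2}$, and by \Cref{lemma:uniqueness_given_distances_equations_model} $|E|\ge dn-\binom{d+1}{2}+1$; hence the stress space $\operatorname{coker}R(G,\rho)$ is nonzero. What must be shown is that some stress attains $\rank\Omega=n-d-1$. I would argue by contradiction, following Gortler--Healy--Thurston: assume every equilibrium stress of $(G,\rho)$ has $\rank\Omega\le n-d-2$. Complexify, and use a ``pinning'' (tie-down) construction to replace the quotient by the Euclidean group with a genuine coordinate slice, so that global $d$-rigidity becomes the statement that a certain generic fibre of the pinned rigidity map is a single point. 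In this picture the stress matrices are identified with a slice of the normal space to the complex measurement variety at a generic smooth point, and a deficiency in the maximal stress rank forces this variety to be ``too flat'', so the generic fibre — even after pinning — is positive-dimensional or disconnected; one then extracts from it a genuine \emph{real} framework equivalent but not congruent to $(G,\rho)$, contradicting global rigidity.

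\textbf{Main obstacle.} The heart of the argument — essentially all of the difficulty — is this last step of the necessity direction: passing from ranks of complex Jacobians and dimensions of complex fibres back to a real counterexample. This needs control of the real points of the complex configuration variety, and in particular its irreducibility, which is exactly the technical core of \cite{GortlerHealyThurston}. By comparison, the sufficiency direction (Connelly) is comparatively elementary once the reduction to a semidefinite stress matrix is in hand.
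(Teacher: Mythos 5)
First, note that the paper does not prove \Cref{thm:ght} --- it is quoted from Connelly and from Gortler--Healy--Thurston --- so there is no in-paper argument to compare against; I evaluate your outline on its own terms. Your reading of the statement (as a statement about the maximal rank of a stress matrix of a generic framework), the observation that $\rank\Omega\le n-d-1$ always holds, and the architecture you sketch for the necessity direction are all a fair summary of the known strategy.

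The genuine gap is in the sufficiency direction. You justify the key step --- that $E_\omega(\rho')=0$ forces the coordinate functions of $\rho'$ into $\ker\Omega$ --- by asserting that genericity lets one replace $\omega$ by a stress whose matrix is positive (or negative) semidefinite of rank $n-d-1$. That reduction is not available: a generic framework with a semidefinite maximal-rank stress is \emph{universally} rigid (Connelly's super stability), and generic global rigidity is strictly weaker than universal rigidity. A concrete witness is $C_4$ on the line: it is generically globally $1$-rigid with a rank-$2$ stress matrix, yet it is not universally rigid (it flexes in the plane), so that stress matrix cannot be semidefinite. Without semidefiniteness, $E_\omega(\rho')=\sum_k (\rho'_k)^\top\Omega\,\rho'_k=0$ does not imply $\Omega\rho'_k=0$. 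Connelly's actual argument is different: one shows that $\omega$ is an equilibrium stress of the equivalent framework $(G,\rho')$ as well. This follows because $f_G(\rho)=f_G(\rho')$ is a generic, hence smooth, point of the measurement variety $\overline{f_G(\mathbb{R}^{dn})}$, the stress space of $\rho$ is the annihilator of the tangent space of that variety there, and the image of $\diff f_G|_{\rho'}$ lies in that same tangent space; hence $\Omega$ annihilates the coordinate columns of $\rho'$ directly, giving $\rho'=A\rho+b$. Your second step (affine implies congruent, via the vanishing of the quadratic form $A^\top A-I_d$ on the edge directions of a generic rigid framework) is then fine. So the skeleton is right, but the one sentence doing the real work in the Connelly direction is false as stated and must be replaced by the measurement-variety/tangent-space argument.
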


Efficient deterministic algorithms are only known for cases where $d\leq 2$, based on a combinatorial characterisation by \citet{JacksonJordan} that extends the pebble game algorithm introduced by \citet{JHpebble}. However, an alternative approach is presented by \citet{GortlerHealyThurston}, which provides a polynomial-time randomised algorithm in arbitrary dimensions leveraging \Cref{thm:ght}.

By taking on board information about the backbone of the chromosome, one can infer more detailed rigidity theoretic information.
To illustrate this, we consider graph powers.
The $k$-th \emph{power} of a graph $G$, denoted $G^k$, is obtained from $G$ by adding a new edge $uv$ for all non-adjacent vertex pairs $u, v$ of $G$ with distance at most $k$ in $G$.

The purpose of the next result about graph powers is to illustrate that basic tools from rigidity theory can provide precise information in cases when the locality of the backbone vertices forces enough edges.
Since it requires no extra work, we present the result in general dimension.

\begin{lemma} \label{lemma:uniqueness_edge_lengths_threshold3}
    Let $G=(V,E)$ be a path. Then $G^{d+1}$ is globally $d$-rigid. 
\end{lemma}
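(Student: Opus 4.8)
The plan is to realise $G^{d+1}$ as a convenient explicit framework and then apply Maxwell-type counting together with global-rigidity criteria. First I would label the path $G$ as $v_1 v_2 \cdots v_n$ and count the edges of $G^{d+1}$: vertex $v_i$ is joined to $v_j$ whenever $|i-j|\le d+1$, so $G^{d+1}$ is exactly the graph on $\{v_1,\dots,v_n\}$ with all such edges. A direct count gives $|E(G^{d+1})| = (d+1)n - \binom{d+2}{2}$ for $n \ge d+2$ (each of the first few vertices has a truncated neighbourhood; summing the deficiencies yields the binomial term). Note this is one more than the Maxwell bound $d n - \binom{d+1}{2}$ exactly when... actually it is strictly larger, which is consistent with \Cref{lemma:uniqueness_given_distances_equations_model}; the surplus is what should give global rigidity.

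Next I would exhibit a good realisation. The natural choice is to place $\rho(v_i)$ on the moment curve, i.e.\ $\rho(v_i) = (i, i^2, \dots, i^d) \in \mathbb{R}^d$, or any generic realisation. The key structural observation is that the closed neighbourhood of consecutive blocks of $d+1$ vertices induces a complete graph $K_{d+2}$ on $\{v_i, v_{i+1}, \dots, v_{i+d+1}\}$ for each $i$, and consecutive such cliques overlap in $d+1$ vertices. Thus $G^{d+1}$ is built by gluing copies of $K_{d+2}$ along $K_{d+1}$'s. Since $K_{d+2}$ is globally $d$-rigid, and gluing two globally $d$-rigid graphs along at least $d+1$ vertices (which in general position affinely span $\mathbb{R}^d$) preserves global $d$-rigidity, an induction on $n$ finishes the argument: the base case $n = d+2$ is $K_{d+2}$, and the inductive step adds $v_{n}$, whose neighbourhood $\{v_{n-1}, \dots, v_{n-d-1}\}$ together with $v_n$ forms a $K_{d+2}$ sharing the $K_{d+1}$ on $\{v_{n-d-1},\dots,v_{n-1}\}$ with the already-globally-rigid $G[v_1,\dots,v_{n-1}]^{d+1}$.

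The main obstacle is justifying the gluing lemma cleanly: that if $H_1$ and $H_2$ are globally $d$-rigid graphs sharing a vertex set $W$ with $|W| \ge d+1$, then $H_1 \cup H_2$ is globally $d$-rigid. This is a known result (it follows, for instance, from the stress-matrix characterisation \Cref{thm:ght}, by combining equilibrium stresses supported on each piece, or from Connelly's results on global rigidity), but I would need to cite it or give the short stress-matrix argument: take generic realisations, produce stress matrices $\Omega_1, \Omega_2$ of maximal rank $|V(H_i)| - d - 1$, pad them to the full vertex set and add them; a rank computation using the fact that $W$ affinely spans $\mathbb{R}^d$ shows the sum has rank $|V| - d - 1$, and then \Cref{thm:ght} applies. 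An alternative, avoiding the gluing lemma entirely, is to directly write down an equilibrium stress of the moment-curve realisation of $G^{d+1}$ whose stress matrix has the right rank $n - d - 1$; the band structure of the stress matrix (nonzero only within bandwidth $d+1$) combined with a nondegeneracy argument coming from Vandermonde determinants would give the rank bound. I expect the gluing route to be shorter to write, so I would take that, citing \cite{Con05} or \cite{JacksonJordan} for the coherence/gluing of globally rigid graphs.
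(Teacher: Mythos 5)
Your proposal is correct and follows essentially the same route as the paper: an induction starting from $K_{d+2}$ and adding one vertex at a time joined to the $d+1$ preceding vertices. The paper justifies the inductive step more directly — a degree-$(d+1)$ vertex addition preserves global $d$-rigidity whenever the new vertex's neighbours affinely span $\mathbb{R}^d$ (trilateration), which is simpler than invoking a stress-matrix gluing lemma — but the edge count, the moment-curve realisation, and the Vandermonde machinery are all unnecessary for this argument.
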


\begin{proof}
    If $|V|\leq d+1$ then $G^{d+1}$ is complete, and hence globally $d$-rigid. Suppose $|V|\geq d+2$ and note that 
    the complete graph $K_{d+2}$ is globally $d$-rigid. Observe that $G^{d+1}$ is obtained from $K_{d+2}$ by a sequence of degree $d+1$ vertex additions.
    Adding a vertex of degree $d+1$ preserves global rigidity as long as the $d+1$ neighbours affinely span $\mathbb{R}^d$ (which trivially holds in the generic case). So, the proof follows by an elementary induction argument. 
\end{proof}

Returning to the question of realisability, now for systems with only equality constraints, we remark that reducing from a high dimensional realisation, as in \Cref{p:realisability}, i.e.\ through rank relaxation makes the semidefinite optimisation problem non-convex and, in the worst case, NP-hard. However, this \emph{distance geometry problem} is well known and well studied with many special cases of applied interest being manageable in practice. We direct the interested reader to \citet{liberti2014euclidean, cassioli2015algorithm} and the references therein for details.

\section{Sphere packing models}\label{sec:Penny}

In this section, we consider a simplification of the unit ball graph model, where contacts are defined by pairwise distances of one, and non-contacts are characterised by distances greater than one.  Although this model imposes greater restrictions, it offers the advantage of producing unique reconstructions in certain cases. The model is defined by distance equality and inequality constraints, so it is a special case of the model in \Cref{sec:inequalities_and_equalities}.

\begin{definition}\label{def:penny}
    A \emph{penny graph}\footnote{Penny graphs are also known as unit coin graphs, minimum-distance graphs, smallest-distance graphs and closest-pairs graphs in the literature.} is the contact graph of a collection of unit discs with non-overlapping interiors, i.e.\ there exists a realisation $\rho\colon V\rightarrow \R^2$ such that
    \begin{align*}
        \edge{v}{w}\in E &\Longrightarrow ||\rho(v)-\rho(w)||=1,\\
        \edge{v}{w}\not\in E &\Longrightarrow ||\rho(v)-\rho(w)||>1;
    \end{align*}
    such a realisation $\rho$ is said to be a \emph{penny graph realisation} of $G$.
    \emph{Marble graphs} and \emph{marble graph realisations} are the 3D analogues of penny graphs and penny graph realisations, respectively.
\end{definition}
\Cref{fig:penny} illustrates an example and a non-example of penny realisations.

\begin{figure}[ht]
    \centering
    \begin{tikzpicture}[scale=1.2]
        \node[vertex] (a) at (0,0) {};
        \node[vertex] (b) at (1,0) {};
        \node[vertex,rotate around={60:(a)}] (d) at (b) {};
        \node[vertex] (c) at ($(d)+(1,0)$) {};
        \begin{scope}[on background layer]
            \foreach\n in {a,b,c,d}{\draw[penny] (\n) circle[radius=0.5cm];}
        \end{scope}
        \draw[edge] (a)edge(b) (b)edge(c) (c)edge(d) (d)edge(a) (b)edge(d);
        \node[colG] at ($(a)!0.5!(b)-(0,0.75)$) {\faCheck};
    \end{tikzpicture}
    \qquad
    \begin{tikzpicture}[scale=1.2]
        \node[vertex] (a) at (0,0) {};
        \node[vertex] (b) at (1,0) {};
        \node[vertex,rotate around={50:(a)}] (d) at (b) {};
        \begin{scope}[on background layer]
            \foreach\n in {a,b,d}{\draw[penny,opacity=0.5] (\n) circle[radius=0.5cm];}
        \end{scope}
        \draw[edge] (a)edge(b) (d)edge(a);
        \node[colR] at ($(a)!0.5!(b)-(0,0.75)$) {\faTimes};
    \end{tikzpicture}
    \caption{A penny realisation of a graph (left) and a a realisation of another graph that does not fulfil the penny condition since two vertices are too close (right).}
    \label{fig:penny}
\end{figure}

Let us first focus on penny graphs.
As penny graphs are contact graphs of disc packings,
every penny graph is planar.
The converse of this statement is not true,
as can be seen by the Moser spindle (\Cref{fig:moser}) which is not realisable as a penny graph. The next lemma lists some basic properties of penny graphs.

\begin{figure}[ht]
    \centering
    \begin{tikzpicture}[scale=1.2]
        \newcommand\w{40}
        \newcommand\ww{40}
        \node[vertex] (t) at (0,0) {};
        \node[vertex] (a1) at ($(t)+(-\w:1)$) {};
        \node[vertex] (a2) at ($(t)+(-\w-\ww:1)$) {};
        \node[vertex] (b1) at ($(t)+(180+\w:1)$) {};
        \node[vertex] (b2) at ($(t)+(180+\w+\ww:1)$) {};
        \node[vertex] (a3) at ($(a1)+(a2)-(t)$) {};
        \node[vertex] (b3) at ($(b1)+(b2)-(t)$) {};
        \draw[edge] (t)edge(a1) (t)edge(a2) (t)edge(b1) (t)edge(b2) (a1)edge(a2) (a1)edge(a3) (a2)edge(a3) (b1)edge(b2) (b1)edge(b3) (b2)edge(b3) (a3)edge(b3);
    \end{tikzpicture}
    \caption{The Moser spindle is an example of a graph that is not realisable as a penny graph.}
    \label{fig:moser}
\end{figure}

\begin{lemma}
    Let $G=(V,E)$ be a penny graph on $n$ vertices. Then:
    \begin{enumerate}
        \item\label{it:penny:bound} $|E|\leq \left\lfloor 3n-\sqrt{12n-3}\right\rfloor$;
        \item\label{it:penny:vdeg} the minimum degree is at most 3 and the maximum degree is at most 6;
        \item\label{it:penny:clique} a maximal clique has at most 3 vertices;
        \item\label{it:penny:chrom} the chromatic number is at most 4.
    \end{enumerate}
\end{lemma}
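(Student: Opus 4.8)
The plan is to treat the four items largely independently, drawing on the geometry of disc packings. For item \eqref{it:penny:bound}, I would invoke the classical Harborth bound on the number of edges in a penny graph: since every edge of a penny graph corresponds to a pair of unit discs in contact, one can bound the number of contacts in a packing of $n$ unit discs by a careful area/perimeter argument (comparing the area of the convex hull of the disc centres, inflated by the disc radius, against $n$ times the area of a disc, and accounting for boundary discs). This is a known extremal result of Harborth, and I would cite it rather than reprove it; the statement $|E|\le \lfloor 3n-\sqrt{12n-3}\rfloor$ is exactly Harborth's theorem. So item \eqref{it:penny:bound} reduces to a literature citation.

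For item \eqref{it:penny:vdeg}, the maximum degree bound of $6$ follows immediately from the kissing number in the plane being $6$: at most six unit discs can simultaneously touch a fixed unit disc without overlapping interiors, since the centres of touching discs lie at distance exactly $1$ from the central centre and pairwise at distance $\ge 1$, forcing angular gaps of at least $60^\circ$ around the centre. The minimum degree bound of $3$ follows from item \eqref{it:penny:bound}: if every vertex had degree at least $4$, then $|E|\ge 2n$, but $3n-\sqrt{12n-3}<2n$ for all sufficiently large $n$, and one checks the small cases directly — actually, more cleanly, one can use the Euler-formula consequence that every planar graph has a vertex of degree at most $5$, combined with the packing structure; but the cleanest route is: a disc whose centre is a vertex of the convex hull of all centres has at most $3$ touching discs, since those must lie within a half-plane (in fact within an angular sector of less than $180^\circ$), and $60^\circ$ spacing then permits at most $3$. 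So the extreme (hull) vertex gives the minimum-degree bound.

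For item \eqref{it:penny:clique}, a clique in a penny graph is a set of pairwise-touching unit discs, i.e. a set of centres that are pairwise at distance exactly $1$ — an equilateral configuration. In the plane, at most $3$ points can be pairwise at distance $1$ (a fourth would need to be at distance $1$ from all three vertices of a unit equilateral triangle, but the only two such points in the plane are not mutually compatible, and in any case lie at distance $\sqrt 3 \ne 1$), so every clique, hence every maximal clique, has at most $3$ vertices. For item \eqref{it:penny:chrom}, since every penny graph is planar, the Four Colour Theorem gives chromatic number at most $4$; I would simply cite it.

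The main obstacle is item \eqref{it:penny:bound}: the Harborth edge bound is genuinely nontrivial to prove from scratch, involving a delicate isoperimetric-type estimate on the convex hull of the packing. Since the paper is not trying to reprove extremal packing theory, the right move is to cite Harborth's result. Everything else is elementary plane geometry (kissing number $6$, no four pairwise-unit-distant points in the plane) plus planarity and the Four Colour Theorem, so the real work — and the only place a careful reader should be directed to the literature — is \eqref{it:penny:bound}.
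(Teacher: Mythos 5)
Your proposal is correct and follows essentially the same route as the paper: cite Harborth for the edge bound, use the planar kissing number together with a convex-hull (convex-vertex) argument for the degree bounds, the impossibility of four pairwise unit-distance points in the plane for the clique bound, and the Four Colour Theorem for the chromatic number. One small caution: the counting aside you discard is actually backwards --- $3n-\sqrt{12n-3}<2n$ holds only for $n\le 11$, not for all large $n$ --- but since you settle on the convex-hull argument, which is exactly the paper's argument, nothing is lost.
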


\begin{proof}
    The upper bound in \ref{it:penny:bound} was proved by \citet{Harboth}. This upper bound is also tight; it can be achieved by taking subsets of a triangular lattice. For the lower bound in \ref{it:penny:vdeg}, at least one vertex of a straight-line planar graph embedding must be convex (i.e.\ have all neighbours contained in a cone with apex at the vertex), and every convex vertex of a penny realisation has degree at most 3. The upper bound is simply the number of pennies that can be placed around a single penny.

    Since at most three points can be equidistant in the plane, conclusion \ref{it:penny:clique} is immediate.  Finally \ref{it:penny:chrom} follows from the 4-colour theorem (or a simple induction argument on the minimum degree).
\end{proof}

Significantly less is known surrounding marble graphs. We summarise the situation in the following lemma with bounds that, unlike the penny graph case, are not tight.

\begin{lemma}
    Let $G=(V,E)$ be a marble graph on $n$ vertices. Then:
    \begin{enumerate}
        \item\label{it:marble:bound} $|E|\leq 6n - 0.9 26 n^{2/3}$;
        \item\label{it:marble:vdeg} the minimum degree is at most 8 and the maximum degree is at most 12;
        \item\label{it:marble:clique} a maximal clique has at most 4 vertices;
        \item\label{it:marble:chrom} the chromatic number is at most 9.
    \end{enumerate}
\end{lemma}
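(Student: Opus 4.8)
The plan is to establish the four bounds for marble graphs by adapting, in each case, the arguments used for penny graphs but replacing the two-dimensional geometric facts with their three-dimensional analogues.

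\textbf{Item \ref{it:marble:clique} (clique size).} This is the cleanest: a clique in a marble graph corresponds to a set of points in $\mathbb{R}^3$ that are pairwise at distance exactly $1$, i.e.\ a regular simplex. In $\mathbb{R}^3$ the largest such set has $4$ points (the vertices of a regular tetrahedron), so a maximal clique has at most $4$ vertices. I would state this directly.

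\textbf{Item \ref{it:marble:vdeg} (degree bounds).} For the maximum degree, the neighbours of a vertex $v$ are points at distance exactly $1$ from $v$, and any two neighbours are at distance $\geq 1$; scaling by $1/2$, this is exactly the kissing configuration of unit spheres around a central one, so the maximum degree is at most the kissing number in dimension $3$, which is $12$ (as already invoked in the proof of \Cref{comment:existence_threshold_dim3}). For the minimum degree I would look for the 3D analogue of the "convex vertex" argument: in a marble realisation, take a vertex $v$ on the convex hull of $\rho(V)$; all of $v$'s neighbours lie in a half-space bounded by a supporting hyperplane through $\rho(v)$, hence in a closed hemisphere of the unit sphere centred at $\rho(v)$ (after the scaling normalisation), and one must bound how many points can lie on a hemisphere with pairwise angular distances at least $60^\circ$. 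A spherical-cap / volume packing estimate should give at most $8$; this is the step I expect to need the most care, since unlike the planar case (where "convex vertex has degree $\leq 3$" is exact and elementary) the hemispherical bound is a genuine spherical packing computation and the constant $8$ presumably comes from a known result on hemispherical codes which I would cite rather than reprove.

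\textbf{Item \ref{it:marble:chrom} (chromatic number).} A standard greedy/degeneracy argument: if every subgraph of a marble graph has a vertex of degree at most $8$ (which follows from \ref{it:marble:vdeg}, since every induced subgraph of a marble graph is again a marble graph — marble graphs are hereditary, by restricting the realisation), then the graph is $8$-degenerate and hence $9$-colourable. I would spell this out as a one-line induction on $|V|$.

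\textbf{Item \ref{it:marble:bound} (edge bound).} This is the quantitative analogue of the Harborth-type bound $|E|\le \lfloor 3n-\sqrt{12n-3}\rfloor$ for pennies. The approach is a boundary-versus-interior counting argument: $2|E|=\sum_v \deg(v)$, interior vertices contribute at most $12$ each while vertices on (or near) the convex hull contribute fewer, and one needs a lower bound of order $n^{2/3}$ on the number of such ``boundary-deficient'' vertices — the exponent $2/3$ is exactly the surface-to-volume scaling in $\mathbb{R}^3$, mirroring the $\sqrt{n}=n^{1/2}$ exponent in the planar case. Since the lemma statement explicitly says these bounds are ``not tight,'' I would cite the relevant result from the sphere-packing literature (the bound $6n-0.926\,n^{2/3}$ appears to be a known estimate on the number of touching pairs among $n$ congruent non-overlapping balls) rather than derive it from scratch; the main point is attribution, and I would expect this citation-finding to be where the genuine difficulty lies, since the constant $0.926$ is not something one recovers by a back-of-the-envelope argument.

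Overall, the proof is a sequence of ``replace the 2D fact by its 3D counterpart'' steps; the substantive obstacles are (a) pinning down the correct hemispherical packing constant for the minimum-degree bound and (b) locating and citing the sharp-constant edge bound from the discrete-geometry literature, whereas the clique and chromatic-number parts are immediate.
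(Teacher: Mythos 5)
Your proposal follows essentially the same route as the paper: the edge bound is the Bezdek--Reid result \cite{bezdekreid} quoted verbatim, the maximum degree is the kissing number $12$, the clique bound is the four-point equidistant set in $\mathbb{R}^3$, and the minimum-degree bound comes from a convex-position vertex together with a known bound on hemispherical configurations with pairwise angular separation at least $60^\circ$ (the paper cites Kert\'ez \cite{Kertez} for exactly this, and also for the chromatic number, where your degeneracy argument from the minimum-degree bound is a perfectly good self-contained substitute). The one detail to tighten: to get the constant $8$ you need the neighbours to lie in an \emph{open} hemisphere, so you should take an exposed (strictly convex) vertex of the configuration rather than an arbitrary convex-hull vertex, since a closed hemisphere admits points on the equator and the corresponding packing bound is weaker.
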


\begin{proof}
    It was proven by \citet{bezdekreid} that an $n$-vertex marble graph can have at most $6n - 0.9 26 n^{2/3}$ edges.
    Since the kissing number of a sphere is 12,
    the maximum degree of a marble graph is 12.
    The minimum degree of a marble graph is at most 8;
    this follows from a similar observation that a strictly convex vertex must exist,
    and a result of Kertéz regarding the number of points on the open upper hemisphere that are at least $60^\circ$ apart \citep{Kertez}. The upper bound on the chromatic number was proved by \citet{Kertez}.
\end{proof}

The current maximum number of edges found in a marble graph is roughly $6n - \sqrt[3]{486}n^{2/3}$, which is achievable with induced subgraphs of the face-centred cubic lattice \citep{Bezdek12}.
Similar to penny graphs, we can easily identify that the clique number of a marble graph is at most 4 (since at most four points can be equidistant in $\mathbb{R}^3$).
The maximum possible chromatic number of a marble graph is between 5 \citep{Maehara} and 9 \citep{Kertez},
but the exact upper bound is currently open. 

Unfortunately,
determining whether a graph is a penny graph is NP-hard \citep{Breu96}.
Similarly,
determining whether a graph is a marble graph is NP-hard \citep{Hlin97}.
Interestingly,
the problem of identifying contact graphs of unit $d$-sphere packings (the $d$-dimensional variant of identifying penny and marble graphs) is known to be NP-hard when $d \in \{2,3,4,8,24\}$ \citep{Hlin97,HK01},
but remains completely open in all other dimensions greater than 1.

We are particularly interested in the rigidity and global rigidity properties of penny and marble graphs in relation to their penny graph and marble graph realisations. 
We illustrate this in~\Cref{fig:rigflex-penny} in examples.
This motivates the following definition.

\begin{figure}[ht]
    \centering
    \begin{tikzpicture}[scale=1.2]
        \node[vertex] (a) at (0,0) {};
        \node[vertex] (b) at (1,0) {};
        \node[vertex,rotate around={60:(a)}] (d) at (b) {};
        \node[vertex] (c) at ($(d)+(1,0)$) {};
        \begin{scope}[on background layer]
            \foreach\n in {a,b,c,d}{\draw[penny] (\n) circle[radius=0.5cm];}
        \end{scope}
        \draw[edge] (a)edge(b) (b)edge(c) (c)edge(d) (d)edge(a) (b)edge(d);
        \node[] at ($(a)!0.5!(b)-(0,0.75)$) {rigid};
    \end{tikzpicture}
    \qquad\qquad
    \begin{tikzpicture}[scale=1.2]
        \node[vertex] (a) at (0,0) {};
        \node[vertex] (b) at (1,0) {};
        \node[vertex,rotate around={90:(a)}] (d) at (b) {};
        \node[vertex] (c) at ($(d)+(1,0)$) {};
        \begin{scope}[on background layer]
            \foreach\n in {a,b,c,d}{\draw[penny] (\n) circle[radius=0.5cm];}
        \end{scope}
        \draw[edge] (a)edge(b) (b)edge(c) (c)edge(d) (d)edge(a);
        \node[] at ($(a)!0.5!(b)-(0,0.75)$) {flexible};
    \end{tikzpicture}
    \begin{tikzpicture}[scale=1.2]
        \node[vertex] (a) at (0,0) {};
        \node[vertex] (b) at (1,0) {};
        \node[vertex,rotate around={80:(a)}] (d) at (b) {};
        \node[vertex] (c) at ($(d)+(1,0)$) {};
        \begin{scope}[on background layer]
            \foreach\n in {a,b,c,d}{\draw[penny] (\n) circle[radius=0.5cm];}
        \end{scope}
        \draw[edge] (a)edge(b) (b)edge(c) (c)edge(d) (d)edge(a);
        \node[] at ($(a)!0.5!(b)-(0,0.75)$) {flexible};
    \end{tikzpicture}
    \caption{A rigid penny realisation of a penny graph (left) and two flexible penny realisations of a penny graph (right). One can continuously deform the locations of the pennies, maintaining the given contacts to deform the first penny realisation into the second.}
    \label{fig:rigflex-penny}
\end{figure}

\begin{definition}
    Let $G$ be a penny graph (resp., marble graph).
    We say that $G$ is \emph{penny-rigid} (resp., \emph{marble-rigid}) if there exist finitely many penny graph realisations (resp., marble graph realisations) modulo isometries.
    We say that $G$ is \emph{globally penny-rigid} (resp., \emph{globally marble-rigid}) if there exists exactly one penny graph realisation (resp., marble graph realisation) modulo isometries.
\end{definition}

Similarly to the existence problem,
determining whether a penny graph is penny-rigid or globally penny-rigid is also computationally difficult.

\begin{lemma}[{\citet[Lemma 2.4]{penny_paper}}]\label{l:basic}
    Let $G=(V,E)$ be a sphere graph.
    Then the following properties hold.
    \begin{enumerate}
        \item\label{p:basic1} If $G$ is globally sphere-rigid, then $G$ is sphere-rigid.
        \item\label{p:basic2} If $G$ is sphere-rigid, $d \leq 3$ and $|V|\geq d+1$, then $G$ is $d$-connected.
    \end{enumerate}
\end{lemma}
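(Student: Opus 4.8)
The plan is to prove both parts as consequences of the corresponding facts for ordinary bar-joint frameworks, since a sphere graph $G$ with a sphere realisation $\rho$ (penny realisation when $d=2$, marble realisation when $d=3$) is in particular a bar-joint framework $(G,\rho)$ in which every edge has length exactly $1$ and every non-edge has length strictly greater than $1$. The key observation is that the non-edge inequalities are strict, so they are preserved under small perturbations; this lets us transfer statements about the configuration space of $(G,\rho)$ to statements about the set of sphere realisations, exactly as in the proof of \Cref{prop:finite}.

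For \ref{p:basic1}, I would argue by contraposition: suppose $G$ is not sphere-rigid, so there are infinitely many sphere realisations modulo isometries. Pick such a realisation $\rho$ and perturb it generically (keeping the strict non-edge inequalities, which survive small perturbations) to obtain a generic sphere realisation $\rho'$; then the bar-joint framework $(G,\rho')$ is flexible, for otherwise — by the argument in the proof of \Cref{prop:finite}, using that $f_G^{-1}(f_G(\rho'))\cap U/\!\!\sim$ is a positive-dimensional smooth manifold whenever there are infinitely many equivalent realisations nearby — we would get only finitely many equivalent frameworks near $\rho'$, but generic rigidity/flexibility is a graph property, so flexibility at one generic point gives a continuous flex at $\rho'$. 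Following that flex produces a one-parameter family of bar-joint frameworks equivalent to $(G,\rho')$; for small enough parameter the strict non-edge inequalities still hold, so these are genuine sphere realisations, contradicting global sphere-rigidity. Alternatively, and more simply: if $(G,\rho)$ is globally sphere-rigid but some bar-joint framework equivalent to it is not congruent to it, then — since sphere-non-rigidity would force a continuous flex and hence a non-congruent equivalent framework arbitrarily close to $\rho$ satisfying the strict inequalities — global sphere-rigidity already forbids a nearby non-congruent equivalent realisation, so $(G,\rho)$ has only finitely many equivalent sphere realisations, i.e.\ $G$ is sphere-rigid.

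For \ref{p:basic2}, the strategy is to reduce to \Cref{lem:connected}, which says a $d$-rigid framework on at least $d+1$ vertices has $d$-connected underlying graph. If $G$ is sphere-rigid, take a sphere realisation $\rho$ and perturb generically as above to get a generic sphere realisation $\rho'$; I claim $(G,\rho')$ is a $d$-rigid bar-joint framework. Indeed, if it were flexible then — again by genericity, flexibility propagates and gives a continuous flex of $(G,\rho')$, hence of nearby frameworks — following a sufficiently short flex yields infinitely many pairwise non-congruent sphere realisations of $G$ (the strict non-edge inequalities persist along a short enough flex), contradicting sphere-rigidity. So $(G,\rho')$ is $d$-rigid, and with $|V|\ge d+1$, \Cref{lem:connected} gives that $G$ is $d$-connected. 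The restriction $d\le 3$ presumably enters only because sphere graphs are defined (in this paper) for $d\in\{2,3\}$ — penny and marble — so that the notion ``sphere graph'' and ``sphere-rigid'' makes sense; it is not used in the connectivity argument itself beyond that.

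The main obstacle I anticipate is making rigorous the step ``perturb a sphere realisation to a generic one while keeping it a sphere realisation, and transfer (in)finiteness of realisations.'' The perturbation to a generic point is fine because strict inequalities are open conditions and generic points are dense; the delicate direction is going from ``infinitely many sphere realisations modulo isometries'' to ``the generic bar-joint framework $(G,\rho')$ is flexible,'' because the infinitely many sphere realisations a priori cluster around some non-generic $\rho$, not around $\rho'$. This is handled exactly as in \Cref{prop:finite}: infinitely many equivalent realisations modulo isometries in a compact (per connected component) configuration space force the quotient manifold to be positive-dimensional at $\rho$, hence $(G,\rho)$ is flexible as a bar-joint framework, hence (Asimow–Roth, \Cref{lem:AR}) the rank of $R(G,\rho)$ is deficient, hence the rank of $R(G,\rho')$ is deficient for the generic $\rho'$ as well, so $(G,\rho')$ is flexible — and then a short continuous flex of $(G,\rho')$ stays within the open region cut out by the strict non-edge inequalities, yielding infinitely many sphere realisations. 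Conversely flexibility of $(G,\rho')$ directly yields infinitely many sphere realisations. Once this equivalence is in place, both parts follow quickly from \Cref{lem:connected} and the definitions.
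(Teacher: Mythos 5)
First, a point of reference: the paper does not prove this lemma at all --- it is quoted verbatim from \cite{penny_paper} --- so there is no in-paper proof to compare against. Judged on its own terms, your proposal has one large inefficiency and one genuine gap. The inefficiency is part \ref{p:basic1}: with the paper's definitions, globally sphere-rigid means there is \emph{exactly one} sphere realisation modulo isometries and sphere-rigid means there are \emph{finitely many}, so \ref{p:basic1} is a one-line consequence of ``one is a finite number''. Your contrapositive detour through continuous flexes is not needed and, worse, relies on the same machinery that breaks down in part \ref{p:basic2}.

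The genuine gap is the step ``perturb a sphere realisation to a generic sphere realisation''. No such object exists: in a sphere realisation every edge has length exactly $1$, so the coordinates satisfy nontrivial algebraic relations over $\mathbb{Q}$ as soon as $G$ has an edge, and hence no sphere realisation is generic. (The paper makes exactly this observation for generic-radii penny graphs in \Cref{sec:generic_penny}.) Consequently you cannot invoke \Cref{lem:AR}, nor the genericity hypothesis of \Cref{prop:finite}; and even if you pass to a nearby generic bar-joint framework $(G,\rho')$, a continuous flex of $(G,\rho')$ preserves \emph{its} edge lengths, which are generic and therefore not all equal to $1$, so the frameworks along that flex are not sphere realisations of $G$ and tell you nothing about sphere-rigidity. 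The repair is to argue directly at the given non-generic realisation, adapting the \emph{proof} of \Cref{lem:connected} rather than its statement: if $G$ had a separating set $S$ with $|S|\leq d-1$, write $V=A\cup S\cup B$ with no edges between $A$ and $B$ and rotate $\rho(A)$ about the affine span of $\rho(S)$ (of dimension at most $d-2$, so a nontrivial rotation of $\mathbb{R}^d$ fixing it exists). All edge lengths are preserved, the only distances that change are between $A$ and $B$, and those pairs are all non-edges whose strict inequalities persist under small rotations; after checking that some such distance genuinely changes (degenerate positions, e.g.\ $\rho(A)$ contained in the rotation axis, need a small separate argument), this produces infinitely many pairwise non-congruent sphere realisations, contradicting sphere-rigidity. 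Your closing remark about the role of $d\leq 3$ is correct: it only ensures the notions are defined.
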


\citet[Theorem 2.6]{penny_paper} give a precise, efficient characterisation of when a chordal penny graph is penny-rigid, as well as the same statement for marble-rigidity. (For chordal graphs, it is shown by \citet{penny_paper} that penny-global rigidity and penny-rigidity coincide, so the characterisation applies equally to penny-global rigidity.)
That paper goes on to illustrate many examples of how penny-rigidity and penny-global rigidity can differ from the standard rigidity notions for bar-joint frameworks \citep[see][Table 1]{penny_paper}.

\begin{lemma}\label{lem:k113nk26}
    The graphs $K_{1,13}$ and $K_{2,6}$ are minimal forbidden subgraphs of the family of marble graphs.
\end{lemma}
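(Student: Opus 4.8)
The plan is to show two things: first, that $K_{1,13}$ and $K_{2,6}$ fail to be marble graphs, and second, that every proper subgraph of either (with at least one edge) is a marble graph. The first claim for $K_{1,13}$ is immediate from \Cref{comment:existence_threshold_dim3}: if a collection of unit marbles touches a central one, the centres of the outer marbles lie on a sphere of radius $1$ around the central centre and are pairwise at distance $>1$ (non-edges), hence correspond to points on the unit sphere pairwise more than $60^\circ$ apart; the kissing number bound gives at most $12$ such points, so $K_{1,13}$ is not a marble graph while $K_{1,12}$ is. For $K_{2,6}$, suppose it were a marble graph with partite sets $\{u_1,u_2\}$ and $\{w_1,\dots,w_6\}$. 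Then each $w_i$ is at distance exactly $1$ from both $u_1$ and $u_2$, so all six points $\rho(w_i)$ lie on the intersection of two unit spheres, which is a circle $C$ (the case $\rho(u_1)=\rho(u_2)$ is excluded since $\|\rho(u_1)-\rho(u_2)\|>1$, as $u_1u_2$ is a non-edge; and the spheres do intersect since each $w_i$ witnesses a common point). The circle $C$ has some radius $r$, and $\|\rho(u_1)-\rho(u_2)\|>1$ forces $r<\sqrt{3}/2$ by the usual computation (two unit spheres whose centres are more than $1$ apart meet in a circle of radius less than $\sqrt{3}/2$). But six points on a circle of radius $r$ that are pairwise at distance $>1$ would need six pairwise angular gaps each exceeding $2\arcsin(1/2r) > 2\arcsin(1/\sqrt3) > 60^\circ$, summing to more than $360^\circ$ — a contradiction. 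Hence $K_{2,6}$ is not a marble graph.

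Next I would verify minimality. For $K_{1,13}$, as already noted in the proof of \Cref{comment:existence_threshold_dim3}, every proper connected subgraph with edges is isomorphic to a subgraph of $K_{1,12}$, which is a marble graph (icosahedral kissing configuration), and any disconnected proper subgraph is a disjoint union of such stars together with isolated vertices, which is trivially realisable by placing the pieces far apart. For $K_{2,6}$, I need every proper induced subgraph to be a marble graph; it suffices to check the maximal ones, namely $K_{2,5}$ (deleting a $w_i$) and $K_{1,6}$ (deleting a $u_j$). The graph $K_{1,6}$ is clearly a marble graph since $6 \le 12$. For $K_{2,5}$: place $\rho(u_1)$ and $\rho(u_2)$ at distance slightly more than $1$ so that their unit spheres meet in a circle $C$ of radius $r$ just below $\sqrt3/2$; then five points evenly spaced on $C$ have consecutive angular gaps of $72^\circ > 60^\circ$, giving pairwise distances strictly greater than $2r\sin(36^\circ)$, which exceeds $1$ once $r$ is close enough to $\sqrt3/2$ (since $2 \cdot \tfrac{\sqrt3}{2}\sin 36^\circ = \sqrt3 \sin 36^\circ \approx 1.018 > 1$). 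One should also confirm $\|\rho(u_1)-\rho(u_2)\|>1$ holds, which is true by construction, and that the five $w_i$ are not accidentally at distance $1$ from something they shouldn't be — but all remaining non-edges are among the $w_i$ and between the $u_j$, all of which we have arranged to have distance $>1$.

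The main obstacle is pinning down the quantitative geometry in the $K_{2,6}$ non-realisability argument and in the $K_{2,5}$ realisability argument: one must track carefully that the radius $r$ of the circle $C = S_1 \cap S_2$ satisfies $r < \sqrt3/2$ strictly whenever the centres are more than unit distance apart (this is where the bound $60^\circ$ becomes just barely violated for six points but satisfiable for five), and conversely that $r$ can be pushed arbitrarily close to $\sqrt3/2$ to make the five-point configuration work. Concretely, if the two centres are at distance $2a$ with $a > 1/2$, then $r = \sqrt{1 - a^2}$, so $r < \sqrt{3}/2 \iff a^2 > 1/4 \iff a > 1/2$, which is exactly our hypothesis; and $r \to \sqrt3/2$ as $a \to 1/2^+$. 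Plugging into $2r\sin(\pi/5) > 1$ gives the threshold $a$ below which $K_{2,5}$ is realisable, and one checks this threshold is strictly greater than $1/2$, so an admissible $a$ exists. The rest is bookkeeping: assembling these facts into the two-part ``not a marble graph, but every proper subgraph is'' structure, and noting that for disconnected proper subgraphs of $K_{2,6}$ the claim is trivial by separating the components.
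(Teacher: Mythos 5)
Your proof is correct, and for the $K_{2,6}$ part it takes a genuinely different route from the paper. The paper rules out $K_{2,6}$ by citation: it invokes Chen's theorem that (with one exception) no join $G_4+G_6$ is a contact graph of $4$-dimensional balls, and then Kirkpatrick--Rote's equivalence between contact graphs of $(d+1)$-dimensional balls of arbitrary radii of the form $G+K_2$ and contact graphs of $d$-dimensional \emph{unit} balls, concluding that no graph of the form $G_2+G_6$ is a marble graph. Your argument is instead elementary and self-contained: the six common neighbours lie on the circle $S(\rho(u_1),1)\cap S(\rho(u_2),1)$, whose radius is $\sqrt{1-a^2}<\sqrt{3}/2$ because the non-edge $u_1u_2$ forces $2a>1$, and six points on such a circle pairwise more than unit distance apart would need consecutive angular gaps each exceeding $2\arcsin(1/\sqrt{3})>60^\circ$, which is impossible. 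What the paper's route buys is the stronger statement that \emph{every} $G_2+G_6$ fails (not just the bipartite one); what yours buys is a short, verifiable computation with no external machinery, and as a bonus the same circle parametrisation yields an explicit coordinate realisation of $K_{2,5}$ (five points at $72^\circ$ spacing with $r$ near $\sqrt{3}/2$, using $\sqrt{3}\sin 36^\circ\approx 1.018>1$), where the paper only exhibits a picture. The $K_{1,13}$ part and the overall minimality bookkeeping are essentially identical to the paper's. Two trivial points of polish: write $2\arcsin(1/(2r))$ rather than $2\arcsin(1/2r)$, and in the $K_{2,5}$ construction the consecutive pairs attain distance exactly $2r\sin 36^\circ$ rather than strictly exceeding it --- the strict inequality you actually need is $2r\sin 36^\circ>1$, which you verify.
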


\begin{proof}
    A minor adaptation to the proof given in \Cref{comment:existence_threshold_dim3} proves that $K_{1,13}$ is a minimal forbidden subgraph.

    For the remainder of the proof
    we introduce the following notation.
    For any two graphs $G_1,G_2$ with distinct vertices,
    we denote by $G + H$ the graph formed by connecting every vertex in $G$ to every vertex in $H$.
    We also denote any graph with $n$ vertices by $G_n$.
    We say that a graph is a $d$-dimensional ball graph if it is the contact graph of a set of $d$-dimensional balls with non-overlapping interiors.
    \citet[Corollary 3.7]{Chen16} proved that every graph $G_4 + G_6$, with the exception of $G_4$ being a cycle and $G_6$ being the 1-skeleton of the octahedron, is not the contact graph of a set of 4-dimensional balls (here allowing for the balls to have different radii) with non-overlapping interiors.
    In particular,
    any graph of the form $K_2 + G_2 + G_6$ is not one of the aforementioned contact graphs.
    \citet[Proposition 4.5]{HK01} showed that a graph $G + K_2$ is the contact graph of a set of $(d+1)$-dimensional balls with non-overlapping interiors if and only if $G$ is the contact graph of a set of $d$-dimensional unit radius balls with non-overlapping interiors. 
    Hence,
    any graph of the form $G_2 + G_6$ is not a marble graph;
    in particular,
    $K_{2,6}$ is not a marble graph.
    Minimality now follows from noting that every vertex-induced subgraph of $K_{2,6}$ is a marble graph: $K_{1,6}$ 
    is a marble graph as it is a proper induced subgraph of the minimal forbidden subgraph $K_{1,13}$,
    and $K_{2,5}$ is a marble graph as it is the contact graph of the marble realisation pictured in \Cref{fig:k25}.
\end{proof}

\begin{figure}[ht]
    \centering
    \includegraphics[width=7cm]{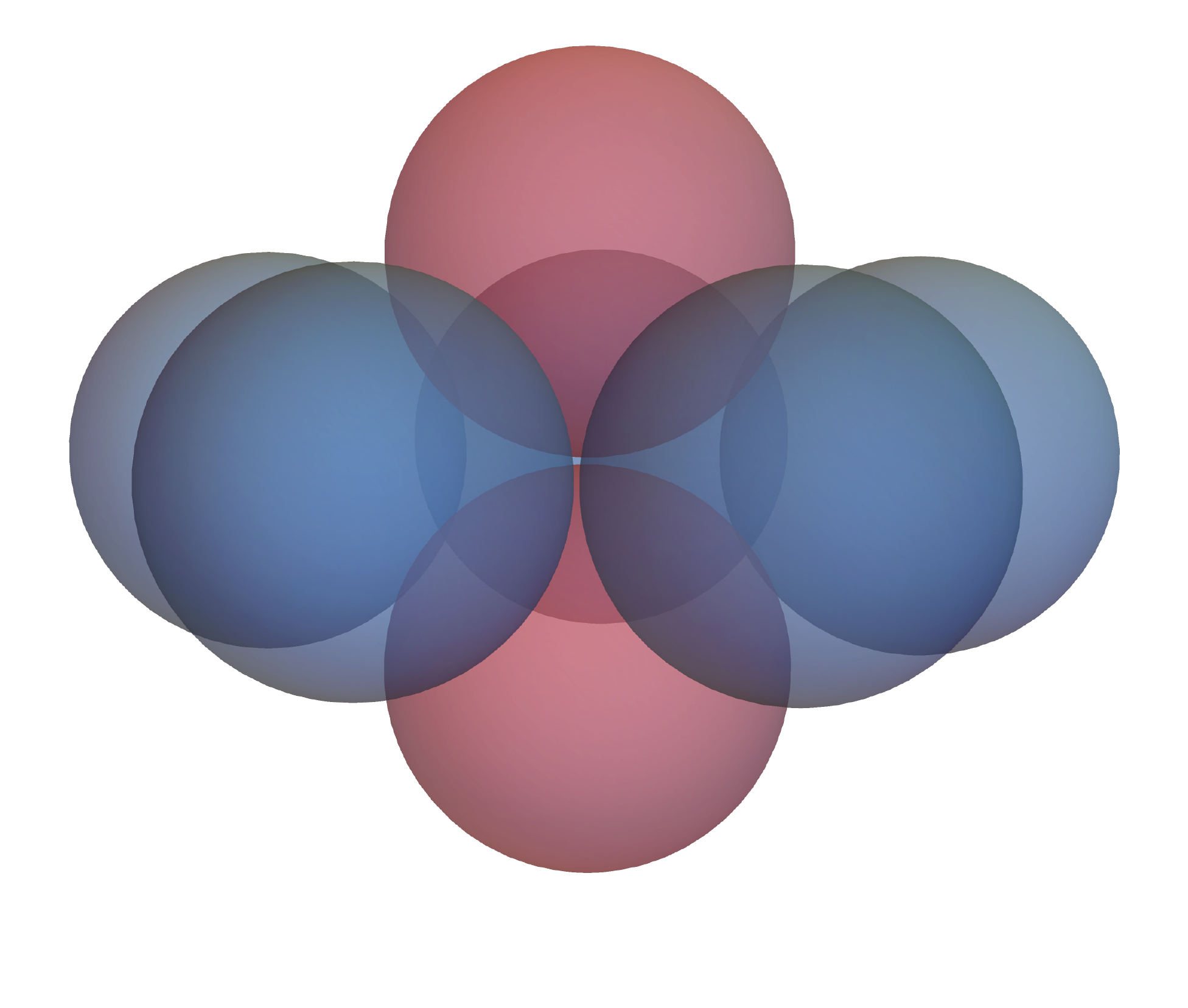}
    \caption{A marble realisation of the complete bipartite graph $K_{2,5}$.}
    \label{fig:k25}
\end{figure}

\begin{remark}
    \citet{HolmesCerfon} studied the problem of determining the number of all possible sphere-rigid marble graphs with a fixed number of vertices. Specifically, they systematically enumerated the sphere-rigid marble graphs with at most 14 vertices, and the sphere-rigid marble graphs with the maximum number of edges with at most 19 vertices. They found these graphs computationally, with no completeness proof. However, they observed that the list may only miss a negligible number of exceptionally singular graphs.
    
    Recall that \Cref{lem:AR} implies that a minimally rigid generic framework in 3-dimensions, on a graph $G$ with $n$ vertices, has $3n-6$ edges. 
    In statistical physics, the term \emph{hypostatic} is used to refer to equal radii sphere packings (or in our language, sphere-rigid marble graphs) with less than $3n-6$ edges and the term \emph{hyperstatic} is used for sphere-rigid marble graphs with more than $3n-6$ edges. The computations of \citet{HolmesCerfon} establish that there is a transition at $n=10$ where both hypostatic and hyperstatic rigid marble graphs emerge.
\end{remark}

\section{Generic radii penny and marble graphs}
\label{sec:generic_penny}

In this section, we consider a generalisation of the penny graphs from~\Cref{{def:penny}}, where the radii of the circles could be slightly larger or smaller than one.  See \Cref{fig:genericpenny} for an example of a graph that is realisable only with generic radii.
\begin{definition}
    A \emph{generic penny/marble graph} is the contact graph of generic radii circles/spheres. The vertices correspond to the centres of the circles/spheres and there is an edge between two vertices if and only if the corresponding circles/spheres touch.
\end{definition}

\begin{figure}[ht]
    \centering
    \begin{tikzpicture}
        \foreach \i [evaluate=\i as \w using 60*\i] in {1,2,3,4,5,6}
        {
            \node[vertex] (a\i) at (\w:1) {};
        }
        \node[vertex] (a0) at (0,0) {};
        \begin{scope}[on background layer]
            \foreach\n in {0,1,...,6}{\draw[penny] (a\n) circle[radius=0.5cm];}
        \end{scope}
        \foreach \i [evaluate=\i as \oi using {int(mod(\i,6)+1)}] in {1,2,3,4,5,6}
        {
            \draw[edge] (a0)edge(a\i) (a\i)edge(a\oi);
        }
    \end{tikzpicture}
    \qquad\qquad
    \begin{tikzpicture}
        \pgfmathparse{58}
        \let\mysangle=\pgfmathresult
        \pgfmathparse{0.5/(2*sin((180-\mysangle)/2)/sin(\mysangle)-1)}
        \let\mysrad=\pgfmathresult
        \pgfmathparse{0.5+\mysrad}
        \let\myrrad=\pgfmathresult
        \foreach \i [evaluate=\i as \w using \mysangle*\i] in {1,2,3,4,5,6}
        {
            \node[vertex] (a\i) at (\w:\myrrad) {};
        }
        \node[vertex] (a0) at (0,0) {};
        \begin{scope}[on background layer]
            \draw[penny] (a0) circle[radius=0.5cm];
            \foreach\n in {1,...,6}{\draw[pennyg] (a\n) circle[radius=\mysrad cm];}
        \end{scope}
        \foreach \i in {1,2,3,4,5,6}
        {
            \draw[edge] (a0)edge(a\i);
        }
        \draw[edge] (a1)edge(a2) (a2)edge(a3) (a3)edge(a4) (a4)edge(a5) (a5)edge(a6);
    \end{tikzpicture}
    \caption{A graph with a penny-realisation (left) and one that is only realisable with generic radii (right). Note that on the right the outer circles are slightly smaller than the inner one (though for the purpose of the figure still non-generic).}
    \label{fig:genericpenny}
\end{figure}
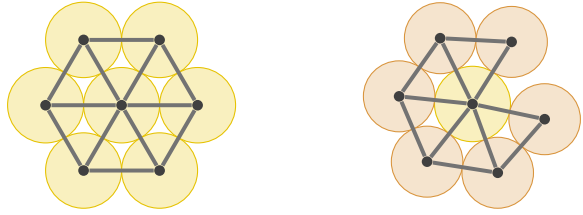

In our target application the lengths of the edges are all expected to be similar. Hence, we consider \emph{$\varepsilon$-interval radii} penny graphs, that is each penny has radius within an interval from $1-\varepsilon$ to $1+\varepsilon$ for some sufficiently small $\varepsilon>0$, and the set of all radii is algebraically independent.

\citet{stickydisks} studied the realisability and rigidity properties of generic radii penny graphs. 
In particular, they provided a tight linear bound on the number of edges in any generic radii penny graph. Further when the bound holds, they classified when the contact graph is rigid and infinitesimally rigid as a penny graph \citep[Theorem~1.5]{stickydisks}. In our setting, this translates to the following lemma. 

\begin{lemma}\label{lemma:existence_rigidity_interval_radii_penny}
    The $\varepsilon$-interval radii penny graphs have at most $2n-3$ edges for any $\varepsilon>0$ and they are rigid precisely when the bound is attained.
\end{lemma}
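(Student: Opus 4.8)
The plan is to recognise $\varepsilon$-interval radii penny graphs as a subclass of the generic radii penny graphs analysed in \cite{stickydisks}, read off the statement from their results, and supply one short direct argument for the ``only if'' part of the rigidity claim. The key observation is definitional: an $\varepsilon$-interval radii penny graph is the contact graph of discs whose radii form an algebraically independent set (they merely happen to lie in $(1-\varepsilon,1+\varepsilon)$), so it is in particular a generic radii penny graph in the sense of \cite{stickydisks}. Consequently their tight linear bound applies verbatim and gives $|E|\le 2n-3$ for every such graph on $n=|V|$ vertices, which is the first assertion. The bound is attained inside this restricted class: a path-like patch of the triangular lattice on $n$ vertices is a maximal outerplanar graph, hence has exactly $2n-3$ edges, and it has a realisation with all radii equal; an arbitrarily small perturbation of the radii and centres keeps precisely the lattice edges in contact (each non-edge gap was strictly positive) while making the set of radii algebraically independent and keeping it inside $(1-\varepsilon,1+\varepsilon)$.

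For the rigidity statement I would argue the two implications separately, with the radii regarded as fixed data throughout. If $|E|=2n-3$, then \cite{stickydisks}, Theorem 1.5, asserts that a generic radii penny graph with this many edges is infinitesimally rigid, hence rigid, as a penny graph; since fixing the generic near-unit radii makes the present notion of penny-rigidity coincide with theirs, $G$ is rigid. Conversely, suppose $|E|<2n-3$ and let $(G,\rho)$ be any $\varepsilon$-interval radii penny realisation in $\mathbb{R}^2$. The fibre $f_G^{-1}(f_G(\rho))$ of the rigidity map is a real algebraic variety whose irreducible component through $\rho$ has dimension at least $2n-|E|>3$, because each of the $|E|$ defining equations lowers the dimension by at most one; since the orbits of the three-dimensional group of orientation-preserving isometries of $\mathbb{R}^2$ acting on this component have dimension at most three, the component contains infinitely many pairwise non-congruent realisations, and on a sufficiently small neighbourhood of $\rho$ the strict non-edge inequalities persist, so $G$ has infinitely many $\varepsilon$-interval radii penny realisations modulo isometries and is not rigid. (Alternatively this follows from the independence of the contact constraints at generic radii established in \cite{stickydisks}: the $|E|$ rows of the rigidity matrix cannot span a $(2n-3)$-dimensional space when $|E|<2n-3$.) Together the two implications show that $G$ is rigid precisely when $|E|=2n-3$.

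The point requiring the most care — really the only non-bookkeeping step — is the transfer: one must be sure that nothing in \cite{stickydisks}, which is stated for arbitrary generic radii, is lost once the radii are additionally confined to $(1-\varepsilon,1+\varepsilon)$. For the upper bound this is immediate from the subclass observation; for its attainment and for the rigidity classification it is the small-perturbation argument above that does the work, by exhibiting extremal and rigid examples already inside the confined class. Everything else is routine once the definition of penny-rigidity and the dimension count are in place.
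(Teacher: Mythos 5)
Your proposal is correct and follows essentially the same route as the paper: the paper gives no argument beyond observing that $\varepsilon$-interval radii penny graphs are generic radii penny graphs and translating \cite[Theorem~1.5]{stickydisks}, which is exactly your ``subclass plus citation'' step (the attainment example and the converse direction you supply are already contained in that cited theorem). One caution on your self-contained flexibility argument: the claim that each of the $|E|$ equations cuts the dimension of the real fibre $f_G^{-1}(f_G(\rho))$ by at most one is not valid for real algebraic sets in general (e.g.\ $x^2+y^2=0$ has codimension $2$ in $\mathbb{R}^2$), so the primary version of that step does not stand on its own; however, your parenthetical alternative is the right fix, since the independence of the contact constraints at generic radii established in \cite{stickydisks} makes $\rho$ a regular point of $f_G$, whence the fibre is locally a smooth manifold of dimension $2n-|E|>3$ and flexibility follows.
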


Observe that this upper bound is less than the minimum number of edges required for a bar-joint framework to be globally rigid in the plane (\Cref{lemma:uniqueness_given_distances_equations_model}). 

It is also worth pointing out that realisations of generic radii penny graphs do not correspond to generic frameworks. Indeed for a generic framework $(G,\rho)$ in the plane on $n$ vertices, the set of coordinates of $p$ is an algebraically independent set of size $2n$, whereas a generic radii penny graph only has an algebraically independent set of size $n$. From this viewpoint, it should be no surprise that these graphs typically exhibit non-generic behaviour. Hence, none of the classical rigidity theory methods can be directly applied in this setting.
Note that our $\epsilon$-interval radii penny graphs are even more restrictive, given that the radii of our circles are in the intervals $(1-\varepsilon, 1+\varepsilon)$.
Nevertheless, as \Cref{lem:intervalpenny} shows, global rigidity can be established in some interesting cases.

While generic radii penny graphs can have any maximum degree and adjacent vertices may have multiple common neighbours, we note the following result that illustrates the additional restrictions that interval radii can place.

\begin{lemma}
    Let $G$ be an $\varepsilon$-interval radii penny graph for some 
    \begin{equation*}
        \varepsilon \leq 1 - \sqrt{2(1 - \cos (2\pi / 7))} \approx 0.13223.
    \end{equation*}
    Then the maximum degree in $G$ is at most 6.
\end{lemma}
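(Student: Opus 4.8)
The plan is to show directly that in any $\varepsilon$-interval radii penny realisation $\rho$ (with radii $r_v\in(1-\varepsilon,1+\varepsilon)$) no vertex can have degree $7$ or more; this is a purely metric statement about discs in the plane. First I would fix a vertex $v$, suppose for contradiction that it has neighbours $w_1,\dots,w_7$, and record the two facts the realisation must satisfy: each neighbouring disc \emph{touches} the disc of $v$, so $\|\rho(w_i)-\rho(v)\|=r_v+r_{w_i}$, while for every pair the interiors are \emph{disjoint}, so $\|\rho(w_i)-\rho(w_j)\|\ge r_{w_i}+r_{w_j}$ (this last inequality does not require $w_iw_j$ to be an edge). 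Let $u_i\in\mathbb{R}^2$ be the unit vector from $\rho(v)$ to $\rho(w_i)$. By the pigeonhole principle two of these seven points of the unit circle, say $u_1$ and $u_2$, make an angle $\theta$ with $0\le\theta\le 2\pi/7$. It then suffices to show that the disjointness inequality $\|\rho(w_1)-\rho(w_2)\|\ge r_{w_1}+r_{w_2}$ cannot hold.

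Writing $R=r_v$, $s=r_{w_1}$, $t=r_{w_2}$, the law of cosines gives $\|\rho(w_1)-\rho(w_2)\|^2=(R+s)^2+(R+t)^2-2(R+s)(R+t)\cos\theta$. A routine rearrangement (expand, collect terms, and use $A-B<(A+B)\cos\theta \iff \tfrac{1-\cos\theta}{1+\cos\theta}<\tfrac{B}{A}$ with $A=R(R+s+t)$, $B=st$) shows that the overlap inequality $\|\rho(w_1)-\rho(w_2)\|^2<(s+t)^2$ is equivalent to
\[
   \tan^2(\theta/2)\;<\;\frac{st}{R\,(R+s+t)}.
\]
Now I would bound the two sides. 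Since $\theta/2\le\pi/7<\pi/2$ and $\tan^2$ is increasing on $[0,\pi/2)$, the left side is at most $\tan^2(\pi/7)$. For the right side, $(R,s,t)\mapsto st/(R(R+s+t))$ is decreasing in $R$ and increasing in each of $s,t$ (a one-line partial-derivative check), so over $R,s,t\in(1-\varepsilon,1+\varepsilon)$ it is strictly larger than $(1-\varepsilon)^2/\big((1+\varepsilon)(3-\varepsilon)\big)$ (strictly, because the interval is open). Hence it is enough to prove
\[
   \tan^2(\pi/7)\;\le\;\frac{(1-\varepsilon)^2}{(1+\varepsilon)(3-\varepsilon)},
\]
for then $\tan^2(\theta/2)\le\tan^2(\pi/7)\le (1-\varepsilon)^2/((1+\varepsilon)(3-\varepsilon))<st/(R(R+s+t))$, giving the required contradiction.

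For this last inequality I would first rewrite the hypothesis: since $2(1-\cos(2\pi/7))=4\sin^2(\pi/7)$, the assumed bound is $\varepsilon\le\varepsilon_0:=1-2\sin(\pi/7)$. At $\varepsilon=\varepsilon_0$ one has $1-\varepsilon_0=2\sin(\pi/7)$, $1+\varepsilon_0=2(1-\sin(\pi/7))$, $3-\varepsilon_0=2(1+\sin(\pi/7))$, so
\[
   \frac{(1-\varepsilon_0)^2}{(1+\varepsilon_0)(3-\varepsilon_0)}=\frac{4\sin^2(\pi/7)}{4(1-\sin^2(\pi/7))}=\tan^2(\pi/7),
\]
i.e.\ equality holds exactly at the endpoint. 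Finally, $\varepsilon\mapsto (1-\varepsilon)^2/((1+\varepsilon)(3-\varepsilon))$ is decreasing on $[0,1)$ — its reciprocal $(1+\varepsilon)(3-\varepsilon)/(1-\varepsilon)^2$ has derivative $8/(1-\varepsilon)^3>0$ — so the inequality holds for every $\varepsilon\le\varepsilon_0$, completing the argument.

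The whole proof is elementary; the only thing that needs care is the bookkeeping. I expect the fiddliest point to be making sure the extremal radii ($r_v$ as large as possible, $r_{w_1},r_{w_2}$ as small as possible) really are the worst case for the right-hand side, and that the open-interval strictness is propagated consistently, so that a single chain of (mostly non-strict, one strict) inequalities closes the contradiction. Everything else is the law of cosines, one trigonometric identity, and two monotonicity checks.
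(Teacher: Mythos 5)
Your proof is correct, and while it rests on exactly the same geometric fact as the paper's --- the isosceles triangle with apex angle $2\pi/7$, legs $(1+\varepsilon)+(1-\varepsilon)=2$ and base $2-2\varepsilon$, analysed with the cosine rule --- your route to it is organised differently and is, if anything, tighter. The paper runs an extremal argument: it takes the smallest $\varepsilon$ admitting a ``seven pennies around one'' configuration, pushes the central radius up to $1+\varepsilon$ and the outer radii down to $1-\varepsilon$, and then asserts that the extremal configuration is a regular heptagon before computing $\varepsilon$. The steps ``the outer pennies must form a cycle'' and ``this produces a regular heptagon'' are stated without full justification. You replace that with a direct contradiction: pigeonhole gives two neighbours at angular separation $\theta\le 2\pi/7$, the overlap condition is rewritten exactly as $\tan^2(\theta/2)<st/\bigl(R(R+s+t)\bigr)$, and the worst case over the radii is located by explicit monotonicity of that rational function (decreasing in $R$, increasing in $s,t$), with the endpoint $\varepsilon_0=1-2\sin(\pi/7)$ giving equality with $\tan^2(\pi/7)$. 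This makes rigorous precisely the ``worst case is the regular heptagon with extremal radii'' claim that the paper takes for granted, and your handling of strictness via the openness of the radius interval matches the paper's closing remark that radii $1\pm\varepsilon$ are not attained. The only bookkeeping worth double-checking is the one you already flag: the sign of $\partial/\partial R$ and $\partial/\partial s$ of $st/\bigl(R(R+s+t)\bigr)$, and both come out as you claim.
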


\begin{proof}
    Fix $\varepsilon >0$ to be the smallest value so that it is possible to have a configuration of seven pennies with disjoint interiors and radii contained in the closed interval $[1-\varepsilon,1+\varepsilon]$ such that one penny is in contact with all other pennies. We call the penny in contact with all others the \emph{centre penny} and the others the \emph{outer pennies}.
    
    We observe that it is always possible to increase the radius of the centre penny and/or decrease the radius of the outer pennies and still find a configuration of seven pennies with the same properties.
    Hence,
    we may suppose that the centre penny has radius $1+ \varepsilon$ and that each outer penny has radius $1-\varepsilon$.
    As we chose $\varepsilon$ as small as possible,
    it follows that the outer pennies must form a cycle.
    This set-up produces a regular heptagon with side lengths $2-2\varepsilon$.
   Label the centre of the centre penny as $o$. Then $o$ is distance 2 from each vertex of the heptagon.
   
    Pick two adjacent vertices $x,y$ of the heptagon.
    The triangle $xoy$ is isosceles with angle $2\pi/7$ radians at $o$ and side lengths $\|x-o\| = \|y-o\| = 2$ and $\|x-y\|=2-2\varepsilon$. 
    Using the cosine rule and rearranging, we see that $\varepsilon = 1 - \sqrt{2(1 - \cos (2\pi / 7))}$.
    The result now follows since an $\varepsilon$-interval radii penny graph does not allow pennies with radius $1 \pm \varepsilon$.
\end{proof}

\begin{lemma}\label{l:point in middle}
    Let $P_0,P_1,P_2,P_3$ be four discs with disjoint interiors, with each $P_i$ having radius $r_i$ and centre at $p_i$.
    Suppose that $P_1,P_2,P_3$ are in pairwise contact and $p_0$ lies in the interior of the triangle formed by the points $p_1,p_2,p_3$.
    Then
    \begin{equation*}
        3 + 2 \sqrt{3} ~ \leq ~ \max\{ r_i/r_0 : i \in \{1,2,3\} \}.
    \end{equation*}
\end{lemma}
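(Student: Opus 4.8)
The plan is to reduce to an extremal configuration and then compute. First I would normalise $r_0 = 1$ and set $R = \max\{r_1, r_2, r_3\}$; the goal is to show $R \geq 3 + 2\sqrt{3}$. Since $P_1, P_2, P_3$ are in pairwise contact, $\|p_i - p_j\| = r_i + r_j$ for distinct $i,j \in \{1,2,3\}$, so the triangle $p_1 p_2 p_3$ is completely determined by the three radii. The disjoint-interior conditions for $P_0$ give $\|p_0 - p_i\| \geq 1 + r_i$ for each $i$. The point $p_0$ lies in the interior of triangle $p_1 p_2 p_3$, and I want to argue that such a $p_0$ can exist only if the triangle is ``large enough'' relative to the three exclusion discs of radius $1 + r_i$ around the $p_i$.

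The key geometric step is the following: if $p_0$ is in the interior of the triangle, then the three discs $D_i$ of radius $1 + r_i$ centred at $p_i$ must cover $p_0$'s position only in the sense of the constraint $\|p_0 - p_i\| \geq 1 + r_i$ being violated unless the $D_i$ fail to cover the whole triangle. More precisely, I would show that if all three discs $D_i$ of radius $1+r_i$ together cover the closed triangle $p_1p_2p_3$, then no valid $p_0$ exists; hence for a valid configuration the union $D_1 \cup D_2 \cup D_3$ must leave some interior point of the triangle uncovered. Then I would make the worst case concrete by taking $r_1 = r_2 = r_3 = r$ (symmetry should be extremal, and I would justify this by a monotonicity/perturbation argument — decreasing any $r_i$ below $R$ only shrinks the triangle while the exclusion radius $1 + r_i$ decreases more slowly, so the symmetric case is the hardest to keep feasible). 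In the symmetric case the triangle $p_1p_2p_3$ is equilateral with side $2r$, circumradius $2r/\sqrt{3}$, and the only candidate point for $p_0$ that could be uncovered is the centroid, which is at distance $2r/\sqrt{3}$ from each $p_i$. Feasibility then requires $2r/\sqrt{3} \geq 1 + r$, i.e.\ $r(2/\sqrt{3} - 1) \geq 1$, i.e.\ $r \geq 1/(2/\sqrt{3} - 1) = \sqrt{3}/(2 - \sqrt{3}) = \sqrt{3}(2+\sqrt{3}) = 3 + 2\sqrt{3}$, which is exactly the claimed bound.

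The main obstacle I anticipate is rigorously justifying the reduction to the symmetric case and the claim that the centroid is the relevant point. For the non-symmetric triangle one must check that the point of the triangle farthest (in the appropriate weighted sense) from all three vertices — effectively the point maximising $\min_i (\|p_0 - p_i\| - r_i)$ over the triangle — is handled correctly, and that this ``most feasible'' point is the one to test. I would locate it as the point equidistant from the three exclusion circles (a weighted circumcentre / the centre of the circle internally tangent-ish configuration), show it lies inside the triangle under the hypotheses, and then optimise the resulting expression over $r_1, r_2, r_3$ subject to $\max r_i = R$; a Lagrange-multiplier or direct convexity argument should confirm the minimum of $R$ is attained at $r_1 = r_2 = r_3$. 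An alternative, possibly cleaner route is to bound directly: from $\|p_0 - p_i\| \geq 1 + r_i$ and $p_0$ inside the triangle, use that the sum of distances from an interior point to the vertices, or an inradius-type estimate, forces the triangle's inradius to be at least some function of the $r_i$; then compare with the inradius $r$ of the equilateral triangle of side $2r$ (which is $r/\sqrt{3}$) to extract the inequality. Either way the final arithmetic collapses to $3 + 2\sqrt{3}$, so the real work is the extremal-configuration argument, not the computation.
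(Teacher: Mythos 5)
Your final computation is correct and the symmetric configuration $r_1=r_2=r_3=3+2\sqrt{3}$ is indeed the extremal one, but the step that carries essentially all of the content of the lemma --- the reduction to that symmetric case --- is not established, and the heuristic you offer for it is wrong as stated. You claim that ``decreasing any $r_i$ below $R$ only shrinks the triangle while the exclusion radius $1+r_i$ decreases more slowly''; in fact the side lengths $r_i+r_j$ and the exclusion radius $1+r_i$ both decrease at exactly rate $1$ per unit decrease of $r_i$, so no such rate comparison is available. Moreover, perturbing one radius simultaneously changes the shape of the triangle and the location of the best candidate point for $p_0$, so the monotonicity you need (if $(r_1,r_2,r_3)$ admits a valid $p_0$ then so does $(R,R,R)$ with $R=\max_i r_i$) is genuinely non-obvious. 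Your fallback routes --- locating the weighted circumcentre and optimising over $(r_1,r_2,r_3)$ by Lagrange multipliers, or an inradius-type estimate --- are sketches of computations you have not carried out, and the weighted circumcentre need not lie inside the triangle for skewed radii, so there are case distinctions you have not confronted. As written, the proposal proves the lemma only in the symmetric case.

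For comparison, the paper's proof avoids any symmetrisation. It first shrinks $P_1,P_2,P_3$, preserving their mutual contacts, until $P_0$ touches all three; this only decreases $\max_i r_i$, so one may assume $\|p_0-p_i\|=1+r_i$ for each $i$. Since $p_0$ lies in the interior of the triangle, the three angles $\theta_{ij}$ at $p_0$ sum to $2\pi$, while the law of cosines gives $\cos\theta_{ij}=1-\frac{2r_ir_j}{(1+r_i)(1+r_j)}$, which exceeds $-\tfrac{1}{2}$ whenever both $r_i,r_j<3+2\sqrt{3}$ (because then $r/(1+r)<\sqrt{3}/2$). Hence each $\theta_{ij}<2\pi/3$ and the angles sum to less than $2\pi$, a contradiction. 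This handles arbitrary radii uniformly and is precisely the optimisation your plan defers; if you want to complete your route, the angle-sum identity at $p_0$ is the tool that makes the extremality of the symmetric configuration provable.
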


\begin{proof}
    Fix $r_0 = 1$.
    Suppose that $P_1,P_2,P_3$ have been chosen so that the value $r_{\max} := \max\{ r_i : i \in \{1,2,3\} \}$ is minimal.
    Note that it is possible to shrink each of $P_1,P_2,P_3$ whilst maintaining all contacts between discs up to the point where $P_0$ is in contact with all other discs,
    and this operation decreases the value of $r_{\max}$.
    Hence, we can assume that $P_0$ is in contact with each of $P_1,P_2,P_3$.
    Suppose that $r_{\max} < 3 + 2 \sqrt{3}$.
    For any two vertices $i,j \in \{1,2,3\}$,
    let $\theta_{ij}$ be the angle at $p_0$ in the triangle $p_0p_ip_j$.
    Since $\|p_0-p_i\|= 1+ r_i$, $\|p_0-p_j\| = 1+r_j$ and $\|p_i-p_j\|=r_i+r_j$,
    it follows from elementary trigonometry that
    \begin{align*}
        \cos \theta_{ij} &= \frac{(1+r_i)^2 + (1+ r_j)^2 - (r_i+ r_j)^2}{2(1+r_i)(1+r_j)} = 1 - \frac{2r_i r_j}{(1+r_i)(1+r_j)} \\ 
        &< 1 - \frac{2(3 + 2 \sqrt{3})^2}{(4 + 2 \sqrt{3})^2} = \frac{1}{2}.
    \end{align*}
    Hence, $\theta_{ij} < 2\pi/3$.
    However, we now see that $\theta_{12} + \theta_{23} + \theta_{13} < 2 \pi$,
    contradicting that the discs $P_0,P_1,P_2,P_3$ are pairwise touching with $P_0$ at their centre.
    Hence, $r_{\max} \geq 3 + 2 \sqrt{3}$,
    which implies the desired inequality.
\end{proof}

\begin{lemma}\label{lem:0.73}
    Let $G$ be an $\varepsilon$-interval radii penny graph for some 
    \begin{equation*}
        \varepsilon \leq \sqrt{3} -1 \approx 0.73205.
    \end{equation*}
    Then any pair of adjacent vertices have at most two common neighbours.
\end{lemma}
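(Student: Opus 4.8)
The plan is to argue by contradiction: suppose an adjacent pair $u,v$ has three common neighbours $w_1,w_2,w_3$, and then exhibit one of the centres $p_{w_i}$ lying in the interior of the triangle spanned by $p_u,p_v,p_{w_j}$ for some $j\neq i$. Since $uv,uw_j,vw_j\in E$, the discs $P_u,P_v,P_{w_j}$ are pairwise tangent, so \Cref{l:point in middle} applies with $P_0=P_{w_i}$ and $\{P_1,P_2,P_3\}=\{P_u,P_v,P_{w_j}\}$ and forces $\max\{r_u,r_v,r_{w_j}\}/r_{w_i}\ge 3+2\sqrt3$. This is impossible: all radii lie strictly inside $(1-\varepsilon,1+\varepsilon)$ (algebraic independence of the radii forces this, even in the boundary case), so the left-hand ratio is $<\tfrac{1+\varepsilon}{1-\varepsilon}$, and a one-line computation gives $\tfrac{1+\varepsilon}{1-\varepsilon}\le 3+2\sqrt3$ precisely when $\varepsilon\le\sqrt3-1$.

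To locate such a centre, I would place $p_u=(0,0)$ and $p_v=(L,0)$ where $L=r_u+r_v=\|p_u-p_v\|$ (the discs $P_u,P_v$ are tangent). First observe that no $p_{w_i}$ can lie on the line $p_up_v$: a disc tangent to both $P_u$ and $P_v$ whose centre lies on that line would force $2r_{w_i}=0$ or $2r_u=0$, a contradiction. Hence by pigeonhole two of the three centres, say $p_{w_1},p_{w_2}$, lie in the same open half-plane bounded by the line $p_up_v$, which I take to be $\{y>0\}$; and because the radii of an $\varepsilon$-interval radii penny graph are algebraically independent, they are distinct, so I may assume $r_{w_1}<r_{w_2}$.

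The technical core is a monotonicity claim. For a common neighbour $w$ of radius $t=r_w$ with centre in the upper half-plane, write $p_w=(r_u+t)(\cos\theta_w,\sin\theta_w)$ with $\theta_w\in(0,\pi)$. Tangency of $P_w$ to $P_v$ gives
\[
\cos\theta_w=\frac{(r_u-r_v)(L+2t)+L^2}{2L(r_u+t)}.
\]
Differentiating, the numerator of $\tfrac{d}{dt}\cos\theta_w$ equals $2L(s^2-L^2)$ with $s=r_u-r_v$, which is negative since $|s|=|r_u-r_v|<r_u+r_v=L$; hence $\theta_w$ is strictly increasing in $t$. By the $u\leftrightarrow v$ symmetry, the angle $\phi_w\in(0,\pi)$ at $p_v$ measured from the ray $p_v\to p_u$ satisfies the analogous formula and is likewise strictly increasing in $t$. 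Therefore $r_{w_1}<r_{w_2}$ yields $0<\theta_{w_1}<\theta_{w_2}<\pi$ and $0<\phi_{w_1}<\phi_{w_2}<\pi$. An elementary half-plane check then places $p_{w_1}$ inside the triangle $p_up_vp_{w_2}$: it lies above $p_up_v$ on the same side as $p_{w_2}$; the inequality on $\theta$ puts it on the same side of the line $p_up_{w_2}$ as $p_v$; and the inequality on $\phi$ puts it on the same side of the line $p_vp_{w_2}$ as $p_u$. Now \Cref{l:point in middle} applies and gives the contradiction above.

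I expect the main obstacle to be making the final ``half-plane check'' fully rigorous: one must be careful that $\theta_w$ and $\phi_w$ are genuinely the angles in $(0,\pi)$, and that strict monotonicity in the radius translates into the correct nested inclusion of angular sectors seen from each of $p_u$ and $p_v$. One should also record that the radii lie \emph{strictly} between $1-\varepsilon$ and $1+\varepsilon$ (so the ratio bound is strict even when $\varepsilon=\sqrt3-1$), which follows from the radii being algebraically independent over $\mathbb{Q}$, hence individually transcendental. The differentiation and the arithmetic fact $\tfrac{1+\varepsilon}{1-\varepsilon}\le 3+2\sqrt3\iff\varepsilon\le\sqrt3-1$ are routine.
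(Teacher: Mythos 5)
Your proposal is correct and follows the same overall strategy as the paper: three common neighbours of an adjacent pair force one centre to lie in the interior of a triangle spanned by the centres of three mutually tangent discs, whence \Cref{l:point in middle} gives the ratio bound $3+2\sqrt{3}$, which is incompatible with $\frac{1+\varepsilon}{1-\varepsilon}\leq 3+2\sqrt{3}$ for $\varepsilon\leq\sqrt{3}-1$. The only divergence is in how the ``centre inside the triangle'' step is justified: the paper disposes of it in one sentence by appealing to the planarity of the straight-line drawing of the contact graph, whereas you prove it directly by computing $\cos\theta_w$ as a function of the radius and showing the angles at $p_u$ and $p_v$ are strictly increasing in $t$, so the smaller of two same-side common neighbours is nested inside the triangle determined by the larger. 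Your version is more self-contained (it avoids relying on the non-crossing property of disc contact graphs, handles the degenerate collinear case explicitly, and even identifies which neighbour is nested), at the cost of some computation; the paper's is terser but leaves those details implicit. Your observation that algebraic independence is what makes the ratio bound strict is also a fair point, although the cleanest justification is that two radii attaining $1+\varepsilon$ and $1-\varepsilon$ simultaneously would satisfy the algebraic relation $r_1+r_2=2$.
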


\begin{proof}
   Let $G$ be a penny graph with vertices $x,y,a,b,c$ so that $x,y$ are adjacent to each other and each of $a,b,c$.
    Let $\rho$ be an $\varepsilon$-interval radii penny realisation of $G$,
    and fix $r_v$ to be the radii of the disc associated with vertex $v$.
    We observe now that each of the triples $xya,xyb,xyc$ is a triangle in $G$.
    By the planarity of $\rho$,
    there exists distinct $i,j \in \{a,b,c\}$ such that $\rho(i)$ is contained in the triangle formed by the points $\rho(x),\rho(y),\rho(j)$.
    By \Cref{l:point in middle} we see that
    \begin{equation}\label{eq:root3 minus 0ne}
        3 + 2 \sqrt{3} \leq \max\{ r_v/r_i : v \in \{x,y,j\} \} < \frac{1 + \varepsilon}{1-\varepsilon} \leq 3 + 2 \sqrt{3},
    \end{equation}
    which is a contradiction.    
\end{proof}

\begin{remark}
    Generic radii penny graphs can be naturally extended to higher dimensions, by replacing the disks with
    spheres. It is conjectured by \citet{ozkan2018algorithm} that the number of edges in the graph of such a generic radii packing is bounded linearly by $3n-6$. This conjecture has been further studied by \citet{DewarSphere}. Notably, it has been demonstrated that the conjecture holds true for any sphere packing with a contact graph in the form (using the notation described in \Cref{lem:k113nk26}) of $G+ K_2$, where $G$ is a penny graph containing no cycles.
\end{remark}

Generic radii global rigidity is not a generic property. That is, some graphs have some generic radii realisations that are globally rigid and other generic radii realisations that are not globally rigid. For example the realisation of $K_4-e$, depicted on the left of \Cref{fig:penny} as a penny graph, can be perturbed to a nearby generic radii realisation.
This is globally rigid as a generic radii (or interval radii) penny graph. The underlying bar-joint framework, on the other hand, is not globally rigid as one may `flip' either vertex of degree 2 through the line defined by its neighbours to obtain a second realisation with the same edge lengths. This flip would, of course, destroy the generic radii realisation as the pennies corresponding to the degree 2 vertices would overlap.

Assuming that a graph is an interval radii penny graph, the global rigidity problem now becomes: does it have a unique interval radii penny graph realisation up to rigid transformations of the plane? We give a sample theoretical result illustrating a family of graphs for which uniqueness in this model can be determined from the graph.

A \emph{2-lateration} graph is a graph obtained from $K_2$ by sequentially adding vertices of degree 2 so that, at each step, the neighbours of the new degree 2 vertex are adjacent. Such graphs are relevant to our study for example if one assumes the backbone is connected there is a reasonably high probability that there is a spanning 2-lateration subgraph.

\begin{lemma}\label{lem:intervalpenny}
    Let $(G,\rho)$ be an $\varepsilon$-interval radii penny graph realisation where $\varepsilon \leq \sqrt{3} -1 \approx 0.73205$. If $G$ is a planar 2-lateration graph then $(G,\rho)$ is globally rigid.
\end{lemma}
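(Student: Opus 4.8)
The plan is to proceed by induction on the number of vertices, following the 2-lateration construction of $G$. The base case is $G = K_2$, which is trivially globally rigid as an interval radii penny graph (the single edge length is fixed and there is nothing to flip). For the inductive step, suppose $G$ is obtained from a planar 2-lateration graph $G'$ by adding a vertex $v$ of degree $2$ whose two neighbours $x,y$ are adjacent in $G'$. Let $(G,\rho)$ be an $\varepsilon$-interval radii penny realisation with $\varepsilon \leq \sqrt 3 - 1$, and let $\rho'$ be its restriction to $V(G')$. Then $(G',\rho')$ is an $\varepsilon$-interval radii penny realisation of $G'$, so by induction it is globally rigid: any equivalent interval radii penny realisation of $G'$ is congruent to $\rho'$. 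It remains to argue that the position of $v$ is then forced.

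The key geometric point is the following. The vertex $v$, together with $x$ and $y$, forms a triangle in $G$ (since $xy \in E$), all three of whose edge lengths are determined (they are sums of the two corresponding radii, and the radii are fixed data of the realisation). Generically, given the positions of $x$ and $y$, there are exactly two placements of $v$ satisfying $\|\rho(v)-\rho(x)\| = r_v + r_x$ and $\|\rho(v)-\rho(y)\| = r_v + r_y$: the two reflections of one another across the line through $\rho(x)$ and $\rho(y)$. The heart of the argument is to show that at most one of these two placements yields a valid interval radii penny realisation of all of $G$, i.e.\ one in which $v$'s disc has disjoint interior from every other disc. This is where \Cref{lem:0.73} enters: in $G'$, the adjacent pair $x,y$ already has some set of common neighbours, and by \Cref{lem:0.73} there are at most two of them; so in $G$ the pair $x,y$ has at most three common neighbours, one of which is the new vertex $v$. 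The two candidate positions for $v$ are the two ``sides'' of the edge $xy$; if both were feasible, then on one side we would have $v$ together with whatever common neighbours of $x,y$ already sit on that side, and one checks that this forces a configuration already excluded — either $v$ would overlap an existing disc, or the common-neighbour count would exceed what \Cref{lem:0.73} (and, at the extreme, \Cref{l:point in middle}) permits. I would make this precise by a short case analysis on how the (at most two) old common neighbours of $x,y$ are distributed relative to the line through $\rho(x),\rho(y)$: the side occupied by $v$ in $\rho$ is distinguished because placing $v$ on the other side would either create an overlap with a disc already there or create a ``point in the middle'' configuration forcing the radius ratio bound of \Cref{l:point in middle}, contradicting $\varepsilon \leq \sqrt 3 - 1$ exactly as in \eqref{eq:root3 minus 0ne}.

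Granting that the placement of $v$ is unique relative to the fixed sub-realisation $(G',\rho')$, global rigidity of $(G,\rho)$ follows: given any equivalent interval radii penny realisation $(G,\sigma)$, its restriction to $G'$ is an equivalent interval radii penny realisation of $G'$, hence congruent to $\rho'$; applying that congruence we may assume $\sigma$ and $\rho$ agree on $V(G')$, and then the uniqueness of the placement of $v$ forces $\sigma(v) = \rho(v)$ (note the planarity hypothesis on $G$ guarantees $v$ lies on a well-defined side of $xy$ consistent with the planar embedding, so the ``reflection'' ambiguity is the only ambiguity). Thus $\sigma$ is congruent to $\rho$.

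The main obstacle I expect is the side-analysis in the inductive step: carefully ruling out the ``wrong side'' placement of $v$ using only the hypotheses $\varepsilon \leq \sqrt 3 - 1$ and planarity, and in particular handling the subcase where $x,y$ have two old common neighbours on opposite sides of the line $\rho(x)\rho(y)$ so that a priori either side looks occupied — here one must invoke that $v$ being added keeps the graph planar, so $v$ cannot be ``enclosed'' on the side where the embedding already routes an edge, together with the disc-overlap constraint. A secondary subtlety is the genericity/degeneracy check: one should confirm that the two candidate positions for $v$ are genuinely distinct (i.e.\ $\rho(x),\rho(y)$ and the required distances do not force $v$ onto the line $\rho(x)\rho(y)$), which holds because the radii are algebraically independent and in particular no degenerate triangle can occur.
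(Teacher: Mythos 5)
Your overall strategy is the paper's: induct along the 2-lateration order, use global rigidity of $G-v$, observe that the two candidate positions for the new vertex $v$ are reflections of each other in the line through $\rho(x)$ and $\rho(y)$, and kill the wrong one with \Cref{l:point in middle}. However, the step you defer as ``a short case analysis'' contains a genuine gap: you never establish that $x$ and $y$ actually have a common neighbour in $G-v$. You write that they have ``some set of common neighbours'', but if that set were empty the two reflected candidates would be geometrically indistinguishable (nothing near the edge $xy$ breaks the symmetry), both placements could be penny-feasible, and the inductive step would fail -- indeed the conclusion itself would be false for such a graph. The missing ingredient, which is exactly what the paper's proof hinges on, is the structural fact that in a 2-lateration graph on at least $3$ vertices every edge lies in a triangle (immediate by induction on the construction order); hence for $|V(G)|\geq 4$ there is a vertex $z$ adjacent to both $x$ and $y$ in $G-v$. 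The reflected candidate $v'$ then lies on the same side of the line $\rho(x)\rho(y)$ as $\rho(z)$, so one of $\rho(z),v'$ lies in the triangle spanned by $x$, $y$ and the other, and \Cref{l:point in middle} together with the computation in \cref{eq:root3 minus 0ne} rules this placement out.

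Two smaller points. The subcase you worry about, with two old common neighbours of $x,y$ on opposite sides of the line, cannot occur: \Cref{lem:0.73} applied to $G$ itself bounds the total number of common neighbours of the adjacent pair $x,y$ (including $v$) by two, so $G-v$ contributes at most one. Consequently the appeal to planarity of the embedding to ``adjudicate sides'' is not needed; planarity enters only through the hypothesis that $G$ is realisable at all. Finally, the non-degeneracy of the two candidate positions does not require algebraic independence of the radii: the constraints $\|\rho(v)-\rho(x)\|=r_v+r_x$, $\|\rho(v)-\rho(y)\|=r_v+r_y$ and $\|\rho(x)-\rho(y)\|=r_x+r_y$ make the triangle inequality strict for any positive $r_v$, so the two intersection points are always distinct.
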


\begin{proof}
    The lemma trivially holds when $(G,\rho)$ is a triangle. Inductively, suppose that the lemma is true for any $\varepsilon$-interval radii penny graph realisation of a planar 2-lateration graph on up to $n$ vertices. 
    Let $(G,\rho)$ be an $\varepsilon$-interval radii penny graph realisation where $G$ is a planar 2-lateration graph with $n+1$ vertices.
    Then $G$ contains a degree 2 vertex $v$ (with neighbours $x,y$) so that $G-v$ is a planar 2-lateration graph on $n$ vertices and $(G-v,\rho|_{G-v})$ is a $\varepsilon$-interval radii penny graph realisation of $G-v$.
    By our inductive assumption,
    $(G-v,\rho|_{G-v})$ is globally rigid.
    As $G$ is a planar 2-lateration graph with 4 or more vertices,
    $x,y$ are both adjacent to another vertex $z$.
    Given the positions of the vertices of $G-v$ given by $\rho$,
    there are exactly two possible locations for the vertex $v$ that satisfy the distance constraints for the edges $vx$ and $vy$.
    As $\{x,y,z\}$ defines a triangle,
    it follows from \Cref{l:point in middle} and \cref{eq:root3 minus 0ne} that there is a unique location for the penny corresponding to $v$ so that there is no overlap. 
\end{proof}

\section{Reconstruction algorithm}\label{sec:Algorithm}

In this section, we study the 3D reconstruction problem. This problem is closely related to the Euclidean distance problem which aims to construct points from pairwise distances between them. We adapt semidefinite programming based algorithms for the Euclidean distance problem and 3D genome reconstruction to the
models in~\Cref{sec:unit_ball},~\Cref{sec:inequalities_and_equalities}, and~\Cref{sec:Penny}. It seems that our approach can not easily be applied to the model in~\Cref{sec:generic_penny}.

Numerous algorithms for solving the Euclidean distance problem are summarised in the over\-view article by \citet{liberti2014euclidean}.
In our setting, by using inequalities we allow interval data~\citep[Section 3.4]{liberti2014euclidean}, 
i.e.\ edges can be weighted with bounded or unbounded real intervals instead of just real numbers. More details on the interval setting can be found \citep[see][]{cassioli2015algorithm}.

We focus on algorithms based on semidefinite programming, as they provide convex formulations that easily allow one to account for noisy measurements. Semidefinite programs for distance geometry problems have been particularly popular in wireless sensor network localisation~\citep{biswas2006semidefinite2,so2007theory}, but also in molecular conformation problems~\citep{leung2010sdp} and computer vision~\citep{weinberger2006unsupervised}. In the context of 3D genome reconstruction, they have been used in the population haploid case by \citet{zhang2013inference} and the diploid case by \citet{belyaeva2021identifying,cifuentes20233d}.

\subsection{Optimisation formulation}\label{subsec:optform}

To construct a semidefinite relaxation for the reconstruction problem, we recall the connection between Euclidean distance and positive semidefinite matrices. Let $x_1,\ldots,x_n \in \mathbb{R}^d$ be the coordinates of beads. Let $G$ be the corresponding Gram matrix, i.e.\ $G_{ij}= \langle x_i, x_j \rangle$. It is a positive semidefinite matrix of rank at most $d$. 
The squared distance matrix $D^{(2)}$ can be obtained from the Gram matrix $G$ using the relation
\begin{equation*}
    D^{(2)}_{ij} = G_{ii}+G_{jj}-2G_{ij}.
\end{equation*}

We define
\[
    g_{ij}(G):=G_{ii}+G_{jj}-2G_{ij}.
\]
Let our model be defined by
\begin{equation} \label{eqn:constraints}
    \begin{aligned}
    &g_{ij}(G) = a_{ij}^2 \text{ for } (i,j) \in S_{=},\\
    &g_{ij}(G) \leq \overline{a}_{ij}^2 \text{ for } (i,j) \in S_{\leq},\\
    &g_{ij}(G) \geq \underline{a}_{ij}^2 \text{ for } (i,j) \in S_{\geq},\\
    &g_{ij}(G) < \overline{a}_{ij}^2 \text{ for } (i,j) \in S_{<},\\
    &g_{ij}(G) > \underline{a}_{ij}^2 \text{ for } (i,j) \in S_{>},
    \end{aligned}
\end{equation}
where $S_{=}, S_{\leq},S_{\geq},S_{<},S_{>} \subseteq [n] \times [n]$. 

For example, if we have the unit ball graph model, then $S_{=}=\emptyset$, $S_{\leq}=E$ and $S_{>} =([n] \times [n]) \backslash (\{(1,1),\ldots,(n,n)\} \cup E)$. We did not find an easy way to adapt this formulation to interval radii marble graph model, which is the reason why the model is not considered in this section.

We consider the following convex optimisation formulation:
\begin{equation} \label{formulation:existence_unit_ball_graph_optimisation_trace}
    \begin{aligned}
        & \underset{G}{\text{minimise}}
        & & \sum_{(i,j) \in S_=} (g_{ij}(G) - a_{ij}^2)^2 \\
        & \text{subject to}
        && g_{ij}(G) \leq \overline{a}_{ij}^2 \text{ for } (i,j) \in S_{\leq},\\
        &&& g_{ij}(G) \geq \underline{a}_{ij}^2 \text{ for } (i,j) \in S_{\geq},\\
        &&& g_{ij}(G) \leq \overline{a}_{ij}^2-\varepsilon \text{ for } (i,j) \in S_{<},\\
        &&& g_{ij}(G) \geq \underline{a}_{ij}^2+\varepsilon \text{ for } (i,j) \in S_{>},\\
        &&& \sum_{1 \leq i,j \leq n} G_{ij} = 0, \\
        &&& G \succeq 0.
    \end{aligned}
\end{equation}

In the formulation~\eqref{formulation:existence_unit_ball_graph_optimisation_trace}, we have replaced the strict inequalities $g_{ij}(G) < \overline{a}_{ij}^2$ (resp. $g_{ij}(G) > \underline{a}_{ij}^2$)  by $g_{ij}(G) \leq \overline{a}_{ij}^2-\varepsilon$ (resp. $g_{ij}(G) \geq \underline{a}_{ij}^2+\varepsilon$) for a sufficiently small epsilon to obtain non-strict inequalities. The constraint $\sum_{1 \leq i,j \leq n} G_{ij} = 0$ removes the freedom that comes from translations of configurations.  

Solving the above semidefinite program, gives us a Gram matrix $\hat{G}$. To find a point configuration in $\R^d$, we find the $d$ largest eigenvalues $\lambda_1,\ldots,\lambda_d \in \R$ of $\hat{G}$ and the corresponding eigenvectors $v_1,\ldots,v_d \in \R^n$. For $1 \leq i \leq n$, we define 
\begin{equation*}
    x_i = \left( \sqrt{\lambda_1} \cdot v_{1,i},\ldots,\sqrt{\lambda_d} \cdot v_{d,i}  \right).
\end{equation*}
This method is called classical multidimensional scaling (cMDS)~\citep{cox2008multidimensional}. If the optimal Gram matrix $\hat{G}$ found by the semidefinite program has rank at most $d$, the point configuration constructed using cMDS satisfies exactly the lower and upper bounds for distances. If the rank of the Gram matrix is greater than $d$, then the realisation in $\R^d$ might not satisfy all the bounds and we obtain an approximate solution.  

\begin{remark}
    Adding $\lambda \cdot \tr(G)$ or $-\lambda \cdot \tr(G)$ to the objective function has been suggested as a heuristic to obtain a low-rank Gram matrix.  
    The idea behind adding $-\lambda \cdot \tr(G)$ to the objective is that, in the case of noisy distances, the points tend to crowd towards the centre and maximising trace spreads out the locations of the beads~\citep{biswas2006semidefinite2}. An alternative approach is to minimise $\tr(G)$, since trace is the convex envelope of the rank function in the unit ball. Minimising trace instead of rank was suggested by \citet{fazel2001rank} and it has been applied to the 3D genome reconstruction setting from \citet{belyaeva2021identifying}. In our experiments, modifying the objective function did not improve results, so we used the original formulation~\eqref{formulation:existence_unit_ball_graph_optimisation_trace}. The term  $\lambda \cdot \tr(G)$ can be added in our implementation by giving $\lambda \in \R$ as an option.
\end{remark}

As the final step,  we solve the optimisation problem
\begin{align} \label{gradient_descent_objective}
    \min &\left(\sum_{(i,j) \in S_{\leq} \cup S_<} (\|x_i -x_j \|-\overline{a}_{ij})_+^2 + \sum_{(i,j) \in S_{\geq} \cup S_>}  (\|x_i -x_j \|-\underline{a}_{ij})_-^2 \right.\nonumber\\
    &\left.+ \sum_{(i,j) \in S_=}  (\|x_i -x_j \|-a_{ij})^2 \right)
\end{align}
using gradient descent initialised at the solution found by the multidimensional scaling. In the above formulation, $\alpha_+ = \max(0,\alpha)$ and $\alpha_- = \max(0,-\alpha)$. This step is the same as by \citet{leung2010sdp}. Even if the $n$-dimensional solution corresponding to the optimal Gram matrix satisfies the constraints~\eqref{eqn:constraints}, the $d$-dimensional solution obtained by cMDS might not. The goal of this step is to improve the $d$-dimensional solution.

To summarise, our reconstruction algorithm performs the following three steps:
\begin{enumerate}
    \item\label{it:alg:sdp} Running the SDP program~\eqref{formulation:existence_unit_ball_graph_optimisation_trace}. 
    \item\label{it:alg:gram} Using classical multidimensional scaling to obtain a 3D solution from the optimal Gram matrix obtained in step \ref{it:alg:sdp}.
    \item\label{it:alg:local} Minimising the objective~\eqref{gradient_descent_objective}
    using gradient descent with initialisation equal to the solution obtained in step \ref{it:alg:gram}. 
\end{enumerate}

\subsection{Experiments} \label{sec:experiments}

We run step \ref{it:alg:sdp} using \texttt{cvx} with the default solver \texttt{SDPT3}.
For step \ref{it:alg:local}, we use the \texttt{fminunc} command in \texttt{MATLAB}. The algorithm used for local optimisation is quasi-Newton. We use two measures of dissimilarity to study the quality of the reconstruction: 
\begin{enumerate}
    \item The modification of the objective function~\eqref{gradient_descent_objective} where we only keep the inequalities corresponding to the HiC matrix:
    \begin{equation} \label{violation}
        \min \sum_{(i,j): F(i,j)=1} (\|x_i -x_j \|-1)_+^2 + \sum_{(i,j): F(i,j)=0}  (\|x_i -x_j \| - 1)_-^2. 
    \end{equation}
    We call the value corresponding to the function~(\ref{violation}) the violation.
    \item The Procrustes distance between the original structure and the reconstructed structure. The Procrustes transformation finds the affine transformation that minimises the sum of squared distances between two ordered point sets. The Procrustes distance is the sum of squared distances between the aligned point sets divided by the scale of the first point set. We do allow scaling in the Procrustes algorithm.
\end{enumerate}
The Procrustes distance can be computed only if the original structure is known, e.g.\ for synthetic data sets. Moreover, it is comparing the reconstruction with the original structure, which is more than comparing Hi-C matrices. Therefore, we mostly focus on the first measure of dissimilarity.

\textbf{Synthetic data.} Our synthetic data set consists of random walks with $60$ points, where each step of the random walk is generated from the 3D sphere (all steps are of length one). A threshold $r$ is used to construct the Hi-C matrix. If the distance between two points is at most $r$, then the corresponding value in the Hi-C matrix is set to be 1, and otherwise it is set to be 0. For all models, we assume that we know the distance between neighbouring beads on a chromosome, which in the experiments is equal to one. 

\begin{figure}[ht]
     \centering
     \begin{subfigure}[b]{0.3\textwidth}
         \centering
         \includegraphics[width=\textwidth]{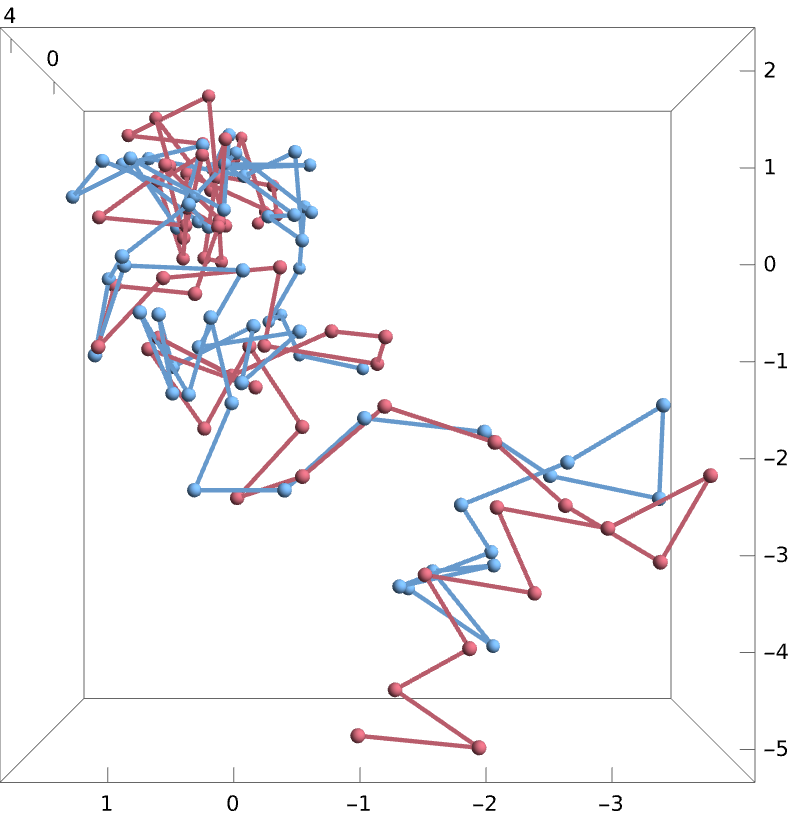}
         \caption{Unit ball}
         \label{fig:reconstruction-ub}
     \end{subfigure}
     \qquad
     \begin{subfigure}[b]{0.3\textwidth}
         \centering
         \includegraphics[width=\textwidth]{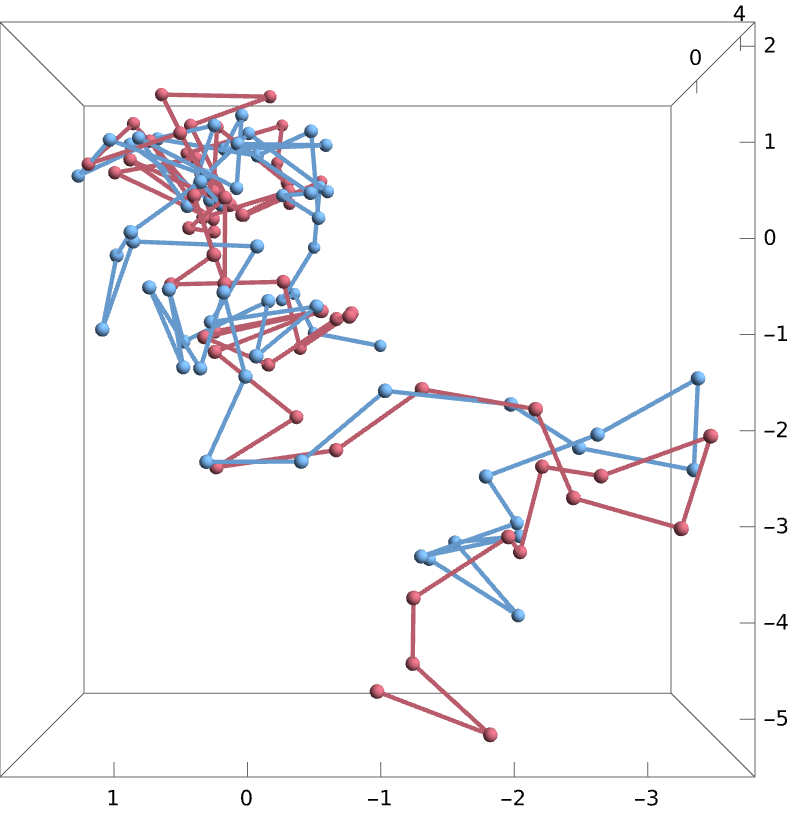}
         \caption{Marble}
         \label{fig:reconstruction-pg}
     \end{subfigure}\\[2ex]     
     \begin{subfigure}[b]{0.3\textwidth}
         \centering
         \includegraphics[width=\textwidth]{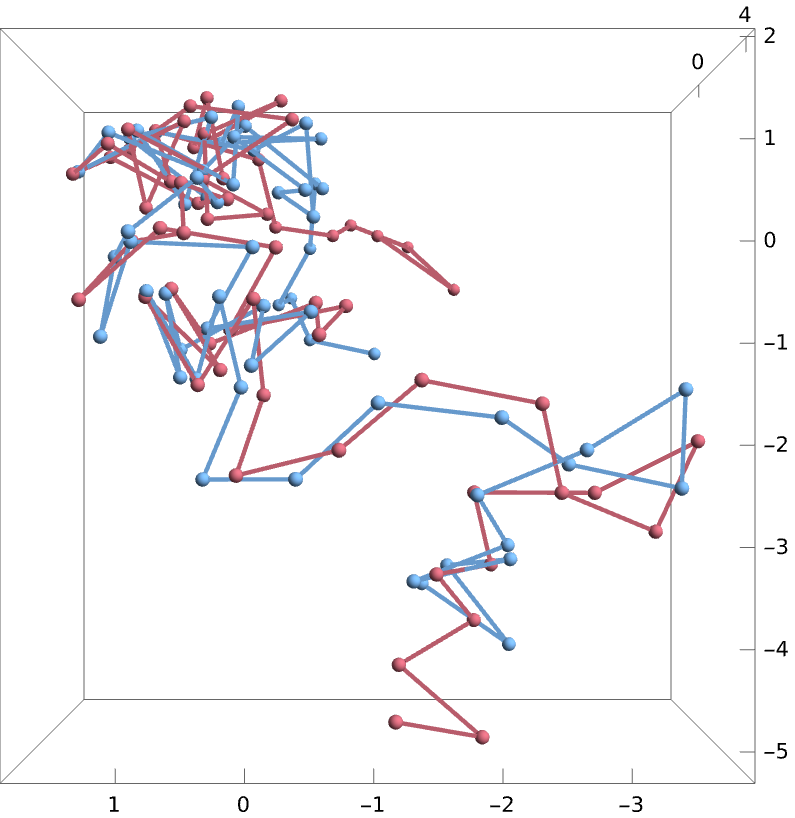}
         \caption{2\%}
         \label{fig:reconstruction-ei2}
     \end{subfigure}
     \quad
     \begin{subfigure}[b]{0.3\textwidth}
         \centering
         \includegraphics[width=\textwidth]{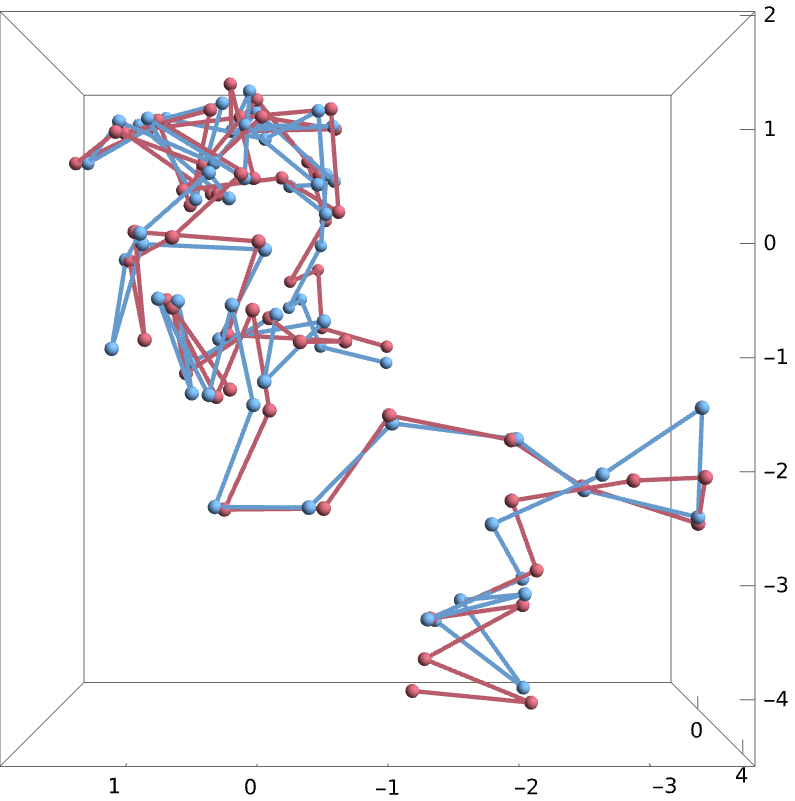}
         \caption{5\%}
         \label{fig:reconstruction-ei5}
     \end{subfigure}
     \quad
     \begin{subfigure}[b]{0.3\textwidth}
         \centering
         \includegraphics[width=\textwidth]{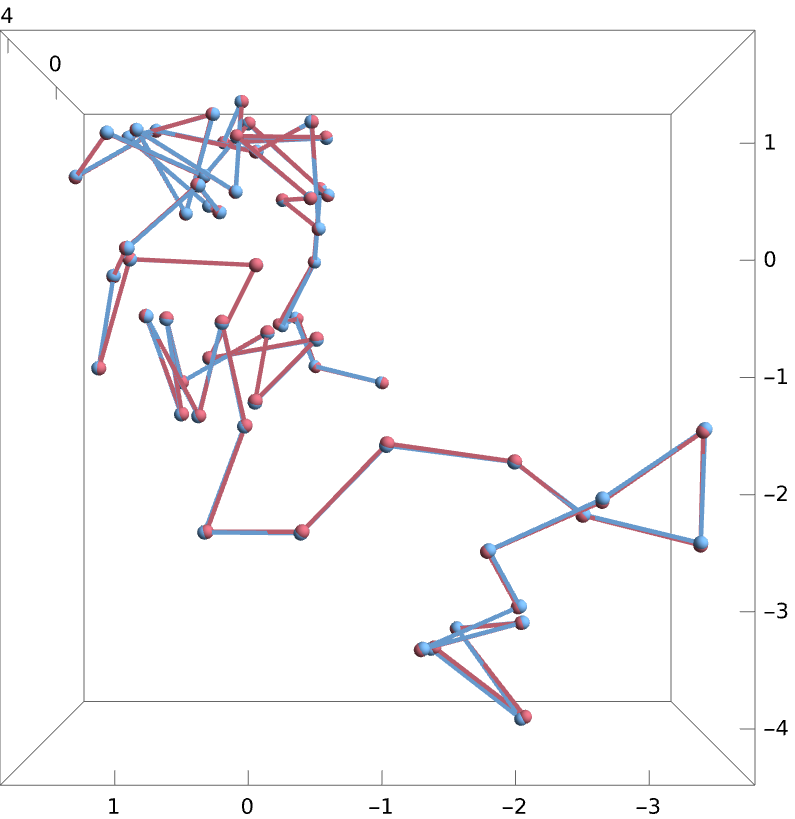}
         \caption{10\%}
         \label{fig:reconstruction-ei10}
     \end{subfigure}
    \caption{Examples of reconstructions obtained using our reconstruction method using the backbone information that the distances between neighbouring beads are equal to 1: \labelcref{fig:reconstruction-ub} Unit ball model, \labelcref{fig:reconstruction-pg} Marble graph model, \labelcref{fig:reconstruction-ei2,fig:reconstruction-ei5,fig:reconstruction-ei10} Equations and inequalities model with 2\%, 5\% and 10\% of edge lengths known. The chromosome has 60 beads generated from the normalised Brownian motion model and the threshold for a contact is $r=3$. The backbone of the original data is shown in blue, while red shows the reconstruction.}
    \label{fig:example_reconstructions}
\end{figure}

For each value $r \in \{1,2,3,4,5,6\}$, we generated 50 random walks with 60 vertices. 
\Cref{fig:example_reconstructions} shows example reconstructions using our reconstruction algorithm for unit ball, marble graph and models with equations and inequalities for the threshold $r=3$. 
In \Cref{fig:violationhic}, we show the median violation~(\ref{violation}) for the different models on a log-scale for thresholds $r$. 
Some results are numerically 0 which yields the extra line in the graphic. 
In \Cref{fig:procrustes}, we compare the Procrustes distance for these models.

In all three figures, the equations and inequalities model with 10\% of edge lengths known gives very good reconstructions. The same model with 5\% of edge lengths known also gives a relatively good reconstruction in~\Cref{fig:reconstruction-ei5} and small Procrustes distance, but it gives the worst violation out of all the models. We explored the performance of the violation for the equations and inequalities model with different percentage of edge lengths and observed that the violation increases when going from 0\% of edges known to about 5\% edges known, and decreases after that. One explanation of this could be that with no edge lengths known, i.e.\ in the unit ball model, it is possible to obtain a 3D structure with low violation although it might be different from the original structure. When adding more edge lengths, initially the violation can get worse, although the Procrustes distance decreases, and if enough edges are added then both violation and Procrustes distance decrease. The other three models get the high-level details of the reconstructions correct, but details are not constructed correctly. These observations are consistent with the identifiability results in the earlier sections, where we observed that a certain number of equality constraints are needed to obtain finitely many or a unique reconstruction.

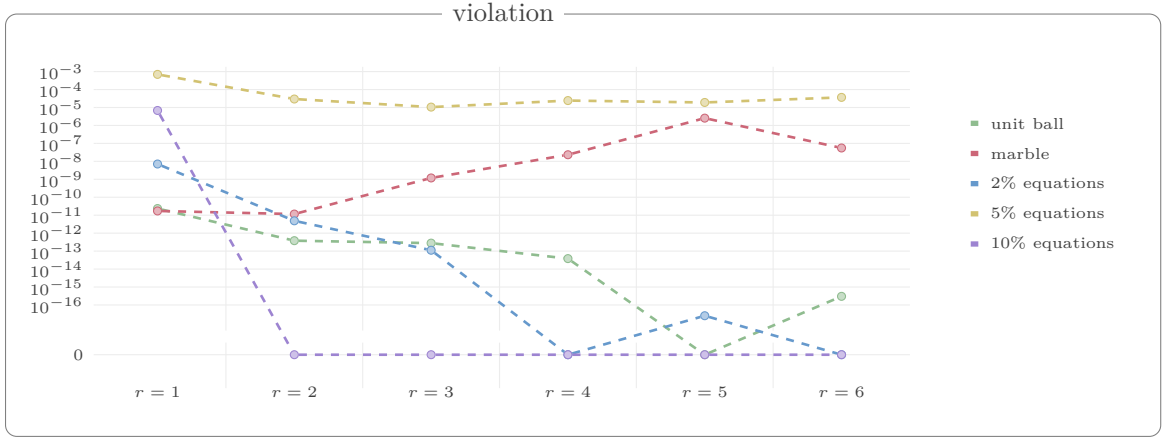
\begin{figure}[ht]
    \centering
    \begin{tikzpicture}[scale=0.4,xscale=0.8]
    	\draw[colfg!50!colbg,rounded corners=0.15cm] (-3,-11.9393) rectangle (35,0.310657);
    	\node[hlabelsty,fill=white] at (13.375,0.310657) {violation\vphantom{hg}};
    	\foreach \n [count=\ii,evaluate=\ii as \p using 4*0.75/2+(6*0.75*(\ii-1)+0.5)] in {1,2,...,6}
    	{
    		\node[below,alabelsty] at (\p,-10.1893) {$r=\n$};
    	}
    	\foreach \n [count=\ii,evaluate=\ii as \p using 5*0.75+(6*0.75*(\ii-1)+0.5)] in {1,2,...,6}
    	{
    		\draw[bline] (\p,-1.43934)--(\p,-9.04696);
    		\draw[bline] (\p,-9.44696)--(\p,-10.0893);
    	}
    	\foreach \y/\l in {-1.65604/$10^{-3}$,-2.20805/$10^{-4}$,-2.76006/$10^{-5}$,-3.31207/$10^{-6}$,-3.86408/$10^{-7}$,-4.4161/$10^{-8}$,-4.96811/$10^{-9}$,-5.52012/$10^{-10}$,-6.07213/$10^{-11}$,-6.62414/$10^{-12}$,-7.17615/$10^{-13}$,-7.72817/$10^{-14}$,-8.28018/$10^{-15}$,-8.83219/$10^{-16}$}
    	{
    		\draw[bline] (0,\y)--(-0.1,\y) node[left,alabelsty] {\l};
    		\draw[bline] (0,\y)--(26.75,\y);
    	}
    	\draw[bline] (26.75,-9.84696)--(-0.1,-9.84696) node[left,alabelsty] {0};
    	\begin{scope}[xshift=1.5cm]
    		\foreach \median [count=\ii,evaluate=\ii as \i using 6*0.75*(\ii-1)+0.5] in {-5.87423, -6.99971, -6.97557, -7.544, -9.84696, -9.84696}
    		{
    			\node[mediannode1,draw=col1,fill=col1!50!colbg] (a\ii) at (\i,\median) {};
    		}
    	\end{scope}
    	\begin{scope}[xshift=1.5cm]
    		\foreach \median [count=\ii,evaluate=\ii as \i using 6*0.75*(\ii-1)+0.5] in {-5.84943, -6.19949, -5.10501, -3.15311, -3.05228, -3.94284}
    		{
    			\node[mediannode2,draw=col2,fill=col2!50!colbg] (b\ii) at (\i,\median) {};
    		}
    	\end{scope}
    	\begin{scope}[xshift=1.5cm]
    		\foreach \median [count=\ii,evaluate=\ii as \i using 6*0.75*(\ii-1)+0.5] in {-4.24247, -6.10172, -6.70558, -7.68934, -9.84696, -9.84696}
    		{
    			\node[mediannode3,draw=col3,fill=col3!50!colbg] (c\ii) at (\i,\median) {};
    		}
    	\end{scope}
    	\begin{scope}[xshift=1.5cm]
    		\foreach \median [count=\ii,evaluate=\ii as \i using 6*0.75*(\ii-1)+0.5] in {-1.68934, -2.51812, -2.71288, -2.55726, -2.49706, -2.40953}
    		{
    			\node[mediannode4,draw=col4,fill=col4!50!colbg] (d\ii) at (\i,\median) {};
    		}
    	\end{scope}
    	\begin{scope}[xshift=1.5cm]
    		\foreach \median [count=\ii,evaluate=\ii as \i using 6*0.75*(\ii-1)+0.5] in {-2.84434, -9.84696, -9.84696, -9.84696, -9.84696, -9.84696}
    		{
    			\node[mediannode5,draw=col5,fill=col5!50!colbg] (e\ii) at (\i,\median) {};
    		}
    	\end{scope}
    	\begin{scope}
    	\foreach \r [remember=\r as \ro (initially 1)] in {2,3,4,5,6}
    		{
    			\draw[medianline,draw=col1] (a\ro)--(a\r);
    			\draw[medianline,draw=col2] (b\ro)--(b\r);
    			\draw[medianline,draw=col3] (c\ro)--(c\r);
    			\draw[medianline,draw=col4] (d\ro)--(d\r);
    			\draw[medianline,draw=col5] (e\ro)--(e\r);
    		}
    	\end{scope}
    	\begin{scope}[shift={(28.75,-1.5)}]
    		\foreach \l [count=\i] in {unit ball,marble,2\% equations,5\% equations,10\% equations}
    		{
    			\node[mediannode\i,draw=col\i,fill=col\i!50!colbg] (l\i) at (0,-\i) {};
                \draw[medianline,col\i] (l\i)--++(0.25,0) node[alabelsty,anchor=west] {\l};
    		}
    	\end{scope}
    \end{tikzpicture}
    \caption{Medians of the violation to the HiC-matrix for unit ball model, marble graph model and equations and inequalities model with 2\%, 5\% and 10\% of edge lengths known. The medians are plotted in log-scale. The chromosomes have 60 beads generated from the normalised Brownian motion model with backbone. The violation is shown for different thresholds $r$.}
    \label{fig:violationhic}
\end{figure}

\begin{figure}[ht]
    \centering
    \begin{tikzpicture}[scale=0.4,xscale=0.8]
    	\draw[colfg!50!colbg,rounded corners=0.15cm] (-3,-2) rectangle (35,8.2316);
    	\node[hlabelsty,fill=white] at (13.375,8.2316) {procrustes distance\vphantom{hg}};
    	\foreach \n [count=\ii,evaluate=\ii as \p using 4*0.75/2+(6*0.75*(\ii-1)+0.5)] in {1,2,...,6}
    	{
    		\node[below,alabelsty] at (\p,-0.3) {$r=\n$};
    	}
    	\foreach \n [count=\ii,evaluate=\ii as \p using 5*0.75+(6*0.75*(\ii-1)+0.5)] in {1,2,...,6}
    	{
    		\draw[bline] (\p,0)--(\p,-0.4);
    	}
    	\foreach \l [evaluate=\l as \y using \l*19.4738] in {0., 0.04, 0.08, 0.12, 0.16, 0.2, 0.24, 0.28, 0.32}
    	{
    		\draw[bline] (0,\y)--(-0.1,\y) node[left,alabelsty] {\l};
    		\draw[bline] (0,\y)--(26.75,\y);
    	}
    	\begin{scope}[xshift=0.cm]
    		\foreach \min/\q/\median/\p/\max [count=\ii,evaluate=\ii as \i using 6*0.75*(\ii-1)+0.5] in {1.20948/2.96518/3.98981/5.75632/11.4975,0.36146/0.95731/1.80182/2.95659/7.62653,0.15049/0.52101/0.78731/1.49864/3.33727,0.32993/0.78516/0.99305/1.48431/5.29215,0.57639/1.12213/1.62627/2.34529/9.80356,0.88941/1.78246/2.50433/4.23291/15.3411}
    		{
    			\draw[boxplot,draw=col1,fill=col1!50!colbg] (\i-0.25,\q) rectangle (\i+0.25,\p);
    			\draw[boxplot,draw=col1,fill=col1!50!colbg] (\i-0.25,\median)--(\i+0.25,\median);
    		}
    	\end{scope}
    	\begin{scope}[xshift=0.75cm]
    		\foreach \min/\q/\median/\p/\max [count=\ii,evaluate=\ii as \i using 6*0.75*(\ii-1)+0.5] in {1.50587/3.04531/4.11977/6./11.5701,0.42929/1.07338/1.84543/3.14701/7.66943,0.25889/0.90965/1.35322/2.1906/3.49579,0.48989/1.45102/1.80353/2.37668/6.64076,0.87959/1.90308/2.70125/3.44042/9.79175,1.70472/2.70295/3.71249/5.25357/17.1662}
    		{
    			\draw[boxplot,draw=col2,fill=col2!50!colbg] (\i-0.25,\q) rectangle (\i+0.25,\p);
    			\draw[boxplot,draw=col2,fill=col2!50!colbg] (\i-0.25,\median)--(\i+0.25,\median);
    		}
    	\end{scope}
    	\begin{scope}[xshift=1.5cm]
    		\foreach \min/\q/\median/\p/\max [count=\ii,evaluate=\ii as \i using 6*0.75*(\ii-1)+0.5] in {0.78706/1.71335/2.30799/3.2915/5.55312,0.13471/0.566/0.91785/1.59177/3.69013,0.06797/0.33047/0.52153/0.8822/2.06512,0.20596/0.49011/0.68106/1.00602/2.90097,0.14418/0.66992/1.03969/1.96585/8.16903,0.28895/0.86702/1.42457/3.07789/12.7252}
    		{
    			\draw[boxplot,draw=col3,fill=col3!50!colbg] (\i-0.25,\q) rectangle (\i+0.25,\p);
    			\draw[boxplot,draw=col3,fill=col3!50!colbg] (\i-0.25,\median)--(\i+0.25,\median);
    		}
    	\end{scope}
    	\begin{scope}[xshift=2.25cm]
    		\foreach \min/\q/\median/\p/\max [count=\ii,evaluate=\ii as \i using 6*0.75*(\ii-1)+0.5] in {0.05464/0.44913/0.66303/1.22347/3.15903,0.03849/0.09981/0.14897/0.30683/2.47264,0.01965/0.07093/0.15708/0.32998/0.85908,0.03447/0.16921/0.33651/0.51352/2.84682,0.02045/0.20366/0.37188/1.00768/4.82647,0.10896/0.41018/0.63202/1.89871/10.0211}
    		{
    			\draw[boxplot,draw=col4,fill=col4!50!colbg] (\i-0.25,\q) rectangle (\i+0.25,\p);
    			\draw[boxplot,draw=col4,fill=col4!50!colbg] (\i-0.25,\median)--(\i+0.25,\median);
    		}
    	\end{scope}
    	\begin{scope}[xshift=3.cm]
    		\foreach \min/\q/\median/\p/\max [count=\ii,evaluate=\ii as \i using 6*0.75*(\ii-1)+0.5] in {0./0.00033/0.00159/0.02127/0.1508,0./0.00015/0.00089/0.00543/0.1291,0./0.00011/0.00072/0.00376/0.07626,0.00001/0.00019/0.00223/0.00782/0.08843,0./0.00034/0.00449/0.03925/0.43069,0.00005/0.00066/0.01867/0.07299/0.45026}
    		{
    			\draw[boxplot,draw=col5,fill=col5!50!colbg] (\i-0.25,\q) rectangle (\i+0.25,\p);
    			\draw[boxplot,draw=col5,fill=col5!50!colbg] (\i-0.25,\median)--(\i+0.25,\median);
    		}
    	\end{scope}
    	\begin{scope}[shift={(28.75,6.1158)}]
    		\foreach \l [count=\i] in {unit ball,marble,2\% equations,5\% equations,10\% equations}
    		{
    			\draw[line width=2pt,col\i] (0,-\i)--++(0.25,0) node[alabelsty,anchor=west] {\l};
    		}
    	\end{scope}
    \end{tikzpicture}
    \caption{Medians and 25/75\%-quantiles of the Procrustes distance (including scaling) for unit ball model, marble graph model and equations and inequalities model with 2\%, 5\% and 10\% of edge lengths known. The chromosomes have 60 beads generated from the normalised Brownian motion model with backbone. The violation is shown for different thresholds $r$.}
    \label{fig:procrustes}
\end{figure}
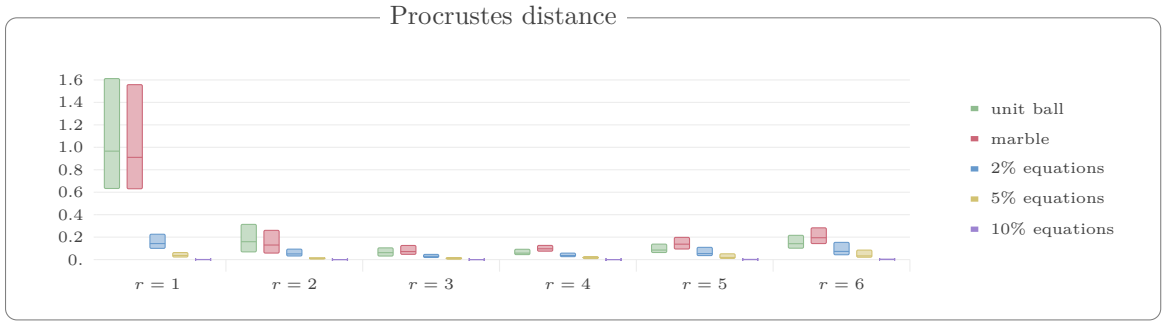

To have some idea of how good the results are in relation to the number of beads we computed the Procrustes distance allowing scaling for 50 repetitions of the experiment with simulated data for 10 to 150 beads. The data was generated using $r=3$. We also compare the results with ShRec3D algorithm from \citet{lesne20143d} that combines shortest path computations with multidimensional scaling.
The result can be seen in \Cref{fig:proccompare}. For these small examples our methods turn out to have lower Procrustes distance than the one obtained from ShRec3D.

\begin{figure}[ht]
    \centering
    \begin{tikzpicture}[xscale=0.8]
    	\draw[colfg!50!colbg,rounded corners=0.15cm] (-1.25,-1.25) rectangle (14,5.2);
    	\node[hlabelsty,fill=white] at (5,5.2) {Procrustes distance comparison\vphantom{hg}};
    	\foreach \l [evaluate=\l as \y using 0.13355*\l*50] in {0., 0.1, 0.2, 0.3, 0.4, 0.5, 0.6}
    	{
    		\draw[bline] (10.5,\y)--(0.5,\y) node[alabelsty,anchor=east] {\l};
    	}
    	\node[alabelsty,anchor=south,rotate=90] at (-0.5,0.01*225) {average Procrustes distance};
    	\foreach \t [count=\i,evaluate=\i as \x using 0.66667*\i] in {10, 20, 30, 40, 50, 60, 70, 80, 90, 100, 110, 120, 130, 140, 150}
    	{
    		\draw[bline] (\x,0)--(\x,-0.15) node[alabelsty,anchor=north] {\t};
    	}
    	\node[alabelsty,anchor=north] at (5,-0.675) {number of beads};
    
    	\foreach \ll [count=\m] in {%
    		{3.11069, 1.18952, 0.67184, 0.64534, 0.53986, 0.55828, 0.72992, 0.71181, 0.90845, 0.79096, 0.6654, 0.93535, 0.95465, 1.17411, 1.3632},%
    		{2.67668, 1.46788, 0.91541, 0.72327, 0.64624, 0.5992, 0.7053, 0.66733, 0.76679, 0.76052, 0.63881, 0.82521, 1.39555, 1.48584, 1.33165},%
    		{2.86805, 1.17504, 0.72087, 0.44281, 0.36774, 0.25553, 0.20011, 0.15002, 0.13769, 0.10095, 0.0871, 0.10954, 0.14921, 0.26439, 0.15494},%
    		{2.71697, 1.24595, 0.52701, 0.28045, 0.14123, 0.08209, 0.02977, 0.01644, 0.0065, 0.0064, 0.0231, 0.04214, 0.06377, 0.11748, 0.08034},%
    		{2.15952, 0.95903, 0.25799, 0.0712, 0.01627, 0.00245, 0.00015, 0.0001, 0.00001, 0.00594, 0.00854, 0.0306, 0.05998, 0.03849, 0.06775},%
    		{4.00008, 2.39667, 1.70972, 1.56752, 1.32724, 1.2019, 1.14908, 1.05594, 1.11868, 1.06796, 1.01045, 1.01859, 1.07601, 1.15174, 1.30511}%
    	}
    	{
    		\foreach \y [count=\i,evaluate=\i as \x using 0.66667*\i]in \ll
    		{
    			\node[mediannode\m,draw=col\m,fill=col\m!50!colbg] (v\m\i) at (\x,\y) {};
    		}
    		\foreach \i [remember=\i as \io (initially 1)] in {2,3,...,15}
    		{
    			\draw[timeline,draw=col\m] (v\m\i)--(v\m\io);
    		}
    	}
    	\begin{scope}[shift={(11.25,3.75)}]
    		\foreach \l [count=\i] in {unit ball,marble,2\% equations,5\% equations,10\% equations,ShRec3D}
    		{
    			\node[mediannode\i,draw=col\i,fill=col\i!50!colbg] (l\i) at (0,-0.5*\i) {};
                \draw[timeline,draw=col\i] (l\i)--++(0.2,0) node[alabelsty,anchor=west] {\l};
    		}
    	\end{scope}
    \end{tikzpicture}
    \caption{Comparison of the Procrustes distance. The value shows the average of all the Procrustes distances of 50 repetitions for different methods.}
    \label{fig:proccompare}
\end{figure}

In \Cref{fig:timings} we compare the time needed for different number of beads. The figure shows the average time in seconds over 50 repetitions of the experiment with simulated data.  A limitation of the SDP based methods is that they are relatively slow, so we consider up to 150 beads. In this case, ShRec3D algorithm is much faster. In summary, the SDP based methods using models considered in this paper give better reconstructions but they are considerably slower compared to ShRec3D.

\begin{figure}[ht]
    \centering
    \begin{tikzpicture}[xscale=0.8]
    	\draw[colfg!50!colbg,rounded corners=0.15cm] (-1.25,-1.25) rectangle (14,5.2);
    	\node[hlabelsty,fill=white] at (5,5.2) {time comparison\vphantom{hg}};
    	\foreach \l [evaluate=\l as \y using 0.00029*\l*50] in {0, 40, 80, 120, 160, 200, 240, 280}
    	{
    		\draw[bline] (10.5,\y)--(0.5,\y) node[alabelsty,anchor=east] {\l};
    	}
    	\node[alabelsty,anchor=south,rotate=90] at (-0.5,0.01*225) {average time (seconds)};
    	\foreach \t [count=\i,evaluate=\i as \x using 1.*\i] in {10, 20, 30, 40, 50, 60, 70, 80, 90, 100}
    	{
    		\draw[bline] (\x,0)--(\x,-0.15) node[alabelsty,anchor=north] {\t};
    	}
    	\node[alabelsty,anchor=north] at (5,-0.675) {number of beads};
    
    	\foreach \ll [count=\m] in {%
    		{0.01556, 0.03009, 0.0566, 0.10763, 0.16861, 0.30367, 0.50893, 0.81387, 1.30642, 1.98083},%
    		{0.03704, 0.11326, 0.21345, 0.34702, 0.57836, 1.00955, 1.44701, 2.05252, 2.81617, 4.0519},%
    		{0.01544, 0.03552, 0.06678, 0.12748, 0.23069, 0.43662, 0.73439, 1.23113, 1.98822, 2.94694},%
    		{0.01409, 0.03626, 0.08197, 0.16032, 0.28764, 0.57213, 0.96286, 1.36487, 2.41736, 3.96355},%
    		{0.02136, 0.08521, 0.15803, 0.20864, 0.32763, 0.66721, 1.10847, 1.73967, 2.64405, 3.95206},%
    		{0.00003, 0.00006, 0.00005, 0.00006, 0.00008, 0.0001, 0.00012, 0.00014, 0.00017, 0.00023}%
    	}
    	{
    		\foreach \y [count=\i,evaluate=\i as \x using 1.*\i]in \ll
    		{
    			\node[mediannode\m,draw=col\m,fill=col\m!50!colbg] (v\m\i) at (\x,\y) {};
    		}
    		\foreach \i [remember=\i as \io (initially 1)] in {2,3,...,10}
    		{
    			\draw[timeline,draw=col\m] (v\m\i)--(v\m\io);
    		}
    	}
    	\begin{scope}[shift={(11.25,3.75)}]
    		\foreach \l [count=\i] in {unit ball,marble,2\% equations,5\% equations,10\% equations,ShRec3D}
    		{
    			\node[mediannode\i,draw=col\i,fill=col\i!50!colbg] (l\i) at (0,-0.5*\i) {};
                \draw[timeline,draw=col\i] (l\i)--++(0.2,0) node[alabelsty,anchor=west] {\l};
    		}
    	\end{scope}
    \end{tikzpicture}
    \caption{Time comparison of the methods using backbone data on synthetic examples from before. The time is given on average and is shown in seconds for 50 repetitions.}
    \label{fig:timings}
\end{figure}

\textbf{Real data.} \citet{stevens20173d} combined imagining with a HiC protocol to obtain 3D structures for single G1 phase haploid mouse embryonic stem cells \citep[data accessible at NCBI GEO database][accession GSE80280]{edgar2002gene}.  We applied our reconstruction algorithm with unit ball model on Chromosome 1 in Sample GSM2219497 of this dataset. The chromosome consists of $\sim193 \cdot 10^6$ base pairs. Each bead in our model corresponds to one million base pairs. There is a contact between two beads if and only if there is at least one contact between base pairs in the corresponding beads. We assume that contacts
correspond to distances $r \leq 2$. Our data consists of 193 beads and 802 contacts. Note that this aggregation of data destroys possible backbone information. Transferring the information suitably to the aggregated data is the subject of further investigation. Here, we restrict attention to the case where the backbone information is lost.
The 3D reconstruction obtained using the unit ball model without backbone information is depicted in~\Cref{fig:realdatareconstruct}. \Cref{fig:realdatabackbone} shows the backbone of the reconstruction and \Cref{fig:realdatacontacts} the contacts in the reconstruction. Extended Data Figure 1d from \citet{stevens20173d} shows an analogous reconstruction at 100kb scale with contacts that are more than distance four apart coloured red. \Cref{fig:histogram} shows the distribution of distances in the reconstructed structure for pairs of beads that are in contact and not in contact as a histogram. The maximal distance of a contact in the reconstruction is $\sim2.37643$ and the minimal distance of a non-contact in the reconstruction is $\sim1.48434$. 

\begin{figure}[ht]
     \centering
     \begin{subfigure}[b]{0.45\textwidth}
         \centering
         \includegraphics[width=0.9\textwidth]{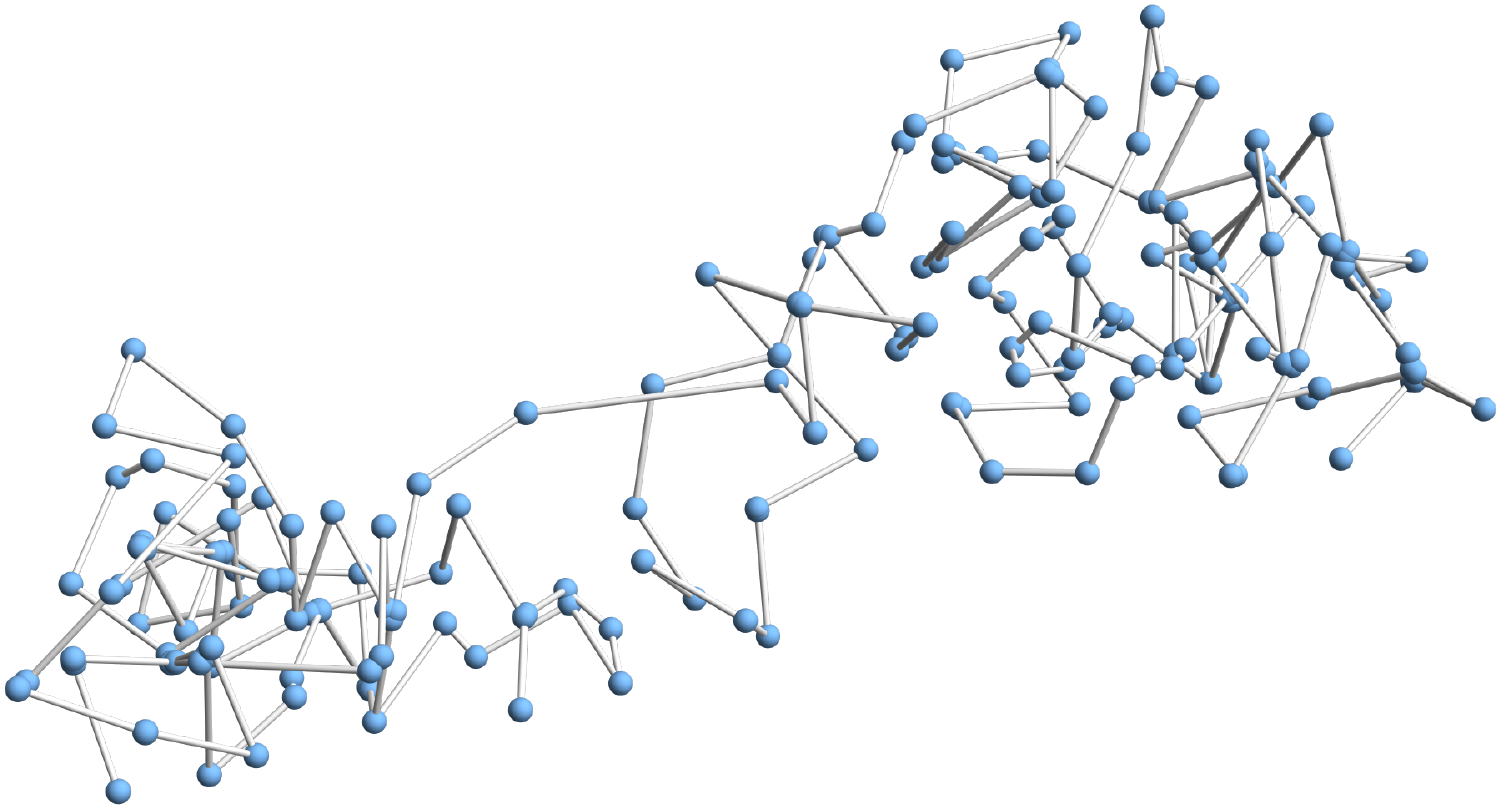}
         \caption{Reconstructed backbone}
         \label{fig:realdatabackbone}
     \end{subfigure}
     \hfill
     \begin{subfigure}[b]{0.45\textwidth}
         \centering
         \includegraphics[width=0.9\textwidth]{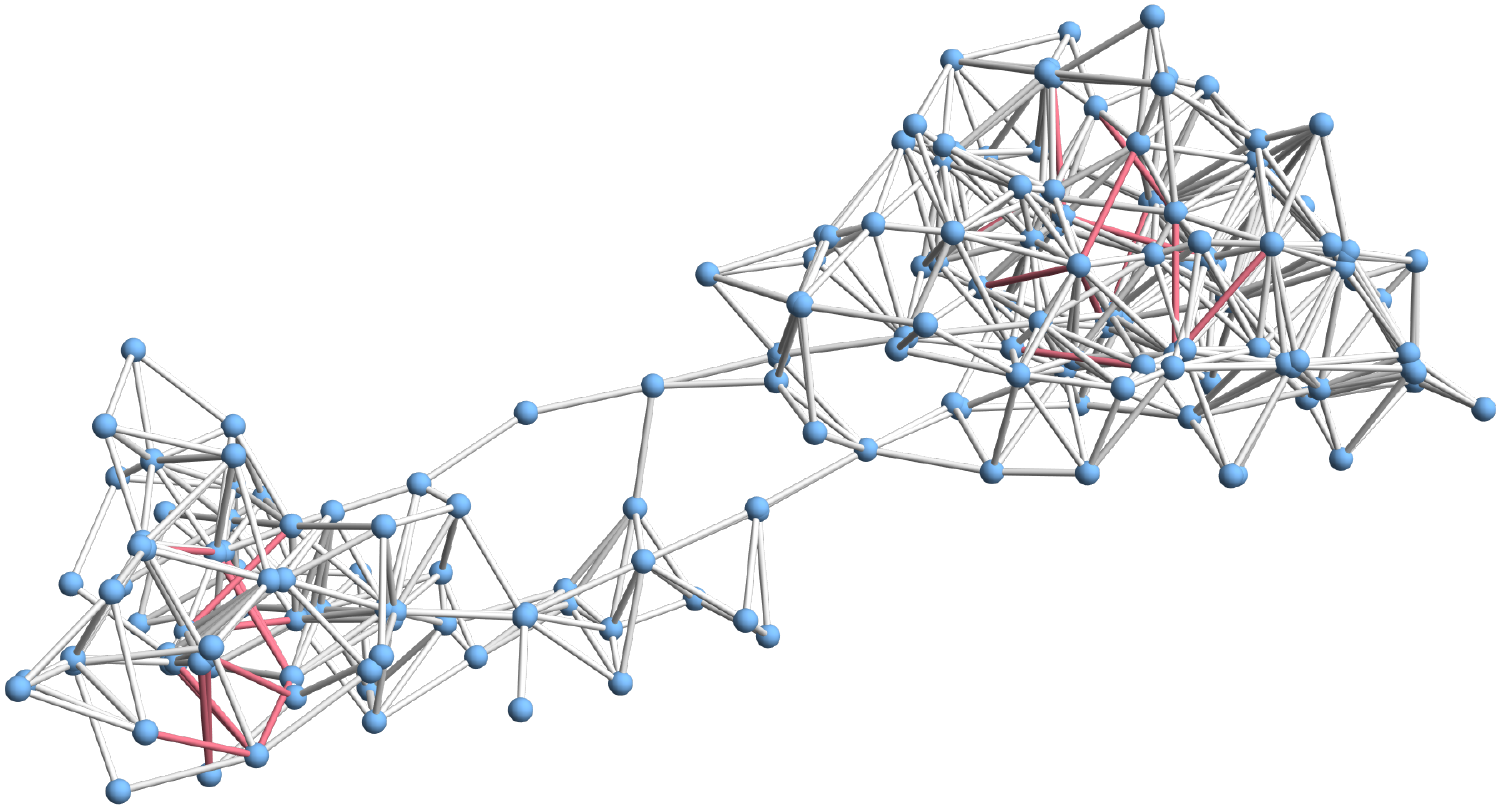}
         \caption{Reconstructed contacts}
         \label{fig:realdatacontacts}
     \end{subfigure}
     \caption{Reconstruction from a real data set using our reconstruction algorithm for the unit ball model without backbone information. (a) Backbone in the reconstruction. (b) Contacts in the reconstruction, i.e.\ edges are shown between beads that are in contact according to the aggregated HiC data. The edges which are too large by more than 10\% in the reconstruction are coloured in red.}
     \label{fig:realdatareconstruct}
\end{figure}

\begin{figure}[ht]
    \centering
    \begin{tikzpicture}[xscale=0.39]
        \draw[colfg!50!colbg,rounded corners=0.15cm] (-3.5,-1.25) rectangle (29.5,5.2);
	    \node[hlabelsty,fill=white] at (12,5.2) {edge lengths\vphantom{hg}};
        \fill[black!3,rounded corners] (0,0)rectangle(2,0.01*465);
        \foreach \l in {0,100,200,300,400}
        {
            \draw[bline] (24,0.01*\l)--(-0.5,0.01*\l) node[alabelsty,anchor=east] {\l};
        }
        \node[alabelsty,anchor=south,rotate=90] at (-2.25,0.01*225) {count};
        \foreach \t in {0,1,...,24}
        {
            \draw[bline] (\t,0)--(\t,-0.15) node[alabelsty,anchor=north] {\t};
        }
        \node[alabelsty,anchor=north] at (12,-0.675) {distance};
        \begin{scope}[xshift=25cm,yshift=4cm]
    		\foreach \l/\col [count=\i] in {edges/colG,non-edges/colO}
    		{
    			\draw[line width=2pt,\col] (0,-\i)--++(0.25,0) node[alabelsty,anchor=west] {\l};
    		}
        \end{scope}
        \foreach \l/\n in {0.2/44,0.4/29,0.6/26,0.8/29,1./23,1.2/28,1.4/33,1.6/49,1.8/78,2./179,2.2/260,2.4/24}
        {
            \fill[colG,opacity=0.5,draw=white,rounded corners=1] (\l-0.2,0) rectangle (\l,0.01*\n);
        }
        \foreach \l/\n in {1.6/2,1.8/18,2./330,2.2/340,2.4/191,2.6/182,2.8/180,3./220,3.2/269,3.4/419,3.6/446,3.8/461,4./455,4.2/330,4.4/285,4.6/296,4.8/301,5./369,5.2/347,5.4/351,5.6/308,5.8/237,6./245,6.2/238,6.4/222,6.6/234,6.8/236,7./238,7.2/212,7.4/195,7.6/161,7.8/142,8./182,8.2/176,8.4/163,8.6/159,8.8/174,9./156,9.2/133,9.4/118,9.6/134,9.8/148,10./177,10.2/153,10.4/127,10.6/167,10.8/149,11./127,11.2/137,11.4/154,11.6/160,11.8/145,12./159,12.2/157,12.4/160,12.6/127,12.8/135,13./145,13.2/185,13.4/205,13.6/150,13.8/178,14./164,14.2/181,14.4/145,14.6/170,14.8/187,15./177,15.2/199,15.4/182,15.6/142,15.8/174,16./180,16.2/159,16.4/169,16.6/179,16.8/142,17./140,17.2/148,17.4/153,17.6/164,17.8/148,18./121,18.2/117,18.4/102,18.6/95,18.8/117,19./86,19.2/93,19.4/91,19.6/62,19.8/47,20./58,20.2/58,20.4/46,20.6/43,20.8/43,21./28,21.2/27,21.4/17,21.6/16,21.8/16,22./9,22.2/9,22.4/10,22.6/1,22.8/1,23./3,23.2/5,23.4/2}
        {
            \fill[colO,opacity=0.5,draw=white,rounded corners=1] (\l-0.2,0) rectangle (\l,0.01*\n);
        }
    \end{tikzpicture}
    \caption{Histogram showing the edge lengths and non-edge-lengths of the real data reconstruction. The shaded area corresponds to the threshold below which contacts should occur. We see the actual edge lengths (green) and distances of non-edges (yellow) and their respective amounts.}
    \label{fig:histogram}
\end{figure}

\section{Discussion}

In this article, we analysed the identifiability and reconstruction algorithms for single-cell 3D genome reconstruction in haploid organisms. The main contribution of the paper is introducing a correspondence between the biology of 3D genome reconstruction and rigidity theory. We study four different rigidity theoretic models corresponding to different types, or combinations, of biological measurements: unit ball graphs, classical rigidity together with inequality constraints, penny/marble graphs and interval radii penny and marble graphs. 

The unit ball graph model only requires that the pairwise distances corresponding to contacts are less than or equal to some constant, and the rest of the pairwise distances are larger than the constant. It is the simplest and least restrictive of the models, and perhaps biologically the one that makes the most sense. However, we proved that the 3D reconstruction is never finitely identifiable under this model. Our reconstruction experiments illustrate this as the 3D reconstructions do not match with the original structures in detail. Nevertheless, the high-level details are correct and the contact matrices are similar for the original and reconstructed structures, so acceptable reconstructions might be still possible under this model. One direction to explore is the geometry and topology of the space of all the valid 3D reconstructions.

The second model that we studied assumes that we know some pairwise distances and some bounds for the pairwise distances between the beads. The analysis of this model is closely related to the classical rigidity setting, where some pairwise distances between a set of points are known. We show that finite and unique identifiability of 3D reconstructions are possible under this model. The finite identifiability of the reconstruction is determined only by the pairwise distances; the unique identifiability depends both on the pairwise distances and bounds on the distances. We obtain good reconstructions also in our experiments, especially when we know enough pairwise distances. While mathematically this model has desirable properties, in practice it might be difficult to obtain the pairwise distance measurements.

Finally, we study penny/marble graph models that are special cases of the previous model where all known pairwise distances are equal to one and all other pairwise distances are bounded below by one. This model is more restrictive than the unit ball model, as all contacts correspond to equal pairwise distances, and therefore this model is biologically less realistic. The motivation for studying this model is the observation that identifiability is possible under this model, although studying it is difficult and not much is known. The synthetic reconstruction experiments gave the worst results under this model. We do not know if the reason is that in the particular cases the reconstructions were not identifiable or whether the assumptions of this model are too restrictive. Based on our analysis, we do not recommend using the penny/marble graph model. 

Our more specialised interval radii penny/marble graph model allows for interesting mathematics. However a limitation of our reconstruction algorithm is that it does not apply to this model. 

As we mentioned in \Cref{sec:preliminaries}, the assumption that the absence of an interaction suggests that the beads are more than $d_c$ units apart in this and other models corresponds to the idealized setting when all contacts are sampled in an Hi-C experiment. Most of our identifiability results remain the same without this assumption. The unit ball model is never finitely identifiable by~\Cref{lemma:uniqueness_threshold_model} and the same remains true when the assumption about non-contacts is dropped. The main uniqueness result for the model with some distance equalities and inequalities~\Cref{prop:finite} does not specify the exact constraints. A consequence of this result is that if we have unique identifiability with the non-contact constraints, then we might get finite identifiability when dropping these constraints. The effect of an imperfect Hi-C matrix for the penny/marble graph model and the $\varepsilon$-interval radii model is to replace every strict inequality for a non-edge with a non-strict inequality. Again, this can only increase the number of solutions, and hence identifiability could be replaced by finitely or infinitely many solutions. In the penny/marble graph model it is also possible for a graph with no solutions to gain solutions, or for an identifiable system to gain infinitely-many solutions. Interestingly, neither of these latter possibilities can occur for the $\varepsilon$-interval radii penny graph model. This is due to the genericity of the radii in conjunction with results of \citet{stickydisks}.

This paper focused on single-cell 3D genome reconstruction in the haploid setting. A future research direction is to conduct a similar study in the diploid setting. We expect the diploid setting, especially the problem of obtaining identifiability results under reasonable constraints, to be considerably more challenging. 

\section*{Acknowledgments}
We thank Caroline Uhler for suggesting to study single cell 3D genome reconstruction and Annachiara Korchmaros for helping to improve the paragraph on Hi-C experiments.

\addcontentsline{toc}{section}{Declarations}
\section*{Declarations}
S.\,D.\ was supported by the Heilbronn Institute for Mathematical Research.
G.\,G.\ was partially supported by the Austrian Science Fund (FWF): 10.55776/P31888.
K.\,K.\ was partially supported by the Academy of Finland grant number 323416.
F.\,M.\ was partially supported by the FWO grants G0F5921N and G023721N, 
the KU Leuven iBOF/23/064 grant, and the UiT Aurora MASCOT project.
A.\,N.\ was partially supported by EPSRC grant EP/X036723/1.

The code used to run the experiments in~\Cref{sec:experiments} is available on GitHub:\\
\url{https://github.com/kaiekubjas/haploid-single-cell-3D-genome-reconstruction}

The real data analysed in~\Cref{sec:experiments} is associated to the article by~\citet{stevens20173d} and it is accessible at NCBI GEO database of \citet{edgar2002gene}, accession GSE80280.

\end{document}